\newcites{appendix}{Additional References in Appendix}
\newcommand\indep{\protect\mathpalette{\protect\independenT}{\perp}}
\def\independenT#1#2{\mathrel{\rlap{$#1#2$}\mkern2mu{#1#2}}}
\newcommand{\dsep}{\indep_{\mkern-9.5mu d}}
\newcommand*\circled[1]{\tikz[baseline=(char.base)]{
            \node[shape=circle,draw,inner sep=1pt] (char) {\scriptsize #1};}}
\newcommand*\bigcdot{\mathpalette\bigcdot@{.5}}
\newcommand*\bigcdot@[2]{\mathbin{\vcenter{\hbox{\scalebox{#2}{$\m@th#1\bullet$}}}}}
\newcommand{\scriptveryshortarrow}[1][3pt]{{%
    \hbox{\rule[\scriptratio\dimexpr\fontdimen22\textfont2-.2pt\relax]
               {\scriptratio\dimexpr#1\relax}{\scriptratio\dimexpr.4pt\relax}}%
   \mkern-4mu\hbox{\let\f@size\sf@size\usefont{U}{lasy}{m}{n}\symbol{41}}}}
\newdimen\arrowsize 
\newtheorem{proposition}{Proposition}
\theoremstyle{definition}
\newtheorem{definition}{Definition}
\newtheorem{assumptions}{Assumptions}
\newtheorem{example}{Example}
\titlespacing{\subsection}{0pt}{-.1\parskip}{-.1\parskip}
\titlespacing{\subsubsection}{0pt}{-.1\parskip}{-.1\parskip}
\DeclarePairedDelimiterX\braket[2]{\langle}{\rangle}{#1 \delimsize\vert #2}
\let\emptyset\varnothing
\newcommand{\X}{\mathbf{X}}
\newcommand{\Z}{\mathbf{Z}}
\newcommand{\D}{\mathbf{D}}
\newcommand{\R}{\mathbf{R}}
\newcommand{\Sb}{\mathbf{S}}
\newcommand{\C}{\mathbf{C}}
\newcommand{\updated}[1]{{\color{black}#1}} %
\newcolumntype{L}[1]{>{\raggedright\arraybackslash}p{#1}}
\newcolumntype{C}[1]{>{\centering\arraybackslash}p{#1}}
\newcolumntype{R}[1]{>{\raggedleft\arraybackslash}p{#1}}
\title{Gene Regulatory Network Inference in the Presence of Dropouts: a Causal View}
\author{Haoyue Dai$^1\quad$Ignavier Ng$^1\quad$Gongxu Luo$^2\quad$Peter Spirtes$^1\quad$Petar Stojanov$^3\quad$Kun Zhang$^{1,2}$\\
$^1$Department of Philosophy, Carnegie Mellon University\\
$^2$Machine Learning Department, Mohamed bin Zayed University of Artificial Intelligence\\
$^3$Cancer Program, Eric and Wendy Schmidt Center, Broad Institute of MIT and Harvard}
\begin{document}
\doparttoc %
\faketableofcontents %
\maketitle

\vspace{-0.4em}
\begin{abstract}
\vspace{-0.3em}
  Gene regulatory network inference (GRNI) is a challenging problem, particularly owing to the presence of zeros in single-cell RNA sequencing data: some are biological zeros representing no gene expression, while some others are technical zeros arising from the sequencing procedure (aka dropouts), which may bias GRNI by distorting the joint distribution of the measured gene expressions. Existing approaches typically handle dropout error via imputation, which may introduce spurious relations as the true joint distribution is generally unidentifiable. To tackle this issue, we introduce a causal graphical model to characterize the dropout mechanism, namely, \textit{Causal Dropout Model}. We provide a simple yet effective theoretical result: interestingly, the conditional independence (CI) relations in the data with dropouts, after deleting the samples with zero values (regardless if technical or not) for the conditioned variables, are asymptotically identical to the CI relations in the original data without dropouts. This particular test-wise deletion procedure, in which we perform CI tests on the samples without zeros for the conditioned variables, can be seamlessly integrated with existing structure learning approaches including constraint-based and greedy score-based methods, thus giving rise to a principled framework for GRNI in the presence of dropouts. We further show that the causal dropout model can be validated from data, and many existing statistical models to handle dropouts fit into our model as specific parametric instances. Empirical evaluation on synthetic, curated, and real-world experimental transcriptomic data comprehensively demonstrate the efficacy of our method.\looseness=-1

\end{abstract}

\vspace{-0.4em}
\section{Introduction}
\label{sec:introduction}
\vspace{-0.3em}

Gene regulatory networks (GRNs) represent the causal relationships governing gene activities in cells~\citep{levine2005gene}, essential for understanding biological processes and diseases like cancer~\citep{ito2021paralog,huang2020synthetic,parrish2021discovery}. Typically, for $p$ genes, the GRN is a graph consisting of nodes $\mathbf{Z} = \{Z_i\}_{i=1}^{p}$ representing gene expressions, and directed edges representing cross-gene regulations. Traditional lab-based GRNI involves gene knockout experiments, but conducting all combinatorial interventions is challenging. In contrast, observational expression data is abundant by RNA-sequencing. In the past decade, single-cell RNA-sequencing (\textit{scRNA-seq}) has become prevalent, enabling comprehensive studies in cancer~\citet{neftel2019integrative,boiarsky2022single} and genomic atlases~\citet{regev2017human,tabula2018single,tabula2022tabula} at the individual cell level. Causal discovery techniques for GRNI have also been steadily developed to leverage these advances~\citep{wang2017permutation,zhang2017causal,belyaeva2021dci,zhang2021matching,zhang2022active}.\looseness=-1

\looseness=-1
Despite the advantage of scRNA-seq, a fundamental challenge known as the \textit{dropout} issue arises. scRNA-seq data is known to exhibit an abundance of zeros. While some of these zeros correspond to genuine biological absence of gene expression~\citep{zappia2017splatter,johnson2002molecular}, some others are technical zeros arising from the sequencing procedure, commonly referred to as \textit{dropouts}~\citep{jiang2022statistics,ding2020systematic}. Various factors are commonly acknowledged to contribute to the occurrence of dropouts, including low RNA capture efficiency~\citep{silverman2020naught,jiang2022statistics,kim2020demystifying,hicks2018missing}, intermittent degradation of mRNA molecules~\citep{pierson2015zifa,kharchenko2014bayesian}, and PCR amplification biases~\citep{fu2018elimination, tung2017batch}. The dropout issue has been shown to introduce biases and pose a threat to various downstream tasks, including gene regulatory network inference~\citep{jiang2022statistics,van2018recovering}.

Dealing with dropouts in scRNA-seq data has been approached through two main strategies. One approach involves using probabilistic models such as zero-inflated models~\citep{pierson2015zifa,kharchenko2014bayesian,saeed2020anchored,yu2020directed,min2005random,li2018accurate,tang2020baynorm} or hurdle models~\citep{finak2015mast,qiao2023poisson} to characterize the distribution of gene expressions with dropouts. However, these methods may have limited flexibility due to the restrictive parametric assumptions~\citep{svensson2018exponential,kim2020demystifying}. Another approach is imputation~\citep{van2018recovering, li2018accurate, huang2018saver,amodio2019exploring,lopez2018deep}, where all zeros are treated as missing values and imputed to estimate the underlying distribution of genes without dropouts. However, imputation methods often lack a theoretical guarantee due to the inherent unidentifiability of the underlying distribution. Empirical studies also demonstrate mixed or no improvement when using imputation on various downstream tasks~\citep{hou2020systematic,jiang2022statistics}. Overall, despite various attempts, there is currently still not a principled and systematic approach to effectively address the dropout issue in scRNA-seq data.    %

While GRNs represent the \textit{causal} regulations among genes in the expression process, can we also extend this understanding to the \textit{causal} mechanisms of dropouts in the sequencing process? With this motivation, we abstract the common understanding of dropout mechanisms by proposing a causal graphical model, called \emph{causal dropout model} (\cref{subsec:causal_dropout_model}). We then recognize that the observed zeros in scRNA-seq data resulting from dropouts are \emph{non-ignorable}, implying that the distribution of the original data is irrecoverable without further assumptions (\cref{subsec:lack_theoretical_guarantee_for_imputation}). However, surprisingly, we show theoretically that, given such qualitative understanding of dropout mechanism, we could simply \emph{ignore} the data points in which the conditioned variables have zero values, leading to consistent estimation of conditional independence (CI) relations with those in the original data (\cref{subsec:preservation_of_conditional_independencies}). This insight then readily bridges the gap between dropout-tainted measurements and GRNI with an asymptotic correctness guarantee (\cref{sec:approach}). We further provide a systematic way to verify such dropout mechanisms from observations (\cref{sec:testability}).\looseness=-1

\looseness=-1
\textbf{Contributions.} \ \
\textbf{1)} Theoretically, the proposed \textit{causal dropout model} is, to the best of our knowledge, the first theoretical treatment of the dropout issue in a fully non-parametric setting. Most existing parametric models fit into our framework as specific instances. \textbf{2)} Empirically, extensive experimental results on synthetic, curated, and real-world experimental transcriptomic data with both traditional causal discovery algorithms and GRNI-specific algorithms demonstrate the efficacy of our method.

\section{Causal Model for Dropouts and Motivation of our Approach}
\label{sec:background}

In this section, we develop a principled framework to model the dropout mechanisms in scRNA-seq data, and describe the motivation of our proposed GNRI approach that will be formally introduced in \cref{sec:approach}. We first 
 introduce a causal graphical model to characterize the dropout mechanism in \cref{subsec:causal_dropout_model}, and demonstrate in \cref{subsec:existing_models_as_causal_dropout_graph} how existing statistical models commonly used to handle dropouts fit into our framework as specific parametric instances. We then discuss in \cref{subsec:lack_theoretical_guarantee_for_imputation} the potential limitations of imputation methods, demonstrating that the underlying distribution without dropouts is generally unidentifiable. Lastly, we demonstrate in \cref{subsec:preservation_of_conditional_independencies} that, despite the unidentifiability of the distribution, the entailed conditional (in)dependencies can be accurately estimated, which serves as the key motivation of our proposed GNRI approach in \cref{sec:approach} (grounded on the causal graphical model presented in \cref{subsec:causal_dropout_model}).\looseness=-1

\subsection{Warmup: Causal Graphical Models and Causal Discovery}\label{subsec:causal_graphical_models_and_causal_discovery}

Causal graphical models use directed acyclic graphs (DAGs) to represent causal relationships (directed edges) between random variables (vertices)~\citep{pearl2000models,spirtes2000causation}. Causal discovery aims to identify the causal graph from observational data. In this paper, by ``observational'' we assume the data samples are drawn i.i.d., without interventions or heterogeneity. In the context of GRNI, we focus on cells produced in a same scRNA-seq run without different perturbations~\citep{dixit2016perturb}.\looseness=-1

A typical kind of causal discovery methods is called \textit{constraint-based} methods, e.g., the PC algorithm~\citep{spirtes2000causation}. These methods associate the conditional (in)dependencies ($X\indep Y | \Sb$) in data with the (un)connectedness in graph, characterized by the d-separation patterns ($X\dsep Y | \Sb$), and accordingly recover the underlying causal graph structure, up to a set of equivalent graphs known as Markov equivalence classes (MECs), represented by complete partial DAGs (CPDAGs). Note that as CI tests do not require any specific assumptions about the underlying distributions of the variables~\citep{rosenbaum2002overt, zhang2012kernel}, constraint-based methods can in theory work nonparametrically for arbitrary distributions consistent with the graph.\looseness=-1

\subsection{Causal Dropout Model: A Causal Graphical Model for Dropouts}\label{subsec:causal_dropout_model}
Let $G$ be a DAG with vertices partitioned into four sets, $\Z \cup \X \cup \D \cup \R$. Denote by $\Z = \{Z_i\}_{i=1}^p$ the true underlying expression variables of the $p$ genes. The subgraph of $G$ on $\Z$ is the gene regulatory network (GRN) that we aim to recover from data. However, we do not have full access to true expressions $\Z$, but only the sequenced observations $\X = \{X_i\}_{i=1}^p$ contaminated by dropout error. We use boolean variables $\D = \{D_i\}_{i=1}^p$ to indicate for each gene whether technical dropout happens in sequenced individual cells: $D_i=1$ indicates dropout and $0$ otherwise. Hence $\X$ are generated by:

\begin{equation} \label{Eq:Xi_from_Zi_and_Di}
X_i = (1-D_i)*Z_i.
\end{equation}

\begin{wrapfigure}[15]{r}{0.19\textwidth}
\vspace{-0.5em}
\centering
\begin{tikzpicture}[->,>=stealth,shorten >=1pt,auto,node distance=1.5cm,
semithick,
square/.style={regular polygon,regular polygon sides=4}
]
\node[state, fill=gray!50, inner sep=0.1pt, minimum size=0.8cm] at (0, 1.5) (Z) {$Z_i$};
\node[state, fill=gray!50, inner sep=0.1pt, minimum size=0.8cm] at (1.5, 1.5) (D) {$D_i$};
\node[state, fill=none, inner sep=0.1pt, minimum size=0.8cm] at (1.5, 0) (X) {$X_i$};
\node[state, fill=none, inner sep=0.1pt, minimum size=1cm, square] at (0, 0) (R) {$R_i$};

\path (Z) edge node {} (D)
(Z) edge node {} (R)
(D) edge node {} (R)
(Z) edge node {} (X)
(D) edge node {} (X);
\end{tikzpicture}
\caption{Causal graph for dropouts. Gray nodes are underlying partially observed variables and white nodes are observed ones.\looseness=-1}
\label{fig:causal_graphical_dropout}
\end{wrapfigure}
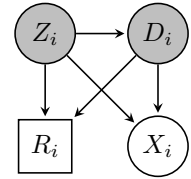 Similar to $\Z$, we also do not have full access to $\D$ in practice (shown as the gray color in~\cref{fig:causal_graphical_dropout}): whether a zero is biological or technical is still unknown to us. To help address such issue, we further introduce zero observational indicators $\R = \{R_i\}_{i=1}^p$ defined as $R_i = D_i \ \textsc{OR}  \ \mathbbm{1}(Z_i = 0)$, i.e., zeros are either technical or biological. By definition in~\cref{Eq:Xi_from_Zi_and_Di}, $R_i$ can be fully read from $X_i$ by $R_i = \mathbbm{1}(X_i = 0)$. For any subset $\Sb \subset [p] \coloneqq \{1,\dots,p\}$, we denote the corresponding random vector by e.g., $\X_\Sb = \{X_i: i\in\Sb\}$. Specifically, for boolean indicator variables, we use e.g., $\R_\Sb = \mathbf{1}$ to denote $\textsc{OR}_{i\in\Sb}R_i = 1$, and $\R_\Sb = \mathbf{0}$ for $\textsc{AND}_{i\in\Sb}R_i = 0$. We call such a proposed causal graphical model for modelling dropouts a \emph{Causal Dropout Model} (CDM) throughout this paper. 

Note that in the causal graph depicted in~\cref{fig:causal_graphical_dropout}, the node $R_i$ is represented by a square, while the others are represented by circles. This distinction is made because $R_i$ serves as an auxiliary indicator instead of a variable with an atomic physical interpretation. The directed edges into $X_i$ and $R_i$ from $Z_i$ and $D_i$ are deterministic, i.e., the causal model can be sufficiently represented by $Z_i$ and any one of $D_i$, $R_i$, or $X_i$, while the redundant inclusion of all four variables is merely for the convenience of derivation. The key component is the edge $Z_i\rightarrow D_i$, representing the \textit{self-masking} dropout mechanism, i.e., whether a gene is undetected in a particular cell is determined by the gene's true expression level in the cell. This aligns with the several commonly acknowledged reasons of dropouts~\citep{silverman2020naught,jiang2022statistics,kim2020demystifying,hicks2018missing}. But note that such self-masking assumption can also be relaxed, allowing for edges $Z_j\rightarrow R_i$, as discussed in~\cref{subsec:validate_structural_assumptions}.\looseness=-1

\subsection{Existing Models as Specific Instances of the Causal Dropout Model}\label{subsec:existing_models_as_causal_dropout_graph}
In addition to the common biological understandings on the dropout mechanisms, many existing statistical models to handle dropouts also fit into our introduced framework as specific parametric instances. Below, We give a few illustrative examples (detailed analysis is available in~\cref{subsec:app_existing_models_not_causal_dropout}):\looseness=-1

\begin{example}[Dropout with the fixed rates]\label{example:dropout_fixed_CAR} $D_i \sim \operatorname{Bernoulli}(p_i)$, i.e., the gene $Z_i$ gets dropped out with a fixed probability $p_i$ across all individual cells. The representative models in this category are the zero-inflated models~\citep{pierson2015zifa,kharchenko2014bayesian,saeed2020anchored,yu2020directed,min2005random} and the hurdle models~\citep{finak2015mast,qiao2023poisson}. Biologically, this can be explained by the random sampling of transcripts during library preparation, regardless of the true expressions of genes. Graphically, in this case the edge $Z_i \rightarrow D_i$ is absent.\looseness=-1
\end{example}

\begin{example}[Truncating low expressions to zero]\label{example:dropout_truncated} $D_i = \mathbbm{1}(Z_i < c_i)$, i.e., the gene $Z_i$ gets dropped out in cells whenever its expression is lower than a threshold $c_i$. A typical kind of such truncation models is, for simple statistical properties, the truncated Gaussian copula~\citep{fan2017high,yoon2020sparse,chung2022sparse}. Biologically, such truncation thresholds $c_i$ (quantile masking~\citep{jiang2022statistics}) can be explained by limited sequencing depths. Graphically, in this case the edge $Z_i\rightarrow D_i$ exists.\looseness=-1
\end{example}

\begin{example}[Dropout probabilistically determined by expressions]\label{example:dropout_prob_func} $D_i \sim \operatorname{Bernoulli}(F_i(\beta_i Z_i + \alpha_i))$, where $\beta_i < 0$ and $F_i$ is monotonically increasing (typically as CDF of probit or logistic~\citep{cragg1971some,liu2004robit,miao2016identifiability}), i.e., a gene $Z_i$ may be detected (or not) in every cell, while the higher it is expressed in a cell, the less likely it gets dropped out. Biologically, this can be explained by inefficient amplification. Graphically, the edge $Z_i\rightarrow D_i$ also exists, and is non-deterministic.\looseness=-1
\end{example}

As shown above, many existing statistical models to handle dropouts can fit into our proposed causal dropout graph as parametric instances. Furthermore, it is important to note that our causal model itself, as well as the corresponding causal discovery method proposed in~\cref{sec:approach}, are nonparametric. Hence, our model is more flexible and robust in practice. As is verified in~\cref{subsec:simulated_data}, even on data simulated according to these parametric models, our method outperforms the methods specifically designed for them.

\subsection{Potential Theoretical Issues with the Imputation Methods}\label{subsec:lack_theoretical_guarantee_for_imputation}
While many parametric models exist (\cref{subsec:existing_models_as_causal_dropout_graph}), imputation methods have become more prevalent to handle dropouts in practice~\citep{hou2020systematic, van2018recovering, li2018accurate, huang2018saver,lopez2018deep}. Imputation methods treat the excessive zeros as ``missing holes'' in the expression matrix and aim to fill them using nearby non-zero expressions. However, a fundamental question arises: is it theoretically possible to fill in these ``missing holes''? In the context of the causal dropout model in \cref{Eq:Xi_from_Zi_and_Di}, we demonstrate that, in general, the answer is negative, as the underlying true distribution $p(\Z)$ is unidentifiable from observations $p(\X)$ (detailed analysis in~\cref{subsec:more_discussion_on_imputation_and_missingdata}):\looseness=-1

Firstly, according to the missing data literature, the underlying joint distribution $p(\Z)$ is \textit{irrecoverable} due to the \textit{self-masking} dropout mechanism~\citep{enders2022applied,mohan2013graphical,shpitser2016consistent}: %
\begin{example}\label{example:logistic_unidentifiable}
Consider the following two models consistent with~\cref{example:dropout_prob_func}~\citep{miao2016identifiability}:
\begin{itemize}[noitemsep,topsep=-3pt]
\item[1.] $Z_i \sim \operatorname{Exp}(2)$, $D_i\sim \operatorname{Bernoulli}(\updated{\operatorname{sigmoid}}(\log 2 - Z_i))$.
\item[2.] $Z_i \sim \operatorname{Exp}(1)$, $D_i\sim \operatorname{Bernoulli}(\updated{\operatorname{sigmoid}}(Z_i - \log 2))$.
\end{itemize}
With even the same parametric logistic dropout mechanism and different
distributions of $Z_i$, the resulting observations $X_i$ share exactly the same distribution, as $2e^{-2z}\cdot \frac{1}{1 + e^{\log 2 - z}} = e^{-z}\cdot \frac{1}{1 + e^{z - \log 2}}$.
\end{example}

Secondly, different from missing data, the ``missing'' entries here cannot be precisely located, i.e., $\D$ variables are latent. Therefore, even under the simpler fixed-dropout-rate scheme as in~\cref{example:dropout_fixed_CAR} (dropout happens \textit{completely-at-random} (CAR)~\citep{little2019statistical}), $p(\Z)$ remains irrecoverable:\looseness=-1

\begin{example}\label{example:fixedCAR_unidentifiable}
Consider the following two models consistent with~\cref{example:dropout_fixed_CAR}:
\begin{itemize}[noitemsep,topsep=-3pt]
\item[1.] $Z_i \sim \operatorname{Bernoulli}(0.5)$, $D_i\sim \operatorname{Bernoulli}(0.6)$.
\item[2.] $Z_i \sim \operatorname{Bernoulli}(0.8)$, $D_i\sim \operatorname{Bernoulli}(0.75)$.
\end{itemize}
With the CAR dropout mechanism and different $p(Z_i)$, the observed $X_i$ follow a same $\operatorname{Bernoulli}(0.2)$.\looseness=-1
\end{example}

\subsection{Motivation: Preservation of Conditional (In)dependencies in Non-zero Parts}\label{subsec:preservation_of_conditional_independencies}

We have demonstrated the general impossibility of recovering the true underlying distribution $p(\Z)$ from dropout-contaminated observations $\X$. However, for the purpose of GRNI, a structure learning task, it is not necessary to accurately recover $p(\Z)$. Alternatively, the graph structure among $\Z$ can be sufficiently recovered by the CI relations entailed by $p(\Z)$. Hence, the key question becomes whether the CIs of $p(\Z)$ can survive the inestimable distortion introduced by dropouts and leave trace in $p(\X)$.\looseness=-1

\phantom{}\begin{wrapfigure}[25]{r}{0.65\textwidth}
\centering
\vspace{-0.8em}
\includegraphics[width=0.65\textwidth]{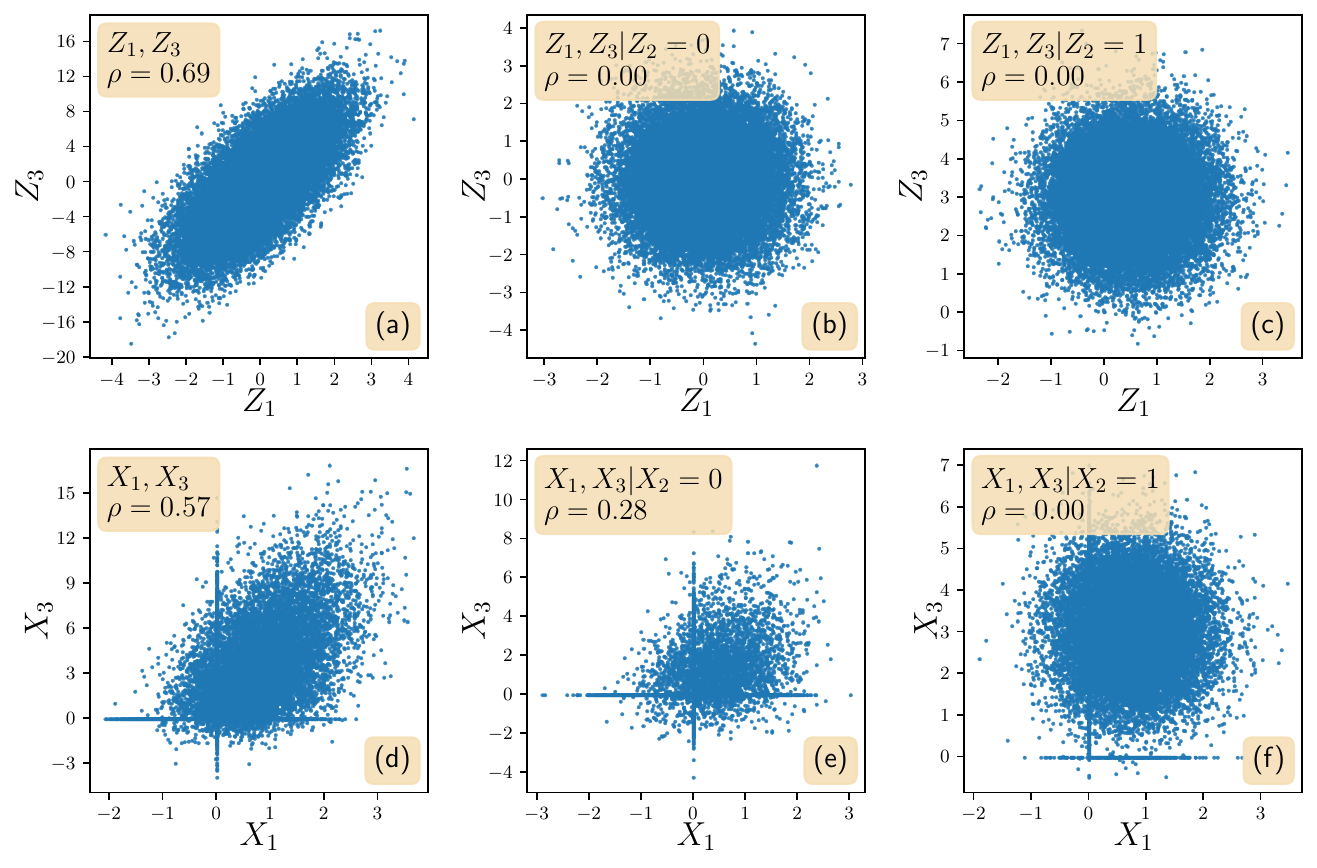}
\includegraphics[width=0.65\textwidth]{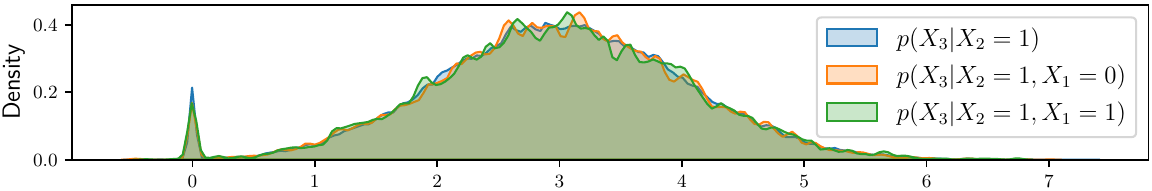}
\vspace{-1.5em}
\caption{Above: scatterplots of $Z_1;Z_3$ and $X_1;X_3$ under different conditions. Below: density plot of $X_3$ under different conditions (vertical slices of scatters in (f)) to show $X_3 \indep X_1 | X_2=1$. Kernel width is set to $=0.02$ to prevent from oversmoothing.\looseness=-1}
\label{fig:X123_chain}
\end{wrapfigure}

\vspace{-2em}
\begin{example}\label{example:X123_chain}
Consider the linear structural equation model (SEM) consistent with a chain structure $Z_1\rightarrow Z_2 \rightarrow Z_3$:

\begin{align*}
    &Z_1 = E_1 \\
    &Z_2 = Z_1 + E_2 \\
    &Z_3 = 3Z_2 + E_3 \\
    &E_1,E_2,E_3 \sim \mathcal{N}(0,1)
\end{align*}

and the logistic dropout scheme:
$$D_i\sim \operatorname{Bernoulli}(\operatorname{logistic}(-2Z_i))$$for $i=1,2,3$. The scatterplots (a) and (d) in \cref{fig:X123_chain} demonstrate $Z_1\not\indep Z_3$ and $X_1\not\indep X_3$ marginally. Now condition on the middle variable. The d-separation $Z_1\dsep Z_3|Z_2$ implies that $Z_1\indep Z_3 |Z_2$, as supported by (b) and (c) with $Z_2=0,1$ respectively. However, in contrast, one observes $X_1\not\indep X_3 |X_2$, as supported by the scatters in (e): with the concentration of points along axes (representing excessive zeros) and a distorted, non-elliptical distribution away from the axes, $X_1$ and $X_3$ are still dependent when $X_2=0$, because the zero entries of $X_2$ mix different values of $Z_2$. Nonetheless, this observation leads to an important insight: while a zero entry of $X_2$ may be noisy (i.e., may be technical), a non-zero entry of $X_2$ must be accurate, i.e., biological. Although the non-zero parts undergo a distribution distortion (which is why imputation fails), this distortion does not impact the estimation of  conditional independence estimation, which only cares about each specific value. That is, conditioning on $X_2$ with any non-zero value $x_2\neq 0$ is equivalent to conditioning on $Z_2$ with the same $x_2$ value, thereby eliminating the spurious dependence between $X_1$ and $X_3$. (f) shows the case where $X_2=1$. Note that even though there are still dropout points of $X_1$ and $X_3$ on axes in (f), these respective dropouts are also independent given $Z_2$. The identical density plots (vertical slices of (f)) in~\cref{fig:X123_chain} illustrates the conditional independence $X_1\indep X_3 | X_2=1$.\looseness=-1
\end{example}

\section{Causal Discovery in the Presence of Dropout}
\label{sec:approach}
Building upon the intuition of ``conditioning on non-zero entries'' in~\cref{example:X123_chain}, we now formally introduce our deletion-based CI test and the corresponding causal discovery methods.

\begin{assumptions}\label{assumptions:all}
We first list all the assumptions that may be used throughout this paper:
\begin{itemize}[noitemsep,topsep=-3pt]
\item[\textit{({A}1.)}]\label{A1} \updated{Common assumptions including causal sufficiency, Markov condition, and faithfulness over $\Z \cup \X \cup \R$, acyclicity of $G$, and a pointwise consistent CI testing method.}\looseness=-1
\item[\textit{({A}2.)}]\label{A2} $\forall i, j \in [p]$, there is no edge $D_i \rightarrow Z_j$, i.e., dropout, as a technical artifact in the sequencing step, does not affect the genes' expressing process.
\item[\textit{({A}3.)}]\label{A3} $\forall i\in [p]$, $D_i$ has at most one parent, $Z_i$, i.e., a gene's dropout can only be directly affected by the expression value of itself, if not completely at random.
\item[\textit{({A}4.)}]\label{A4} For any variables $A,B \in \X\cup \Z, \C \subset \X\cup \Z, \Sb \subset [p]$, $A \not\indep B | \C, \R_\Sb \Rightarrow A \not\indep B | \C, \R_\Sb = \mathbf{0}$, i.e., dependencies conditioned on $\R$ are preserved in corresponding non-zero values.
\item[\textit{({A}5.)}]\label{A5} For any conditioning set $\Sb \subset [p]$ involved in the required CI tests in the algorithm, $p(\R_\Sb = \mathbf{0})>0$, i.e., asymptotically, the remaining sample size is large enough for test power.\looseness=-1
\end{itemize}
\end{assumptions}

We will elaborate more on these assumptions' plausibility in~\cref{sec:testability}. Generally speaking, these assumptions are rather mild and are either theoretically testable or empirically supported by real data.\looseness=-1

\subsection{Conditional Independence Estimation with Zeros Deleted}
\cref{example:X123_chain} illustrates how the conditional (in)dependence relations may differ between true underlying expressions $\Z$ and dropout-contaminated $\X$. Now let us first investigate the scenario where the dropout issue is disregarded, and CI tests are directly conducted on the complete data points of $\X$:

\begin{restatable}[Bias to a denser graph]{proposition}{PROPBIASTOADENSERGRAPH}\label{prop:bias_dense}
Assume \hyperref[A1]{\textit{(A1)}}, \hyperref[A2]{\textit{(A2)}}. $\forall i\in[p],j\in[p],\Sb\subset[p]$, we have $Z_i \not\indep Z_j | \Z_\Sb \Rightarrow X_i \not\indep X_j | \X_\Sb$. The reverse direction does not hold in general, and holds only when $\Sb = \emptyset$.\looseness=-1
\end{restatable}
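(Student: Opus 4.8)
The plan is to reduce every statement to a d-separation fact in $G$ and then invoke assumption \hyperref[A1]{\textit{(A1)}}. Throughout I may assume $i\neq j$ and $i,j\notin\Sb$, since otherwise $Z_i$ is $\sigma(\Z_\Sb)$-measurable (and $X_i$ is $\sigma(\X_\Sb)$-measurable), making both the stated implication and its converse vacuous. A preliminary observation is where \hyperref[A2]{\textit{(A2)}} does its work: since there is no edge from $\X\cup\D\cup\R$ into $\Z$, the set $\Z$ is ancestral in $G$, so d-separation among subsets of $\Z$ is the same whether computed in $G$ or in the GRN subgraph $G[\Z]$, and any path d-connecting two $\Z$-vertices given a subset of $\Z$ may be taken inside $G[\Z]$ (an interior vertex off $\Z$ would have to be a collider, since $X_k,R_k$ are childless and $D_k$ only reaches the childless $X_k,R_k$, which forces the path to revisit some $Z_k$, impossible for a simple path).

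\textbf{Forward direction.} Suppose $Z_i\not\indep Z_j\mid\Z_\Sb$. By faithfulness there is a path $\pi$ in $G[\Z]$ from $Z_i$ to $Z_j$ that d-connects given $\Z_\Sb$. Splice on the edges of \cref{Eq:Xi_from_Zi_and_Di} to form $\pi':\; X_i\leftarrow Z_i\,\cdots\,Z_j\to X_j$ in $G$, which is a genuine simple path since $X_i,X_j\notin\pi\subseteq G[\Z]$. I claim $\pi'$ d-connects $X_i$ and $X_j$ given $\X_\Sb$: the endpoints $X_i,X_j$ lie outside $\X_\Sb$; the now-interior vertices $Z_i,Z_j$ each have a path-edge pointing out of them (to $X_i$, resp. $X_j$), hence are non-colliders, and are not in $\X_\Sb$; every remaining vertex of $\pi'$ keeps exactly the local configuration it had on $\pi$. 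A non-collider $Z_k$ of $\pi$ satisfies $k\notin\Sb$, hence $Z_k\notin\X_\Sb$; a collider $Z_k$ of $\pi$ has itself or a descendant $Z_m$ with $m\in\Sb$, and via the edge $Z_m\to X_m$ that collider then also has the descendant $X_m\in\X_\Sb$, so it remains open. Therefore $X_i\not\dsep X_j\mid\X_\Sb$, and faithfulness gives $X_i\not\indep X_j\mid\X_\Sb$.

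\textbf{Converse for $\Sb=\emptyset$} (note this is \emph{not} the contrapositive of the forward part, so it needs its own argument). Suppose $Z_i\indep Z_j$; by faithfulness $Z_i\dsep Z_j$ in $G$. If some path $\rho$ d-connected $X_i$ and $X_j$ given the empty set, $\rho$ would contain no collider; tracing $\rho$ out of the sink $X_i$ it must reach $Z_i$ (directly, or via $D_i$, whose only non-childless neighbor is $Z_i$), and symmetrically reach $Z_j$ before $X_j$, and the sub-path of $\rho$ from $Z_i$ to $Z_j$ must lie in $G[\Z]$ (any interior $X_k,R_k$ would be a collider; an interior $D_k$ would need a second non-childless neighbor) and still be collider-free, hence d-connect $Z_i,Z_j$ given $\emptyset$ — a contradiction. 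So $X_i\dsep X_j$, and the Markov property gives $X_i\indep X_j$. That the converse genuinely fails once $\Sb\neq\emptyset$ is witnessed by the chain SEM of \cref{example:X123_chain} with $\Sb=\{2\}$, where $Z_1\indep Z_3\mid Z_2$ yet $X_1\not\indep X_3\mid X_2$ (and for a general nonempty $\Sb$, place the middle gene at an index $k\in\Sb$ and the remaining indices of $\Sb$ on isolated genes).

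The step I expect to be the crux is the collider bookkeeping in the forward direction: verifying that no new collider is created at $Z_i,Z_j$, and that a collider $Z_k$ of $\pi$ whose activating descendant lay in $\Z_\Sb$ still has an activating descendant — namely the corresponding $X_m$ — in $\X_\Sb$; this, together with the ancestrality argument that keeps the path inside $G[\Z]$, is exactly where \hyperref[A2]{\textit{(A2)}} enters. The deterministic edges into $X_i$ are harmless here because we only ever infer dependence from d-connection, which is precisely the faithfulness half of \hyperref[A1]{\textit{(A1)}}.
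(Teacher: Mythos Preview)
Your proof is essentially the same as the paper's: both splice $X_i\leftarrow Z_i$ and $Z_j\to X_j$ onto an open $\Z$-path, verify that non-colliders stay out of $\X_\Sb$ while colliders pick up a descendant $X_m\in\X_\Sb$, and then invoke faithfulness; for the converse the paper simply notes that a d-connecting path between $X_i,X_j$ given $\X_\Sb$ must pass through $\Z$, and any non-collider $Z_k$ on it with $k\in\Sb$ cannot be blocked by $\X_\Sb$, so such a path survives only if it has no such non-collider --- exactly your $\Sb=\emptyset$ case and the \cref{example:X123_chain} counterexample. Your preliminary ancestrality argument and the explicit collider bookkeeping make these steps more explicit than the paper does.

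One small slip: in the forward direction you write ``By faithfulness there is a path $\pi$\ldots that d-connects'', but going from $Z_i\not\indep Z_j\mid\Z_\Sb$ to $Z_i\not\dsep Z_j\mid\Z_\Sb$ is the contrapositive of the \emph{Markov} condition, not faithfulness (faithfulness is what you correctly use at the end, to pass from $X_i\not\dsep X_j\mid\X_\Sb$ to dependence). The paper labels these two invocations the right way around; swap the label and your argument is clean.
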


By~\cref{prop:bias_dense}, if one directly applies existing structure learning algorithms on observations $\X$ without dropout correction, the d-separation patterns entailed in the true GRN are generally undetectable. Consequently, the inferred graph tends to be much denser than the true one, posing a significant bias, given that gene regulatory relations are usually sparse~\citep{dixit2016perturb,levine2005gene}.\looseness=-1  %

To address this bias, we introduce the following clean and essential finding: ignoring the data points with zero values for the conditioned variables does not change the CI relations. While the remaining samples exhibit a different distribution from the true underlying distribution (as demonstrated in~\cref{subsec:lack_theoretical_guarantee_for_imputation}), they maintain identical conditional independence relations. Formally, we state the following theorem:

\begin{restatable}[Correct CI estimations]{theorem}{THMCORRECTCIESTIMATIONS}\label{thm:correct}
Assume \hyperref[A1]{\textit{(A1)}}, \hyperref[A2]{\textit{(A2)}}, \hyperref[A3]{\textit{(A3)}}, and \hyperref[A4]{\textit{(A4)}}. For every $i\in[p],j\in[p],\Sb\subset[p]$, we have $Z_i \indep Z_j | \Z_\Sb \Leftrightarrow X_i \indep X_j | \Z_\Sb, \R_\Sb = \mathbf{0}$.
\end{restatable}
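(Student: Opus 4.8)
The plan is to establish the theorem by relating the CI statements over $\Z$ to CI statements over $\X$ restricted to the event $\R_\Sb = \mathbf{0}$, passing through the indicator variables $\R$ as a bridge. The key observation is the one highlighted in \cref{example:X123_chain}: conditioning on $X_k = x_k$ for a \emph{nonzero} value $x_k$ is equivalent to conditioning on $Z_k = x_k$ together with $D_k = 0$ (hence $R_k = 0$), because by \cref{Eq:Xi_from_Zi_and_Di} the event $\{X_k \neq 0\}$ forces $D_k = 0$ and then $X_k = Z_k$. So on the sub-population $\{\R_\Sb = \mathbf{0}\}$ the variables $\X_\Sb$ and $\Z_\Sb$ coincide, and the statement $X_i \indep X_j \mid \Z_\Sb, \R_\Sb = \mathbf{0}$ can be rewritten as $X_i \indep X_j \mid \X_\Sb, \R_\Sb = \mathbf{0}$ and further, since $R_i = \mathbbm{1}(X_i=0)$ and $R_j = \mathbbm{1}(X_j=0)$, manipulated into a statement about $Z_i, Z_j$ conditioned on $\Z_\Sb$ and the full indicator vector $\R_\Sb, R_i, R_j$.

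Concretely, I would proceed in the following steps. \textbf{Step 1 (forward direction).} Assume $Z_i \indep Z_j \mid \Z_\Sb$. I want $X_i \indep X_j \mid \Z_\Sb, \R_\Sb = \mathbf{0}$. Because $X_i = (1-D_i)Z_i$ and similarly for $j$, it suffices to show that the joint law of $(D_i, Z_i, D_j, Z_j)$ factorizes appropriately given $\Z_\Sb$ on the event $\R_\Sb=\mathbf{0}$. Using \hyperref[A3]{\textit{(A3)}}, $D_i$ depends only on $Z_i$ (its unique possible parent), so in the graph $G$ over $\Z\cup\X\cup\R$ the node $D_i$ has $Z_i$ as its only parent, and $R_i$ is a deterministic sink of $\{Z_i,D_i\}$; likewise for $j$ and for each $k\in\Sb$. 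I would then run a d-separation argument: $Z_i \dsep Z_j \mid \Z_\Sb$ (from faithfulness applied to the hypothesis), and I want to conclude $X_i \dsep X_j \mid \Z_\Sb \cup \{R_k : k\in\Sb\}$. The crucial point is that conditioning additionally on the $R_k$'s (children of $Z_k\in\Z_\Sb$ and of $D_k$) does \emph{not} open any path: each $R_k$ is a collider on $Z_k \to R_k \leftarrow D_k$, but $Z_k$ is already in the conditioning set, and $D_k$ (under \hyperref[A2]{\textit{(A2)}}, no $D_k\to Z_j$ edges, and \hyperref[A3]{\textit{(A3)}}, no other parents) is a leaf except for its edges into $X_k$ and $R_k$, so no active path through $D_k$ reaches $Z_i$ or $Z_j$. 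Hence $X_i \dsep X_j \mid \Z_\Sb, \{R_k\}_{k\in\Sb}$, and by the Markov property this d-separation gives the CI, which in particular holds after further instantiating $\R_\Sb=\mathbf{0}$.

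\textbf{Step 2 (reverse direction).} Assume $X_i \indep X_j \mid \Z_\Sb, \R_\Sb = \mathbf{0}$, and suppose for contradiction $Z_i \not\indep Z_j \mid \Z_\Sb$. By faithfulness, $Z_i \not\dsep Z_j \mid \Z_\Sb$, so there is an active path $\pi$ between $Z_i$ and $Z_j$ given $\Z_\Sb$. Using \hyperref[A2]{\textit{(A2)}} and \hyperref[A3]{\textit{(A3)}}, this path lies entirely within the $\Z$-subgraph (the $D$'s and $R$'s and $X$'s are downstream leaves and cannot be interior to an active path between two $Z$ nodes given a subset of $\Z$). I would then lift $\pi$ to an active path between $Z_i$ and $Z_j$ given $\Z_\Sb \cup \{R_k\}_{k\in\Sb}$ — again the added $R_k$ are colliders whose $Z_k$-parents are already conditioned, so they do not block $\pi$ — obtaining $Z_i \not\dsep Z_j \mid \Z_\Sb, \{R_k\}$ and hence $Z_i \not\indep Z_j \mid \Z_\Sb, \{R_k\}$. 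Now on the event $\R_\Sb=\mathbf 0$, i.e.\ $\{R_k = 0\}_{k\in\Sb}$, we have $X_k = Z_k$ for each $k\in\Sb$, so $Z_i\not\indep Z_j \mid \Z_\Sb, \R_\Sb=\mathbf 0$ becomes $Z_i \not\indep Z_j \mid \X_\Sb, \R_\Sb = \mathbf 0$. Finally, I need to transfer this from $(Z_i,Z_j)$ to $(X_i,X_j)$: here is where \hyperref[A4]{\textit{(A4)}} enters. Since $R_i = \mathbbm{1}(X_i=0)$, $R_j=\mathbbm 1(X_j=0)$, the dependence $Z_i\not\indep Z_j\mid \Z_\Sb, \R_\Sb=\mathbf 0$, combined with a case split on the values of $R_i, R_j$ (on $\{R_i=0\}$ we have $X_i=Z_i$), shows $X_i \not\indep X_j \mid \Z_\Sb, \R_\Sb=\mathbf 0$ — intuitively the nonzero parts of $X_i,X_j$ already carry the dependence of $Z_i,Z_j$, and \hyperref[A4]{\textit{(A4)}} guarantees this survives after further imposing $\R_\Sb=\mathbf 0$ rather than merely conditioning on $\R_\Sb$. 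This contradicts the hypothesis, completing the reverse direction.

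\textbf{Anticipated main obstacle.} The routine part is the d-separation bookkeeping in Steps 1--2; the delicate part is the reverse direction's final transfer from a $Z_i$--$Z_j$ dependence to an $X_i$--$X_j$ dependence, because a dependence between $Z_i$ and $Z_j$ could in principle be ``hidden'' inside the zero parts of $X_i$ or $X_j$ and disappear once we restrict to $X_i, X_j \neq 0$. Assumption \hyperref[A4]{\textit{(A4)}} is precisely designed to rule this out — it is stated for dependencies conditioned on an arbitrary $\R_\Sb$ and asserts they persist after setting $\R_\Sb=\mathbf 0$ — but applying it cleanly requires first rephrasing the $Z$-dependence as a dependence conditioned on a larger $\R$-set including $R_i$ and $R_j$, then invoking \hyperref[A4]{\textit{(A4)}} to zero them out. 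I would spell out this reduction carefully (e.g.\ noting $X_i\indep X_j\mid \Z_\Sb,\R_\Sb=\mathbf 0$ is equivalent to $Z_i\indep Z_j\mid \Z_\Sb, \R_\Sb=\mathbf 0, R_i=0,R_j=0$, then argue the contrapositive through \hyperref[A4]{\textit{(A4)}}), since this is where the proof genuinely uses the dropout structure beyond pure graphical reasoning.
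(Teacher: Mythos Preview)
Your proposal is essentially correct and follows the same d-separation-plus-\hyperref[A4]{\textit{(A4)}} strategy as the paper. Two small corrections and one structural remark.

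First, in Step~2 you write ``By faithfulness, $Z_i \not\dsep Z_j \mid \Z_\Sb$''; this is the contrapositive of the Markov condition, not faithfulness. Faithfulness is what you need afterwards to pass from $\not\dsep$ back to $\not\indep$ once $\R_\Sb$ has been added.

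Second, the ``equivalence'' you propose at the end, $X_i \indep X_j \mid \Z_\Sb, \R_\Sb = \mathbf{0} \;\Leftrightarrow\; Z_i \indep Z_j \mid \Z_\Sb, \R_{\Sb\cup\{i,j\}} = \mathbf{0}$, is only the forward implication: since $R_i,R_j$ are measurable functions of $X_i,X_j$, independence of $X_i,X_j$ passes to the sub-event $\{R_i=R_j=0\}$ where $X=Z$, but the converse fails in general. Fortunately the $\Rightarrow$ direction is exactly what your contrapositive needs, so the argument survives.

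On the structural side, your reverse direction takes a slightly different route from the paper. The paper extends the active path $Z_i-\cdots-Z_j$ to $X_i\leftarrow Z_i-\cdots-Z_j\to X_j$ at the graph level (using \hyperref[A3]{\textit{(A3)}} to ensure the extension stays open), obtaining $X_i\not\dsep X_j\mid \Z_\Sb,\R_\Sb$ directly, and then applies faithfulness and \hyperref[A4]{\textit{(A4)}} once with $A=X_i$, $B=X_j$. You instead keep the endpoints at $Z_i,Z_j$, enlarge the conditioning set to $\R_{\Sb\cup\{i,j\}}$, apply faithfulness and \hyperref[A4]{\textit{(A4)}} there, and only then convert to $X_i,X_j$ via $X_k=Z_k$ on $\{R_k=0\}$. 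Both are valid; the paper's route is one step shorter because it never conditions on $R_i,R_j$, while yours is essentially the argument the paper reuses later for the implication~\circled{2} in \cref{thm:observed_are_correct}.
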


Note that in~\cref{thm:correct}, the right hand side expresses the conditional independence relations as $X_i \indep X_j | \Z_\Sb, \R_\Sb = \mathbf{0}$, rather than $Z_i \indep Z_j | \Z_\Sb, \R_\Sb = \mathbf{0}$. This is because we only delete the samples with zeros for conditioned variables, while the zeros in estimands $X_i$ and $X_j$ are retained, and the underlying $Z_i$ and $Z_j$ remain unobservable. Generally speaking, for any $Z_i$ explicitly involved in a CI estimation, samples with the corresponding zeros must be deleted, i.e., $R_i=0$ must be conditioned on. Alternatively, one may wonder what if the samples with zeros in all variables are deleted, not just the conditioning set. Interestingly, under assumption \hyperref[A3]{\textit{(A3)}}, both approaches are asymptotically equivalent, meaning that the CI relations can still be accurately recovered: $Z_i \indep Z_j | \Z_\Sb \Leftrightarrow Z_i \indep Z_j | \Z_\Sb, \R_{{i,j}\cup\Sb} = \mathbf{0}$, though empirically, the latter may delete more samples and potentially reduce the statistical power of the test. It is also worth noting that in a more general setting where \hyperref[A3]{\textit{(A3)}} is dropped, the two approaches are not equivalent, and the latter provides a tighter bound for estimating conditional independencies. Further details will be discussed in~\cref{thm:observed_are_correct}.\looseness=-1

\subsection{Causal Discovery Methods with Dropout Correction}\label{subsec:ges}
Now, we integrate the above consistent CI estimation into established causal discovery methods:\looseness=-1

\begin{definition}[The general procedure for causal discovery with dropout correction]\label{def:general_procedure}
Perform any consistent causal discovery algorithm (e.g., PC~\citep{spirtes2000causation}) based on CI relations estimated by test-wise deletion as in~\cref{thm:correct}. Infer $Z_i \dsep Z_j | \Z_\Sb$ if and only if $X_i \indep X_j | \Z_\Sb, \R_\Sb = \mathbf{0}$, and use these d-separation patterns to infer the graph structure among $\Z$.\looseness=-1
\end{definition}

\textbf{Constraint-based methods.} \ \ \ The test-wise deletion can be seamlessly incorporated into any existing constraint-based methods for structure learning, as shown in~\cref{def:general_procedure}. Assumption \hyperref[A5]{\textit{(A5)}} ensures that asymptotically, for each CI relation to be tested, the remaining sample size after test-wise deletion remains infinite, and the consistency of the method is still guaranteed.\looseness=-1

\textbf{Greedy score-based methods.} \ \ \ While the proposed GRNI on dropout data with integrated test-wise deletion in constraint-based methods is asymptotically consistent, empirical reliability may be limited due to \textit{order-dependency} and \textit{error propagation} of constraint-based methods~\citep{colombo2014order}. In contrast, score-based methods generally search over the graph space and is immune to order-dependency, and thus may provide more empirically accurate results on large scale GRNs.\looseness=-1

One typical score-based method is the Greedy Equivalence Search (GES~\citep{chickering2002optimal}) algorithm. GES uses a scoring function to assign a score to each directed acyclic graph (DAG) given data, and finds the optimal score by traversing over the space of CPDAGs. Consisting of a forward phase and a backward phase, GES iteratively performs edge additions and deletions in two phases respectively to optimize the total score until further improvement is not possible.

\updated{Score functions are typically assumed to exhibit three attributes: global consistency, local consistency, and decomposability. However, a closer examination on~\citet{chickering2002optimal}'s proof reveals that for GES's consistency, only a locally consistent score is needed. As another way of rendering faithfulness, local consistency enforces adding or deleting edges based on CI relations. This insight, as is also affirmed from the connection between BIC score and Fisher-Z test statistics (Lemma 5.1 of~\citep{nandy2018high}) and further echoed in~\citep{shen2022reframed}, suggests that GES can be completely reframed as a constraint-based method using CI tests without a defined score. Thus, our deletion-based CI test can be easily incorporated into GES, ensuring asymptotic consistency and order-independence.\looseness=-1}

\section{On Testability of the Causal Dropout Model}
\label{sec:testability}
In~\cref{sec:approach}, we propose a principled approach for GRNI on scRNA-seq data with dropouts, i.e., deleting samples with zeros for the conditioned variables in each CI test. The asymptotic correctness of this approach depends on assumptions listed in~\cref{assumptions:all}. However, one may wonder whether these assumptions on the dropout mechanisms hold in practice. Therefore, in this section, we delve deeper into these assumptions and show that they can be either theoretically or empirically verified from data.\looseness=-1

\subsection{Overview of the Assumptions}\label{subsec:assumptions_overview}
\begin{wrapfigure}[13]{r}{0.3\textwidth}
\centering
\vspace{-2.7em}
\includegraphics[width=0.3\textwidth]{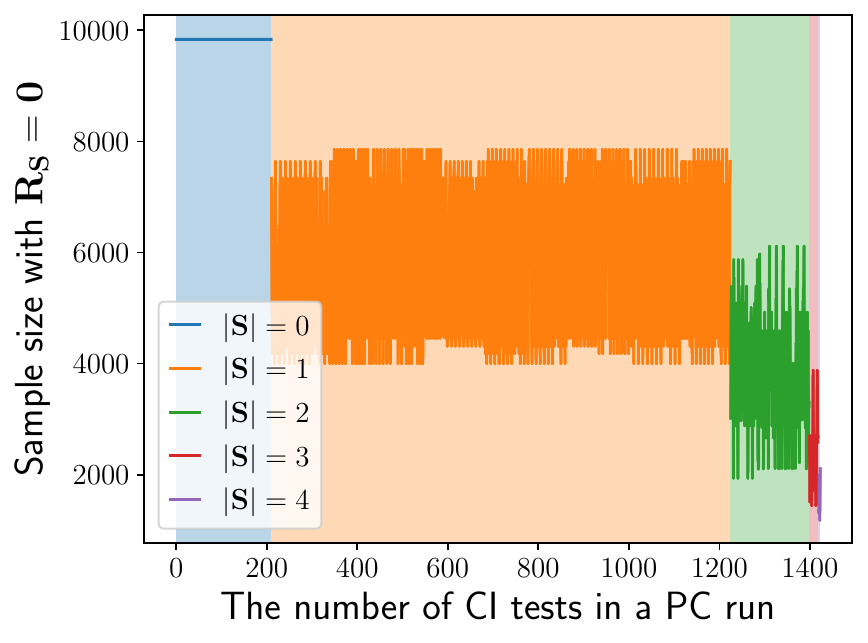}
\vspace{-1.7em}
\caption{The remaining sample size after deleting zero conditioning samples, for each CI test during a run of PC on a real data~\citep{dixit2016perturb}, with $15$ genes and $9843$ cells.}
\label{fig:samplesizes_during_PC}
\end{wrapfigure}\looseness=-1 To enhance clarity, we first provide a detailed explanation of each assumption. The common assumptions for causal discovery in \hyperref[A1]{\textit{(A1)}} are thoroughly discussed in~\citep{spirtes2000causation}, and thus we focus more on the remaining four. Specifically, \hyperref[A2]{\textit{(A2)}} and \hyperref[A3]{\textit{(A3)}} are structural assumptions on the dropout mechanism, indicating that whether a gene is dropped out or not in the sequencing procedure (temporally later) does not affect the genes' expressions and interactions (temporally earlier), and that each gene's dropout can only be directly affected by the expression value of itself, not by other genes' dropouts or expression values. \hyperref[A4]{\textit{(A4)}} is called \textit{faithful observability} in previous work on missing data~\citep{strobl2018fast,tu2019causal}: while conditional independence means independence everywhere (on every conditioned value), conditional dependence only requires a dependence at some conditioned value. Thus \hyperref[A4]{\textit{(A4)}} is assuming that any dependence conditioned on $\R$ are preserved in corresponding non-zero values. This is reasonable as biological regulation is usually performed by genes' expression instead of non-expression, and it is unlikely that all the samples that capture the dependence happen to be dropped out. Lastly, \hyperref[A5]{\textit{(A5)}} assumes that for each required CI test during the algorithm, asymptotically there exists a sufficient number of samples for the test power. This assumption is empirically validated in real data, as shown in~\cref{fig:samplesizes_during_PC}, and is also validated by a synthetic experiment with varying dropout rates (\cref{fig:app_varying_dropout_rates} in \cref{app:supp_simulation_experiments}).

\subsection{Validating and Discovering the Structural Mechanism on Dropout}\label{subsec:validate_structural_assumptions}

Among the assumptions discussed above, the structural assumption \hyperref[A3]{\textit{(A3)}} concerning the dropout mechanism is of particular interest: each gene's dropout is solely affected by the gene itself and not by any other genes. While this assumption is in line with most of the existing parametric models (\cref{subsec:existing_models_as_causal_dropout_graph}) and plausible in the context of scRNA-seq data, where mRNA molecules are reverse-transcribed independently across individual genes, it is still desirable to empirically validate this assumption based on the data. Therefore, here we propose a principled approach to validate this assumption. In other words, in addition to discovering the relations among genes, we also aim to discover the inherent mechanism to explain the dropout of each gene from the data.

Surprisingly, we have discovered that even without assumption \hyperref[A3]{\textit{(A3)}}, it is still possible to identify the regulations among $\Z$ and the dropout mechanisms for $\R$. The approach is remarkably simple: we perform causal structure search among $\Z$ using the procedure outlined in~\cref{def:general_procedure}, with only one modification: instead of deleting samples with zeros only for the conditioned variables, we delete samples with zeros for all variables involved in the CI tests. By doing so, the causal structure among $\Z$, representing the GRN, as well as the causal relationships from $\Z$ to $\R$, representing the dropout mechanisms, can be identified up to their identifiability upper bound. Formally, we have

\begin{definition}[Generalized GRN and dropout mechanisms discovery]\label{def:general_2}
    Perform the procedure outlined in~\cref{def:general_procedure}, except for inferring $Z_i \dsep Z_j | \Z_\Sb$ if and only if $Z_i \indep Z_j | \Z_\Sb, \R_{\Sb \cup \{i,j\}} = \mathbf{0}$.
\end{definition}

\begin{restatable}[Identification of GRN and dropout mechanisms]{theorem}{THMIDENTIFICATIONOFGENERAL}\label{thm:identification_of_Ri_causes}
Assume \hyperref[A1]{\textit{(A1)}}, \hyperref[A2]{\textit{(A2)}}, \hyperref[A4]{\textit{(A4)}}, and \hyperref[A5]{\textit{(A5)}}. In the CPDAG among $\Z$ output by~\cref{def:general_2}, if $Z_i$ and $Z_j$ are non-adjacent, it implies that they are indeed non-adjacent in the underlying GRN, and for the respective dropout mechanisms, $Z_i$ does not cause $R_j$ and $Z_j$ does not cause $R_i$. On the other hand, if $Z_i$ and $Z_j$ are adjacent, then in the underlying GRN $Z_i$ and $Z_j$ are non-adjacent only in one particular case, and for dropout mechanisms, the existence of the causal relationships $Z_i\rightarrow R_j$ and $Z_j\rightarrow R_i$ must be naturally unidentifiable.\looseness=-1
\end{restatable}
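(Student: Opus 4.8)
The plan is to recast the procedure of \cref{def:general_2} as a d-separation question on an auxiliary DAG, and then read off both its soundness (a missing edge in the output CPDAG is trustworthy) and its optimality (the residual ambiguity is genuinely unavoidable, so the ``identifiability upper bound'' is met). Conditioning on $\R_{\Sb\cup\{i,j\}}=\mathbf{0}$ forces $R_i=R_j=0$, so on the retained samples $X_i=Z_i$ and $X_j=Z_j$ and the algorithm is in fact testing $Z_i\indep Z_j\mid\Z_\Sb,\R_{\Sb\cup\{i,j\}}=\mathbf{0}$. The Markov condition in \hyperref[A1]{\textit{(A1)}} gives one direction, the contrapositive of \hyperref[A4]{\textit{(A4)}} together with faithfulness the other, yielding \[ Z_i\indep Z_j\mid\Z_\Sb,\ \R_{\Sb\cup\{i,j\}}=\mathbf{0}\iff Z_i\dsep Z_j\mid\Z_\Sb,\ \R_{\Sb\cup\{i,j\}}\ \text{in}\ G, \] with \hyperref[A5]{\textit{(A5)}} keeping the test well-powered asymptotically. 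It helps to collapse $G$ onto $\Z\cup\R$: every $D_k$ and $X_k$ is deterministic, and marginalizing them out leaves each $R_k$ a childless node whose $\Z$-parents are $Z_k$ together with the ``extra'' dropout causes $\{Z_\ell:Z_\ell\to D_k\ \text{in}\ G\}$ (dropping \hyperref[A3]{\textit{(A3)}} there may be several of these), while the induced subgraph on $\Z$ is the GRN; call this DAG $G'$. This mirrors the d-separation argument behind \cref{thm:correct}, except that the forced conditioning set now always contains the colliders $R_i$ and $R_j$, which is precisely where the extra ambiguity enters.

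The core step is a combinatorial characterization of when a pair is declared non-adjacent. Since every $R_k$ is a sink in $G'$, a path between $Z_i$ and $Z_j$ through $R_k$ has $R_k$ as a collider with no descendants, so it can be active only when $k\in\Sb\cup\{i,j\}$; in particular conditioning on $\R_{\Sb\cup\{i,j\}}$ always opens $R_i$ and $R_j$ and ``marries'' their $\Z$-parents. I would prove \[ \big(\exists\,\Sb:\ Z_i\dsep Z_j\mid\Z_\Sb,\R_{\Sb\cup\{i,j\}}\ \text{in}\ G'\big)\iff Z_i,Z_j\ \text{non-adjacent in the GRN},\ Z_i\not\to R_j\ \text{and}\ Z_j\not\to R_i. \] The forward direction is easy — equivalently, if the right-hand side fails then some path stays active for every $\Sb$: if $Z_i,Z_j$ are GRN-adjacent the edge itself is such a path; if $Z_i\to R_j$ then $Z_i\to R_j\leftarrow Z_j$ has only the always-opened collider $R_j$ as an internal node; symmetrically for $Z_j\to R_i$. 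For the converse (hard) direction I would start from a d-separating set for $Z_i,Z_j$ in the GRN contained in the ancestor set of $Z_i$ and $Z_j$, then repair it: whenever the associated $\R$-conditioning opens a collider path $Z_i\cdots Z_a\to R_k\leftarrow Z_b\cdots Z_j$ — on which $Z_a$ and $Z_b$ are necessarily non-colliders and hence blockable — enlarge $\Sb$ to kill it, arguing that this repair never needs to add $Z_i$ or $Z_j$ and can never create a path that stays active for every choice, because the only edges capable of that, $Z_i\to R_j$ and $Z_j\to R_i$, are excluded by hypothesis; so the process terminates at a genuine d-separating set of $G'$.

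Granting the characterization, the theorem follows. One checks that the CI answers returned by the oracle are consistent enough for a constraint-based (or reframed score-based, as in \cref{subsec:ges}) search to be sound and order-independent — e.g.\ by exhibiting a DAG over $\Z$, the GRN augmented with an edge between $Z_i$ and $Z_j$ whenever $Z_i\to R_j$ or $Z_j\to R_i$ (suitably oriented), whose d-separations they coincide with — so that non-adjacency in the output CPDAG is equivalent to the existence of a separating query. If $Z_i,Z_j$ are non-adjacent in the output, the characterization then forces them to be non-adjacent in the GRN and forces $Z_i\not\to R_j$, $Z_j\not\to R_i$, i.e.\ gene $i$'s expression does not directly drive gene $j$'s dropout and vice versa. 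If instead they are adjacent in the output, the right-hand side fails, so either they are GRN-adjacent (the generic case) or at least one of $Z_i\to R_j$, $Z_j\to R_i$ holds — the single exceptional case. Finally, to show that for such adjacent pairs the presence of $Z_i\to R_j$ (and $Z_j\to R_i$) is \emph{inherently} undecidable, I would exhibit two CDMs agreeing on every CI relation of the tested form — hence on the whole output CPDAG and on $p(\X)$ — but differing on those edges, e.g.\ one with a true regulation $Z_i\to Z_j$ and no cross dropout edge versus one with $Z_i,Z_j$ GRN-non-adjacent but $Z_i\to R_j$, matched by the same kind of distribution-engineering used in \cref{example:logistic_unidentifiable} and \cref{example:fixedCAR_unidentifiable}; since no procedure reading only $p(\X)$ can tell them apart, the ambiguity is ``naturally unidentifiable'' rather than an artifact of the method.

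The main obstacle is the converse direction of the characterization: enlarging $\Sb$ to block one opened collider path simultaneously opens the colliders $R_k$ for the newly added indices, which can spawn further paths, so one must choose the added $\Z$-nodes carefully and argue — by an induction on, say, the number of opened $\R$-colliders or on path length — that the repair converges and never re-creates a $Z_i\to R_j\leftarrow Z_j$-type always-active path. A secondary technical point is verifying the consistency claim just invoked, namely that these query-dependent CI statements really do agree with the d-separations of a single DAG over $\Z$, so that the standard correctness of constraint-based search applies without modification.
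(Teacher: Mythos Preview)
Your overall strategy---reducing the testwise-deleted CI queries to d-separation in the compact DAG $G'$ over $\Z\cup\R$ and then characterizing which pairs $(Z_i,Z_j)$ admit a separating set---is exactly the paper's, and your forward (``easy'') direction is correct. But the converse direction of your biconditional is not merely hard: it is false as stated. Take the GRN $Z_i\to Z_c\leftarrow Z_j$ together with the single extra dropout edge $Z_c\to R_i$ (beyond the compulsory $Z_k\to R_k$). Then $Z_i,Z_j$ are GRN-non-adjacent, $Z_i\not\to R_j$, and $Z_j\not\to R_i$, so your right-hand side holds; yet for \emph{every} $\Sb$ the path $Z_i\to Z_c\leftarrow Z_j$ stays open in $G'$, because the collider $Z_c$ always has its descendant $R_i\in\R_{\Sb\cup\{i,j\}}$ in the conditioning set and the path has no intermediate non-collider to block. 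Your repair step only treats newly opened paths of the shape $\cdots Z_a\to R_k\leftarrow Z_b\cdots$ where the collider \emph{on the path} is an $R$-node; it misses the case where conditioning on a forced $R_k$ activates a \emph{GRN} collider $W$ via $W\in\Anc(R_k)$, and a length-two path $Z_i\to W\leftarrow Z_j$ through such a $W$ is irreparable by any enlargement of $\Sb$.

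Consequently your identification of the ``single exceptional case'' as ``$Z_i\to R_j$ or $Z_j\to R_i$'' is too narrow. The paper's proof gives the correct, broader condition: output-adjacent but GRN-non-adjacent happens exactly when $Z_i,Z_j$ share a common child $W$ such that, for every GRN-separating $\Z_\Sb$, some $R_k$ with $k\in\Sb\cup\{i,j\}$ equals $W$ or has $W$ as an ancestor. Your scenario is the subcase $W\in\{R_i,R_j\}$; the counterexample above is the subcase $W=Z_c\in\Z$. To fix the argument you must replace the right-hand side of your biconditional with this common-child condition; the paper then observes that any \emph{other} open path has at least two intermediate nodes and hence a blockable non-collider, so the obstruction really is confined to length-two common-child paths. (A minor side remark: the paper's general model also permits $R\to R$ edges---see the illustrative example with $R_3\to R_2$---so your assumption that every $R_k$ is a sink in $G'$ is already a restriction, though it is orthogonal to the main gap above.)
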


\vspace{-0.5em}
Due to space limit, here we only give the above conclusion. For illustrative examples, proofs, and detailed elaboration on the identifiability in general cases without \hyperref[A3]{\textit{(A3)}}, please refer to~\cref{app:on_a_general}.\looseness=-1
\vspace{-0.6em}
\section{Experimental Results}
\label{sec:experiments}
\vspace{-0.8em}
In this section, we conduct extensive experiments to validate our proposed method in \cref{sec:approach}, demonstrating that it is not only theoretically sound, but also leads to superior performance in practice. We provide the experimental results on linear SEM simulated data, more `realistic' synthetic and curated scRNA-seq data, and real-world experimental data in \cref{subsec:simulated_data}, \cref{subsec:boolode_data}, and \cref{subsec:real_world_data}, respectively\footnote{Codes are available at \url{https://github.com/MarkDana/scRNA-Causal-Dropout}.}.

\subsection{Linear SEM Simulated Data}\label{subsec:simulated_data}
\vspace{-0.2em}
\textbf{Methods and simulation setup.} \ \  We assess our method (testwise deletion) by comparing it to other dropout-handling approaches on simulated data, including MAGIC~\citet{van2018recovering} for imputation, mixedCCA~\citet{yoon2020sparse} as parametric model, and direct application of algorithms on full samples (without deleting or processing dropouts, corresponding to \cref{prop:bias_dense}).  We also report the results on the true underlying $\Z$, denoted as Oracle*, to examine the performance gap between these methods and the best case (but rather impossible) without dropouts. PC \cite{spirtes2000causation} and GES \citep{chickering2002optimal} with FisherZ test are used as the base causal discovery algorithms. We randomly generate ground truth causal structures with $p\in\{10,20,30\}$ nodes and degree of $1$ to $6$. Accordingly, $\Z$ is simulated following random linear SEMs in two cases: jointly Gaussian, and Lognormal distributions to better model count data~\citep{bengtsson2005gene}. We apply three different types of dropout mechanisms on $\Z$ to obtain $\X$: (1) dropout with the fixed rates, (2) truncating low expressions to zero, and (3) dropout probabilistically determined by expression, which are described in~\cref{example:dropout_fixed_CAR,example:dropout_truncated,example:dropout_prob_func}, respectively. We report the structural Hamming distance (SHD) wrt. the true and estimated CPDAGs calculated from $5$ random simulations. More details are available in~\cref{app:supp_simulation_experiments}.\looseness=-1

\looseness=-1
\textbf{Experimental results.} \ \
The results of $30$ nodes are provided in \cref{fig:shd_cpdag_30nodes}. Our method (1) consistently outperforms other competitors across most settings, particularly when using PC as the base algorithm, and (2) leads to performance close to the oracle. Directly applying algorithms on full samples often performs poorly due to potential false discoveries as suggested by \cref{prop:bias_dense}. Notably, mixedCCA performs relatively well in the Gaussian and truncated dropout setting (as a parametric model with no model misspecification) but is still outperformed by our method. This further validates our method's effectiveness and flexibility. For more experimental results, including results in different scales, on different metrics, and with varying dropout rates, please kindly refer to~\cref{app:supp_simulation_experiments}.

\begin{figure}[t]
\centering
\subfloat{
    \includegraphics[width=0.76\textwidth]{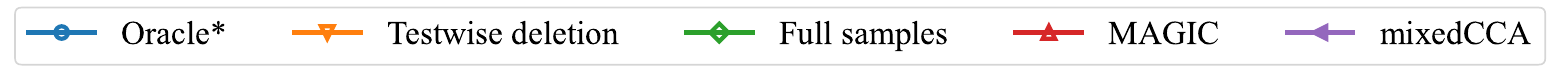}
}\\
\vspace{0.1em} %
\hspace{-0.8em} %
\subfloat{
    \includegraphics[width=0.49\textwidth]{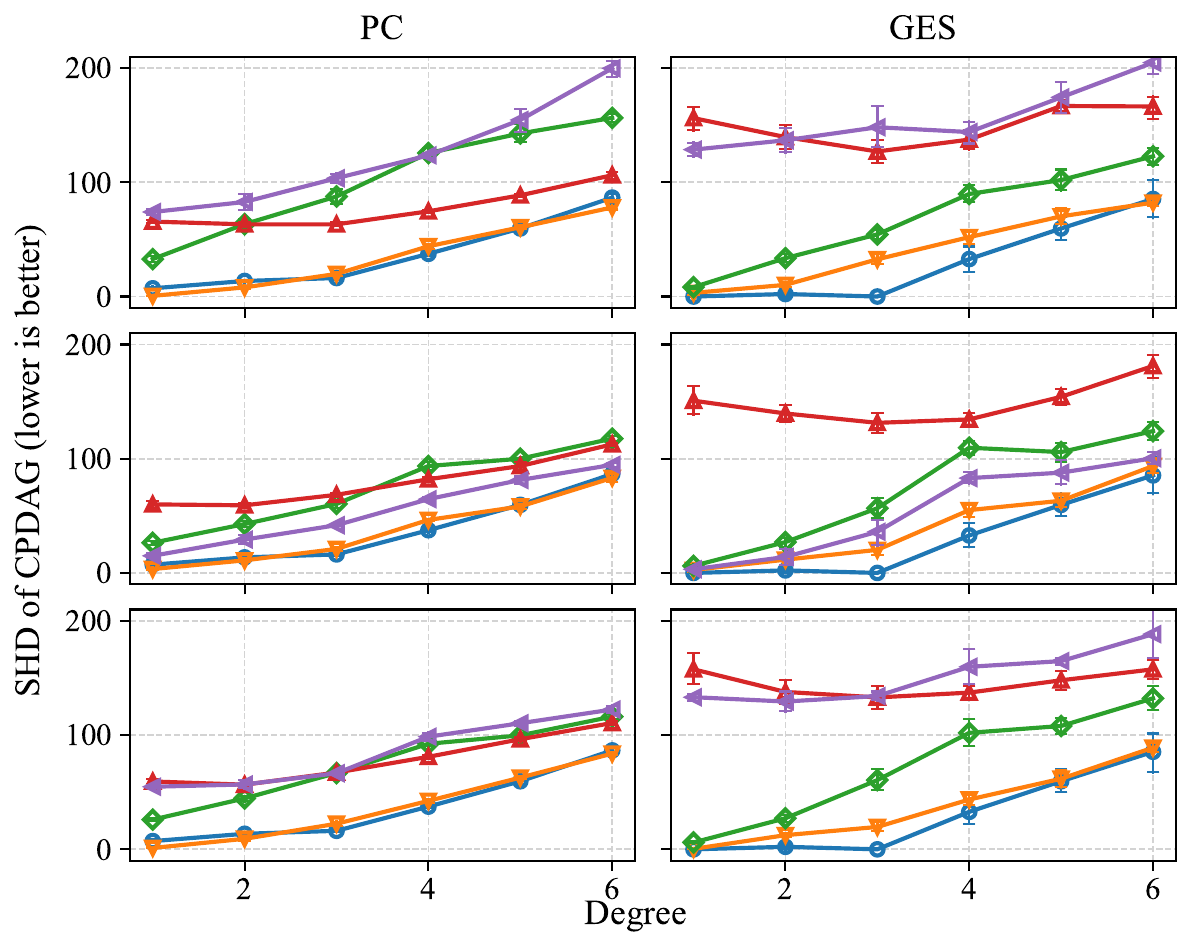}
}
\subfloat{
    \includegraphics[width=0.49\textwidth]{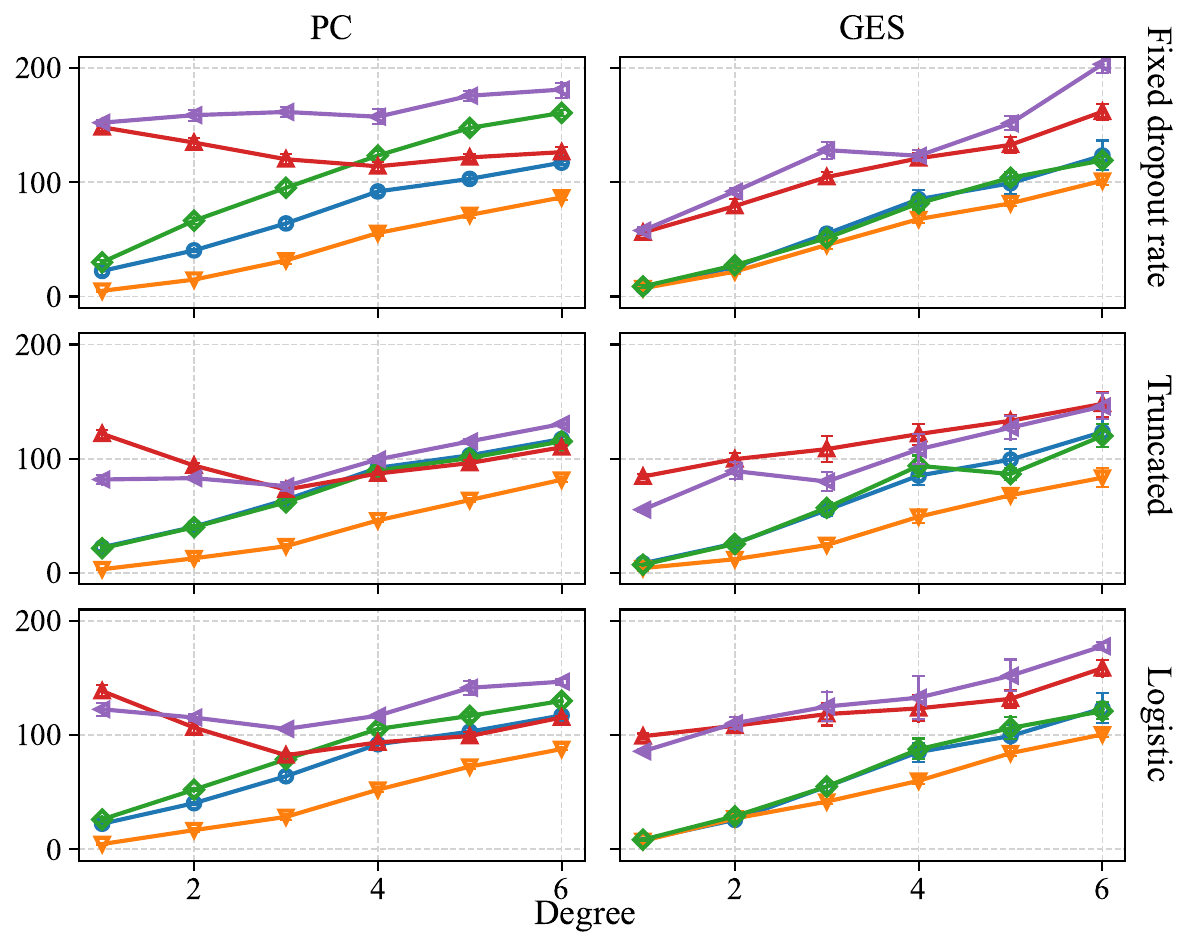}
}\\
\vspace{0.1em}
\small{\hspace{1.5em}(a) $|\Z|=30; \Z\sim$ Gaussian. \hspace{9.5em}(b) $|\Z|=30; \Z\sim$ Lognormal.}
\vspace{0.2em}
\caption{Experimental results (SHDs of CPDAGs) of $30$ variables on simulated data, where three dropout mechanisms are considered. The variables $\Z$ follow Gaussian or Lognormal distribution.}
\label{fig:shd_cpdag_30nodes}
\end{figure}

\subsection{Realistic BoolODE Synthetic and Curated Data}\label{subsec:boolode_data}
\begin{figure}
    \centering
    \includegraphics[width=1\linewidth]{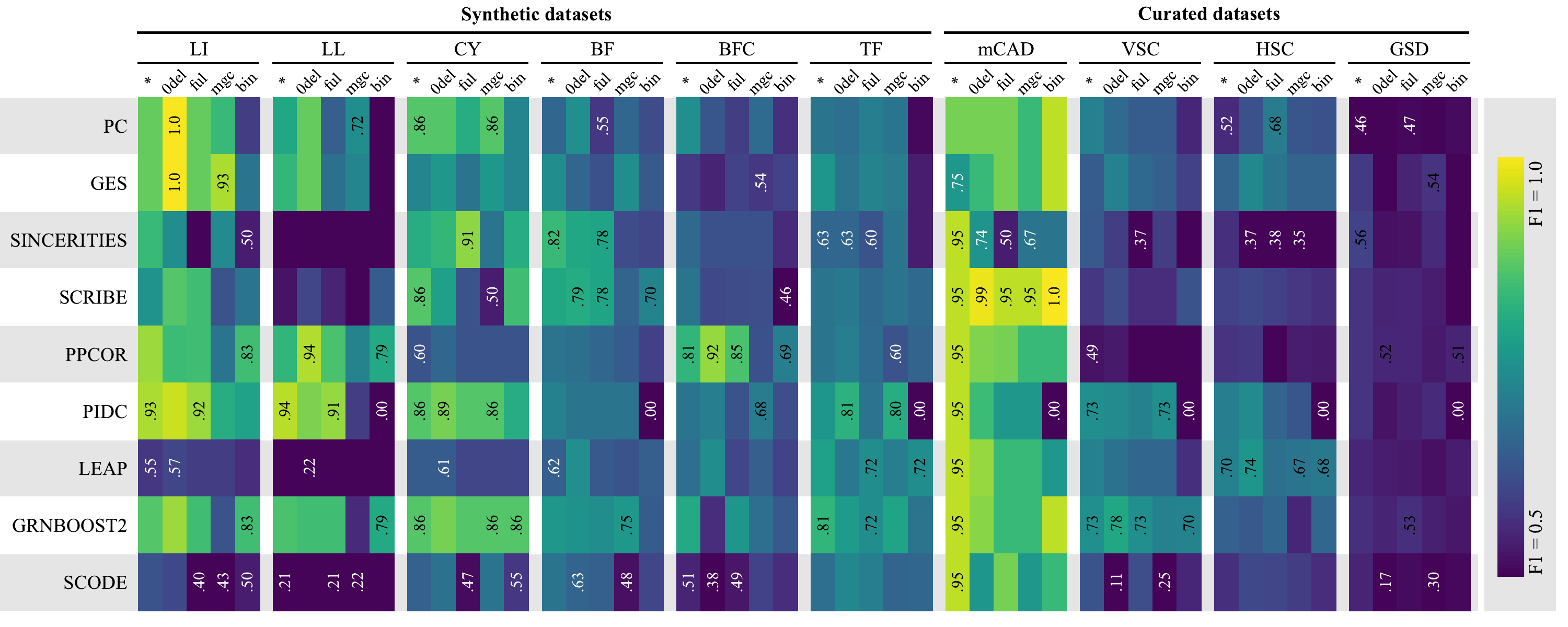}
    \caption{\updated{Experimental results (F1-scores of estimated skeleton edges) on BoolODE simulated data and BEELINE benchmark framework~\citep{pratapa2020benchmarking}. The 9 rows correspond to PC, GES, and 7 other GRNI-specific SOTA algorithms benchmarked. The 10 colored column blocks correspond to all 6 synthetic and 4 curated datasets in~\citep{pratapa2020benchmarking}. The 5 column strips in each block correspond to different dropout-handling strategies. Cell colors indicate the corresponding values (brighter is higher, i.e., better). The maximum and minimum of each strip are annotated.}\looseness=-1}
    \label{fig:beeline_boolode_results}
\end{figure}

While synthetic data is preferred for assessment thanks to ground-truth graphs, to truly gauge real-world applicability, we aim to conduct simulations that resemble scRNA-seq data more closely than just linear SEM. Furthermore, we seek to investigate the efficacy of our zero-deletion approach when integrated into algorithms specifically designed for scRNA-seq data, rather than just general causal discovery methods like PC and GES. To accomplish this, we conduct an array of experiments based on the widely recognized BEELINE framework~\citep{pratapa2020benchmarking}, which offers a scRNA-seq data simulator called BoolODE, and benchmarks various SOTA GRNI-specific algorithms.\looseness=-1

Following the BEELINE paradigm, for data simulation, we use the BoolODE simulator that basically simulates gene expressions with pseudotime indices from Boolean regulatory models. We simulate all the 6 synthetic and 4 literature-curated datasets in~\citep{pratapa2020benchmarking}, each with 5,000 cells and a 50\% dropout rate, following all the default hyper-parameters. For algorithms, in addition to PC and GES, we examined all SOTA algorithms (if executable; with default hyper-parameters) benchmarked in~\citep{pratapa2020benchmarking}. For each algorithm, five different dropout-handling strategies are assessed, namely, oracle*, testwise deletion, full samples, imputed, and binarization~\citep{qiu2020embracing}. Method details, e.g., how is zero deletion incorporated into various algorithms, can be found in~\cref{app:supp_boolode_experiments}.\looseness=-1

The F1-scores of the estimated skeleton edges are shown in~\cref{fig:beeline_boolode_results}. We observe that: 1) Dropouts do harm to GRNI. Among the 90 dataset-algorithm pairs (9 algorithms $\times$ 10 datasets), on 65 of them (72\%), `full samples' (with dropouts) performs worse than `oracle'. 2) Existing dropout-handling strategies (imputation and binarization) don't work well. Among the 90 pairs, `imputed' and `binary' is even worse than `full samples' on 45 (50\%) and 58 (64\%) of them, respectively, i.e., they may even be counterproductive. As discussed in~\cref{subsec:lack_theoretical_guarantee_for_imputation}, such strategies may indeed introduce additional bias. 3) The proposed zero-deletion is effective in dealing with dropouts, with consistent benefits across different integrated algorithms and on different datasets. Among the 90 pairs, `zero-deletion' is better than `full samples' and than `imputed' on 64 (71\%) and 61 (68\%) of them, respectively, i.e., it effectively helps GRNI with dropouts, and is more effective than imputation in dropout-handling.\looseness=-1

\vspace{1em}
\subsection{Real-World Experimental Data}\label{subsec:real_world_data}

\begin{wrapfigure}[19]{r}{0.41\textwidth}
\begin{center}
\vspace{-2.5em}
    \includegraphics[width=0.41\textwidth,clip]{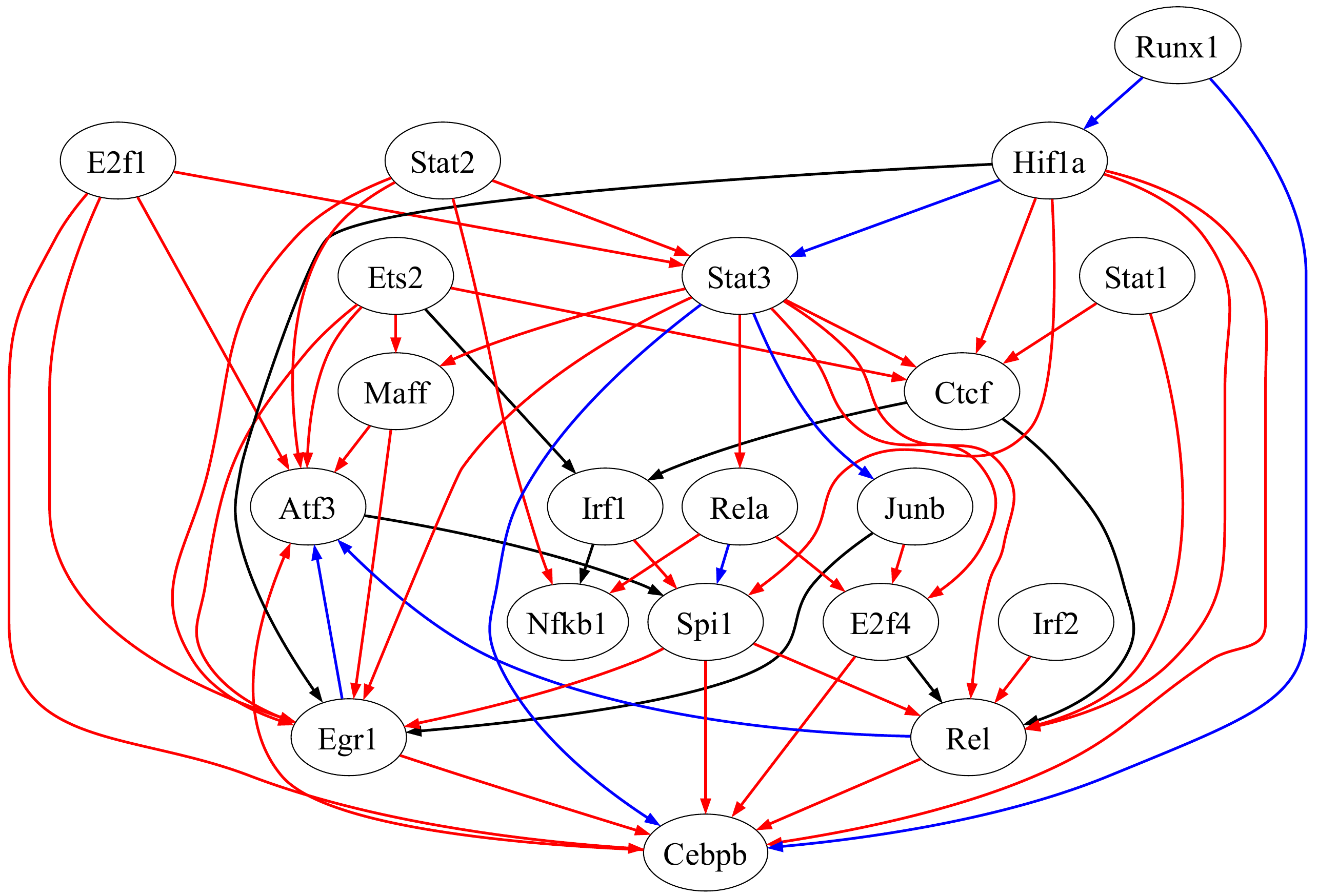}
\vspace{-1.5em}
\caption{Experimental results on $21$ key genes from Pertub-seq data~\citep{dixit2016perturb}. Red edges are returned by PC-full but not by PC-test-del. Blue and black edges are returned by PC-test-del. Specifically, blue edges are the regulatory interactions priorly known~\citep{dixit2016perturb}, while black edges are not priorly known.}
\label{fig:dixit}
\end{center}
\end{wrapfigure}
\looseness=-1 To examine the efficacy of our approach in a real-world setting, we applied our method on gene expression data collected via single-cell Perturb-seq~\citep{dixit2016perturb}. As in prior work \cite{saeed2020anchored,dixit2016perturb}, we focus on the bone-marrow dendritic cells (BMDC) in the unperturbed setting ($9843$ cells in total), and on 21 transcription factors believed to be driving differentiation in this tissue type \cite{dixit2016perturb,garber2012high}. We compared the predicted causal edges with known relations derived as a union of statistically significant predictions from interventional data \cite{dixit2016perturb} and prior knowledge \cite{han2018trrust}. For this analysis, we ran a baseline of the PC algorithm with all the samples (PC-full), compared to the PC algorithm with testwise deletion (PC-test-del). From the figure, one can appreciate that PC-full introduces numerous edges that are not backed by prior knowledge (labeled red in Figure \ref{fig:dixit}). For example, the E2f1 $\rightarrow$ Cebpbp causal link is opposite of what has been reported in the literature \cite{gutsch2011ccaat}. Furthermore, Stat3 and E2f1 are known to cooperate in downstream regulation, rather than E2f1 being a cause of Stat3 \cite{hutchins2013distinct}. On the contrary, majority of the edges that testwise deletion provides are previously known (labeled in blue), with a few that are not in the list of previously seen interactions that we used (labeled in black). For experimental details and results on another two real-world experimental datasets, please kindly refer to~\cref{app:supp_real_data_experiments}.

\vspace{-1em}
\section{Conclusion and Discussions}
\label{sec:discussion}

\vspace{-1em}
Building upon common understanding of dropout mechanisms, we develop the causal dropout model to characterize these mechanisms. Despite the non-ignorable observed zeros resulting from dropouts, we develop a testwise-deletion procedure to reliably perform CI test, which can be seamlessly integrated into existing causal discovery methods to handle dropouts and is asymptotically correct under mild assumptions. Furthermore, our causal dropout model serves as a systematic framework to verify if the qualitative mechanism of dropout studied in the literature is valid, and learn such a mechanism from observations. Extensive experiments on simulated and real-world datasets demonstrate that our method leads to improved performance in practice. A possible limitation is the decreasing sample size after testwise deletion, and future work includes developing a practical method to resolve it.\looseness=-1

\section*{Acknowledgements}
This material is based upon work supported by the AI Research Institutes Program funded by the National Science Foundation under AI Institute for Societal Decision Making (AI-SDM), Award No. 2229881. The project is also partially supported by the National Institutes of Health (NIH) under Contract R01HL159805, and grants from Apple Inc., KDDI Research Inc., Quris AI, and Infinite Brain Technology. Petar Stojanov was supported in part by the National Cancer Institute (NCI) grant number: K99CA277583-01, and funding from the Eric and Wendy Schmidt Center at the Broad Institute of MIT and Harvard.

\newpage
\appendix

\addcontentsline{toc}{section}{Appendix}
\part{Appendix}

{\hypersetup{linkcolor=black}
\parttoc
}

\numberwithin{equation}{section}

\normalsize

\section{Proofs of Main Results}\label{app:proofs}
\subsection{Proof of~\cref{prop:bias_dense}}

\PROPBIASTOADENSERGRAPH*
\begin{proof}[Proof of~\cref{prop:bias_dense}]
The $\Rightarrow$ direction: By Markov condition in \hyperref[A1]{\textit{(A1)}}, we have $Z_i \not\indep Z_j | \Z_\Sb \Rightarrow Z_i \not\dsep Z_j | \Z_\Sb$. Consider the open path $Z_i -\cdots - Z_j$, by \hyperref[A2]{\textit{(A2)}} all non-colliders are not in $\X_\Sb$ and all colliders have descendants in $\X_\Sb$, and thus $X_i \leftarrow Z_i -\cdots - Z_j \rightarrow X_j$ is still open, i.e., $X_i \not\dsep X_j | \X_\Sb$. By faithfulness in \hyperref[A1]{\textit{(A1)}}, $X_i \not\dsep X_j | \X_\Sb \Rightarrow X_i \not\indep X_j | \X_\Sb$.

The $\Leftarrow$ direction doesn't hold: Because the non-colliders in $\Z_\Sb$ cannot be conditioned in children $\X_\Sb$, $X_i \indep X_j | \X_\Sb$ only when every path connecting $Z_i$ and $Z_j$ has a collider, i.e., $Z_i \indep Z_j$.
\end{proof}

\subsection{Proof of~\cref{thm:correct}}
\THMCORRECTCIESTIMATIONS*
\begin{proof}[Proof of~\cref{thm:correct}]
The $\Rightarrow$ direction: By faithfulness in \hyperref[A1]{\textit{(A1)}}, $Z_i \indep Z_j | \Z_\Sb \Rightarrow Z_i \dsep Z_j | \Z_\Sb$. By \hyperref[A3]{\textit{(A3)}}, the blocked paths remain blocked for the two children, i.e., $X_i \dsep X_j | \Z_\Sb$. Since conditioning on $\R_\Sb$ does not introduce any collider, we further have $X_i \dsep X_j | \Z_\Sb,\R_\Sb$. By Markov condition in \hyperref[A1]{\textit{(A1)}}, $X_i \dsep X_j | \Z_\Sb,\R_\Sb \Rightarrow X_i \indep X_j | \Z_\Sb,\R_\Sb$, which implies $X_i \indep X_j | \Z_\Sb,\R_\Sb=\mathbf{0}$.

The $\Leftarrow$ direction: Show by its contrapositive. By Markov condition in \hyperref[A1]{\textit{(A1)}}, $Z_i \not\indep Z_j | \Z_\Sb \Rightarrow Z_i \not\dsep Z_j | \Z_\Sb$. By \hyperref[A2]{\textit{(A2)}}, $\R_\Sb$ does not contain any non-collider on the open path connecting $Z_i$ and $Z_j$, and thus still $Z_i \not\dsep Z_j | \Z_\Sb,\R_\Sb$. By \hyperref[A3]{\textit{(A3)}}, the open paths remain for the two children, i.e., $X_i \not\dsep X_j | \Z_\Sb,\R_\Sb$. By faithfulness in \hyperref[A1]{\textit{(A1)}}, $X_i \not\indep X_j | \Z_\Sb,\R_\Sb$. By \hyperref[A4]{\textit{(A4)}}, $X_i \not\indep X_j | \Z_\Sb,\R_\Sb=\mathbf{0}$.
\end{proof}

\subsection{Proofs of Results in~\cref{subsec:validate_structural_assumptions}}
For results in~\cref{subsec:validate_structural_assumptions} (including \cref{prop:unidentifiability_of_dropout,thm:identification_of_Ri_causes,thm:observed_are_correct}) regarding the discovery and validation of the dropout mechanisms within a general causal dropout model without assumption \hyperref[A3]{\textit{(A3)}}, please refer to~\cref{app:on_a_general} where we will first provide the necessary background and introduce the general framework to support our findings.

\section{Discussions}\label{app:discussions}

\subsection{Existing Parametric Models as Instances of the Causal Dropout Model}\label{subsec:app_existing_models_not_causal_dropout}

In \cref{subsec:existing_models_as_causal_dropout_graph}, we discussed several existing parametric models as specific instances of our proposed causal dropout model. Here we give some detailed analysis on them:

\begin{itemize}[noitemsep,topsep=-3pt]
    \item[1.] Dropout with the fixed rates (\cref{example:dropout_fixed_CAR}). $D_i \sim \operatorname{Bernoulli}(p_i)$, i.e., the gene $Z_i$ gets dropped out with a fixed probability $p_i$ across all individual cells. The representative models in this category are the zero-inflated models~\citep{pierson2015zifa,kharchenko2014bayesian,saeed2020anchored,yu2020directed,min2005random} (or with a slight difference, the hurdle models~\citep{finak2015mast,qiao2023poisson}), where sequenced data $X_i$ are assumed to follow a mixture distribution with two components: one point mass at zero for dropouts, and one common distribution, e.g., Gaussian, Poisson or negative binomial, for gene counts. Biologically, dropouts with the fixed rates can be explained by the random sampling of transcripts during library preparation -- regardless of the true expressions of genes. Graphically, in this case the edge $Z_i \rightarrow D_i$ is absent. It is worth noting that in some (e.g., the Michaelis-Menten~\citep{andrews2019m3drop}) models, the dropout rate is determined by the average expression of the gene, i.e., $p_i = f(\mathbb{E}[Z_i])$. In this case however, the edge $Z_i \rightarrow D_i$ is still absent, as apparently $Z_i \indep D_i$ (this $p_i$ is fixed across all cells).

    \item[2.] Truncating low expressions to zero (\cref{example:dropout_truncated}). $D_i = \mathbbm{1}(Z_i < c_i)$, i.e., the gene $Z_i$ gets dropped out in cells whenever its expression is lower than a threshold $c_i$. A typical kind of such truncation models is, for simple statistical properties, the truncated Gaussian copula~\citep{fan2017high,yoon2020sparse,chung2022sparse}, where $\Z$ are assumed to be joint Gaussian (usually with additional assumptions e.g., standardized), and the covariance among $\Z$ is estimated from $\X$. Biologically, such truncation thresholds $c_i$ (quantile masking~\citep{jiang2022statistics}) can be explained by limited sequencing depths. Graphically, in this case the edge $Z_i\rightarrow D_i$ exists.

    \item[3.] Dropout probabilistically determined by expressions (\cref{example:dropout_prob_func}). $D_i \sim \operatorname{Bernoulli}(F_i(\beta_i Z_i + \alpha_i))$, where usually $F_i$ are  strictly monotonically increasing functions in range $[0,1]$, and $\beta_i,\alpha_i\in \mathbb{R}$ are parameters with $\beta_i < 0$, i.e., a gene $Z_i$ may be detected (or not) in every cell, while the higher it is expressed in a cell, the less likely it will get dropped out. $F_i$ is typically chosen as CDF of some common distributions, e.g., probit or logistic~\citep{cragg1971some,liu2004robit,miao2016identifiability}. Biologically, the mechanism can be explained by inefficient amplification. Graphically, the edge $Z_i\rightarrow D_i$ also exists, and is non-deterministic.

\end{itemize}

Specifically regarding the several existing zero-inflated models~\citep{pierson2015zifa,kharchenko2014bayesian,saeed2020anchored,yu2020directed,min2005random} (or related hurdle models~\citep{finak2015mast,qiao2023poisson}) that can be seen as parametric instances within our proposed causal dropout model, they align with \cref{example:dropout_fixed_CAR}, where each gene $Z_i$ experiences dropout with a fixed probability $p_i$ across all cells. However, it is worth noting that there are also some other zero-inflated models and variations that fall outside the scope of our causal dropout model. We categorize these deviations based on the following three reasons:

\begin{itemize}[noitemsep,topsep=-3pt]
    \item[1.] Dropout is influenced by other covariates, rather than solely by the gene itself. In certain models, the excessive zero rate is not fixed across all cells, meaning that $Z_i \not\indep D_i$. Furthermore, this dependency cannot be explained solely by $Z_i$, as $\Z \backslash {Z_i} \not\indep D_i | Z_i$. An example of such a model is one that models the zero rate using a logit link, as in \citep{workie2021bayesian}: $D_i\sim \operatorname{Bernoulli}(p_i)$ with $p_i=\beta_i^\intercal \Z$, where $\beta_i$ is the parameter vector in $\mathbb{R}^{p}$. Here, the non-zero entries of $\beta_i$ are not limited to the $i$-th entry, and all other entries are also considered as ``parents'' of $Z_i$ in GRN~\citep{choi2020bayesian}. This aligns with our findings in \cref{subsec:validate_structural_assumptions} and \cref{subsec:on_general_causal_dropout_model}, where the causal effects within the GRN or as dropout mechanisms are unidentifiable.

    \item[2.] Measurement errors are involved. In our model, $X_i = (1-D_i) * Z_i$, indicating that the latent variable $Z_i$ is partially observed, with the non-zero observations being the true values. However, in representative models such as the post Poisson model \citep{xiao2022estimating,saeed2020anchored}, the data generating process is formulated as follows:
        \begin{equation}
        X_i \sim \begin{cases} 
          \operatorname{Poisson}(Z_i) & \text{ w.p. } p_i \\
         0 & \text{ w.p. } 1-p_i\end{cases}
        \end{equation}
    In this case, even if $p_i$ is fixed across all cells, it does not fit into our model as the non-zero parts of the observation $X_i$ do not reflect the true values of $Z_i$ due to the post-Poisson noise. This scenario should be addressed separately as the problem of causal discovery in the presence of \textit{measurement error}~\citep{fuller2009measurement, pearl2012measurement, kuroki2014measurement, scheines2016measurement, dai2022independence}.

    \item[3.] Excessive zeros are incorporated into the latent generation process, rather than being modeled as a post-sequencing procedure. In our model, we have two sets of variables: $\Z$, corresponding to the true underlying expressions, and $\X$, corresponding to the observations. E.g., in~\citep{xiao2022estimating,saeed2020anchored}, each $X_i$ conditioned on $Z_i$ is assumed to follow a zero-inflated Poisson distribution. However, in certain models, there is no such differentiation. The observations $\X$ are assumed to be exactly the true expressions, meaning there are no technical dropouts. For example, in \citep{choi2020bayesian}, the true expression of each gene, conditioned on its parents in the GRN, directly follows a zero-inflated Poisson distribution, i.e., these zeros are inflated in the underlying genes interactions, not in a post sequencing procedure. It is important to note that this type of model is not designed to address the dropout issue but rather to provide a more accurate characterization (beyond just Poisson) of the excessive zeros in the true gene expressions.\looseness=-1
\end{itemize}

\subsection{Relation to Imputation Methods and the Missing Data Problem}\label{subsec:more_discussion_on_imputation_and_missingdata}

In \cref{subsec:lack_theoretical_guarantee_for_imputation}, we examined the potential theoretical issues associated with imputation methods and demonstrated the general unidentifiability of the true underlying joint distribution $p(\Z)$ from the observational distribution $p(\X)$. Now, we delve deeper into the derivation of this result and provide a more comprehensive discussion of the broader missing data problem.

\begin{wrapfigure}[28]{r}{0.55\textwidth}
\vspace{-1em}
\centering
\includegraphics[width=0.55\textwidth]{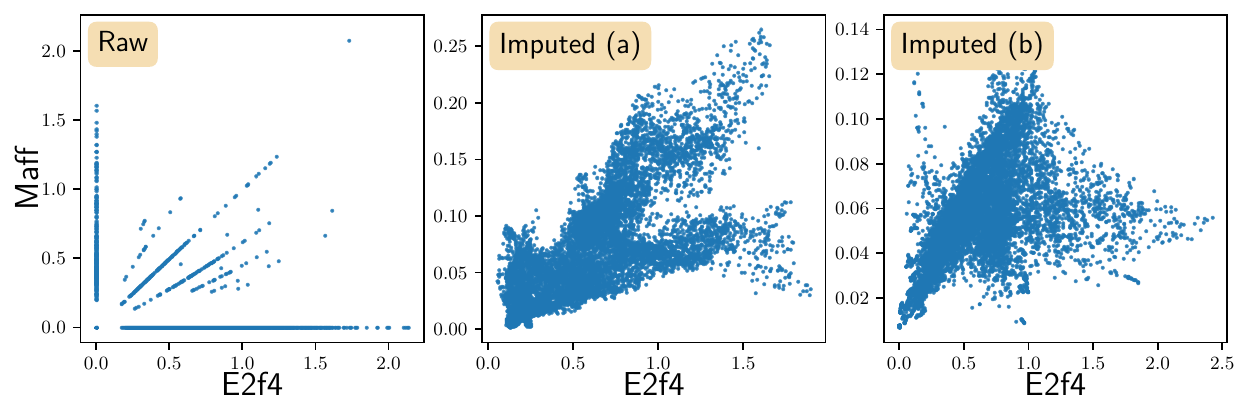}
\includegraphics[width=0.55\textwidth]{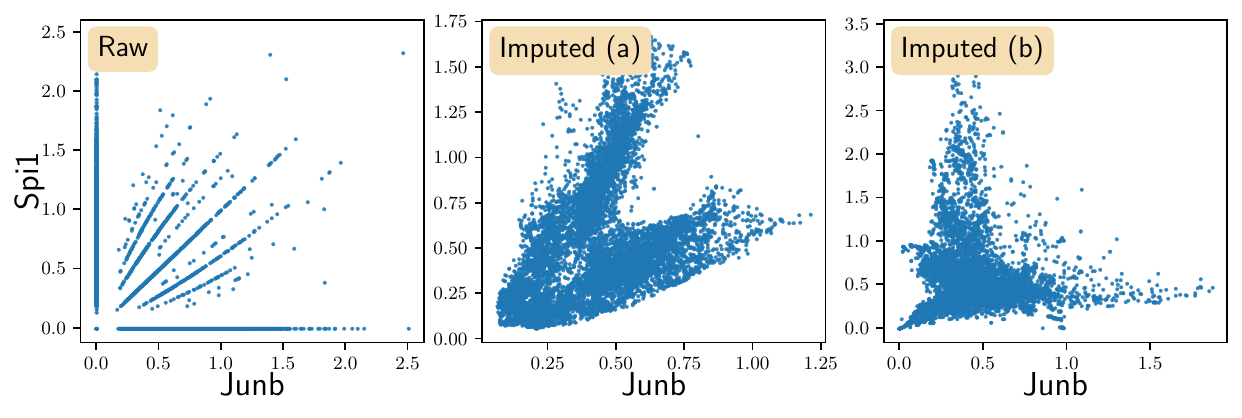}
\caption{Examples illustrating the introduction of spurious relationships through imputation. The two rows correspond to two gene pairs, namely (E2f4, Maff) and (Junb, Spi1), obtained from unperturbed cells in the data by \citep{dixit2016perturb}. The three columns represent the raw counts after normalization and the imputed data using MAGIC~\citep{van2018recovering} with (b) incorporation of all genes or (c) considering only the 24 important genes as discussed in~\cref{subsec:real_world_data}. It can be observed that imputation introduces spurious relationships, such as the presence of fork-shaped and highly nonlinear associations. Furthermore, the disparity between (b) and (c) demonstrates that the selection of genes significantly impacts the outcomes of imputation.}
\label{fig:spurious_imputation_relation}
\end{wrapfigure}Firstly, according to the missing data literature, the underlying joint distribution $p(\Z)$ is \textit{irrecoverable} due to the \textit{self-masking} dropout mechanism~\citep{enders2022applied,little2019statistical,mohan2013graphical,shpitser2016consistent,mohan2018handling}. To ``fill in the holes'' on any subset of variables $\Sb$ from non-zero observations, i.e., to recover the joint distribution $p(\Z_\Sb)$ from $p(\X_\Sb | \R_\Sb = \mathbf{0})$ (which is $p(\Z_\Sb | \R_\Sb = \mathbf{0})$), it generally requires $p(\Z_\Sb)$ to be factorizable into components where the missingness is \textit{ignorable}, i.e., $p(\cdot) = p(\cdot | \R_\cdot = \mathbf{0})$, unbiasedly estimable from non-zero parts. By induction, at least one gene's marginal distribution $p(Z_i)$ should be recoverable, which is however impossible: since $Z_i$ directly affects the dropout by itself, $Z_i$ and $D_i, R_i$ are dependent conditioning on any other variables, owing to the presence of the directed edges between $Z_i$ and $D_i, R_i$ in~\cref{fig:causal_graphical_dropout}. This irrecoverability persists even if the dropout mechanism is assumed to be parametrically fixed, as shown by the counterexample~\cref{example:logistic_unidentifiable}.

Secondly, what distinguishes dropout data from missing data is that the missing entries cannot be precisely located, i.e., the $\D$ indicators are latent. In scRNA-seq data, we only observe zero entries but cannot differentiate between biologically unexpressed genes and technical zeros. Most existing imputation methods treat all zeros as ``missing holes'' resulting from technical dropout and fill them in using information from non-zero entries. However, given that biologically unexpressed genes are ubiquitous in cells, such imputation can introduce false signals and spurious relationships~\citep{andrews2018false}, as illustrated in~\cref{fig:spurious_imputation_relation}. Notably, many imputation methods only evaluate their accuracy
under the fixed-dropout-rate scheme~\citep{jiang2022statistics} as in~\cref{example:dropout_fixed_CAR}, where dropout happens \textit{completely-at-random} (CAR)~\citep{little2019statistical}, i.e., $\Z\indep\D$. However, even in this relatively simpler case (without the self-masking as in the first reason), $p(\Z)$ remains irrecoverable, as shown by the counterexample~\cref{example:logistic_unidentifiable}.

And thirdly, even if one is willing to make restrictive assumptions to render $p(\Z)$ theoretically recoverable, imputation methods maynot be the most suitable choice for gene regulatory network inference, which is inherently a structure learning task. This is because imputation has a different inductive bias compared to structure learning: it focuses primarily on the unconditional (pairwise) dependencies and is better suited for tasks such as differential expression analysis. However, for structure learning, conditional dependencies are more important in order to capture the underlying gene regulatory relationships~\citep{saeed2020anchored,gao2022missdag}. As interestingly shown by~\citep{gao2022missdag}, in the linear Gaussian setting where covariance matrix is a sufficient statistic, imputation methods and a method specifically for causal discovery are compared on recovering the structure with missing data. Even if imputation methods estimate a more accurate covariance matrix (in terms of a smaller Frobenius norm), by inputting the estimated covariance matrices into causal discovery method e.g., PC, their performance on structure learning (in terms of SHD to the true graph) is worse than the other method specifically designed for structure learning. Therefore, imputation methods may not provide the optimal inductive bias required for accurate gene regulatory network inference.

In light of the demonstrated unidentifiability of $p(\Z)$ using the decomposition technique from the missing data literature, we now delve into the missing data problem and its relevance to our work. Missing data entries are frequently encountered in real-world data, such as unanswered questions in a questionnaire. These missingness patterns can be categorized into three main types: Missing Completely At Random (MCAR), Missing At Random (MAR), and Missing Not At Random (MNAR). Data are considered MCAR when the cause of missingness is purely random, such as instances where the missing rate is fixed across all samples (as in~\cref{example:dropout_fixed_CAR}) and entries are deleted due to random computer errors. On the other hand, data are deemed MAR when the missingness is independent with the corresponding underlying variables given all other fully observed variables (i.e., no missing entries at all), meaning that the missingness can be completely explained by the fully observed variables. For example, consider the gender wage gap study that measures two variables: gender and income, where gender is always observed and income has missing entries. In this scenario, MAR missingness would occur if men are more reluctant than women to disclose their income. Given the fully observed gender, the missing entries are independent with the income. Lastly, data that do not fall under either MCAR or MAR are classified as MNAR.

To handle missing data, a trivial approach is list-wise deletion, where all samples with missing entries are discarded. Another common strategy is data imputation, often performed through expectation maximization (EM) techniques. In recent years, there has been a growing interest in understanding missing data from a causal perspective, with contributions from various studies~\citep{enders2022applied,little2019statistical,mohan2013graphical,shpitser2016consistent}. By modeling the missingness mechanisms within a causal graphical model, researchers have explored the conditions for recoverability of the underlying distribution and developed techniques for recovery. It has been shown that list-wise deletion can only recover the true distribution when the missing mechanism is MCAR. EM-based imputation methods can unbiasedly recover the true distribution only under the MAR missingness. For MNAR, some cases can be addressed through techniques like factorization or probability reweighting~\citep{mohan2013graphical}, but certain situations are theoretically irrecoverable, such as the well-known \textit{self-masking} case~\citep{mohan2018handling}. Some other studies propose that even if the true underlying distribution is irrecoverable, the conditional independence (CI) relations can still be identifiable through structure learning. This is achieved by employing test-wise deletion, where incomplete records of variables involved in each CI test are removed~\citep{strobl2018fast,tu2019causal}.

In our specific task, we do not encounter missing entries explicitly marked as, for example, \texttt{nan} in the data matrix. Instead, we only observe zeros, which can be attributed to either technical or biological reasons that are unknown to us. If we treat all zeros as missing entries, self-masking edges exist for each variable. Intriguingly, even in this challenging scenario, according to the current biological understanding to the dropout mechanism, our proposed causal dropout model can systematically address the dropout issue based on the current biological understanding of the dropout mechanism. As a result, the true graph structure can be consistently estimated. Lastly, it is worth noting that within our proposed framework, as illustrated in~\cref{def:general_procedure}, test-wise deletion only removes samples with zeros for the conditioned variables, rather than all variables. This approach deviates from the conventional notion of test-wise deletion, and we will further discuss this distinction in~\cref{subsec:on_general_causal_dropout_model}.

\section{On a General Causal Dropout Model with Relaxed Assumptions}\label{app:on_a_general}
Among the assumptions delineated in~\cref{subsec:assumptions_overview}, the structural assumption \hyperref[A3]{\textit{(A3)}} concerning the dropout mechanism is of particular interest: each gene's dropout is solely affected by the gene itself and not by any other genes. While this assumption is in line with most of the existing parametric models (\cref{subsec:existing_models_as_causal_dropout_graph}) and plausible in the context of scRNA-seq data, where mRNA molecules are reverse-transcribed independently across individual genes, it is still desirable to empirically validate this assumption based on the data. In other words, except for discovering the relations among genes, we also aim to discover the inherent mechanism to explain the dropout of each gene from the data. Surprisingly, as in~\cref{subsec:validate_structural_assumptions}, we propose a principled and simply approach to do so: the causal structure among $\Z$, representing the GRN, as well as the causal relationships from $\Z$ to $\R$, representing the dropout mechanisms, can be identified up to their identifiability upper bound. Now here we will first give a detailed motivation and elaboration in~\cref{subsec:elaboration:on_theorem_general_correct,subsec:on_general_causal_dropout_model}, and then give the proofs in~\cref{subsec:on_general_causal_dropout_model_proofs}.

\subsection{Elaboration on~\cref{thm:observed_are_correct} with an Illustrative Example}\label{subsec:elaboration:on_theorem_general_correct}

\cref{thm:identification_of_Ri_causes} gives a principled approach to identify both the GRN and the dropout mechanisms up to their identifiability upper bound even without assumption \cref{A3}. Before we move into the proof in~\cref{subsec:on_general_causal_dropout_model}, here we first explain what the identifiability upper bound is, and why.

To check whether gene $Z_i$ affects dropouts of another gene $Z_j$, it basically requires us to view $R_j$ also as random variables, and to test for conditional independence and see whether there exists any variables set that can d-separate $Z_i$ from $R_j$. However, there exists a natural identifiability upper bound:\looseness=-1

\begin{restatable}[Identifiability upper bound of dropout mechanisms]{proposition}{PROPIDENTIFIABILITYUPPERBOUND}\label{prop:unidentifiability_of_dropout}
If $Z_i$ and $Z_j$ are adjacent in GRN, then whether they affect each other's dropout, i.e., whether edges $Z_i \rightarrow R_j$ and $Z_j \rightarrow R_i$ exist, is unidentifiable.
\end{restatable}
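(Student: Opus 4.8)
The plan is to exhibit, for any causal dropout model in which $Z_i$ and $Z_j$ are adjacent in the GRN, a second model that is observationally equivalent (i.e., induces the same distribution over the observed variables $\X$, equivalently over $\Z$ together with the zero-pattern indicators $\R$) but differs precisely in whether the edges $Z_i\rightarrow R_j$ and $Z_j\rightarrow R_i$ are present. Since by \cref{Eq:Xi_from_Zi_and_Di} the observables are a deterministic function of $(\Z,\D)$ and $R_i=\mathbbm{1}(X_i=0)$, it suffices to show that one cannot tell from $p(\X)$ whether a $Z_i\rightarrow R_j$ edge was used in generating the data. The key observation is that when $Z_i$ and $Z_j$ are already adjacent in the GRN, there is an \emph{active path} between $Z_i$ and $R_j$ that cannot be blocked: namely $Z_i \,\text{---}\, Z_j \rightarrow D_j \rightarrow R_j$ (or $Z_i \,\text{---}\, Z_j \rightarrow R_j$ directly), since $Z_j$ is never observed and hence never in any conditioning set, and $D_j$ is latent as well. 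Consequently $Z_i \not\indep R_j$ given any admissible conditioning set \emph{regardless} of whether the extra edge $Z_i\rightarrow R_j$ is present, so no CI test over the observables can detect it.

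Concretely, I would proceed as follows. First, fix an arbitrary CDM $M_1$ satisfying \hyperref[A1]{\textit{(A1)}}, \hyperref[A2]{\textit{(A2)}} in which $Z_i - Z_j$ is an edge of the GRN and, say, $Z_i \rightarrow R_j$ is present (routed through $D_j$, i.e., $Z_i$ is an additional parent of $D_j$); the case where it is absent is symmetric. Second, construct $M_2$ by deleting that edge and re-choosing the conditional $p(D_j \mid \mathrm{pa}(D_j))$ so that the induced marginal/conditional behavior of $D_j$ given the \emph{remaining} parents matches the $M_1$-law of $D_j$ after marginalizing out $Z_i$'s direct contribution — this is possible because $D_j$ is a latent Bernoulli variable and we are free to pick any conditional law for it, and because $Z_i$'s influence on everything \emph{observable} downstream of $D_j$ is already mediated by $Z_j$ (which is correlated with $Z_i$ via the GRN edge) and by $X_j=(1-D_j)Z_j$. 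Third, verify that $M_1$ and $M_2$ agree on $p(\Z)$ (untouched, since we only modified the dropout layer, and \hyperref[A2]{\textit{(A2)}} forbids $D$'s affecting $Z$'s) and on $p(\D \mid \Z)$, hence on $p(\X)$ via the deterministic map in \cref{Eq:Xi_from_Zi_and_Di}. Fourth, conclude that the presence or absence of $Z_i\rightarrow R_j$ is not a function of $p(\X)$, and by symmetry the same holds for $Z_j\rightarrow R_i$.

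For the construction in the second step, the cleanest route is to make $D_j$ in $M_2$ depend only on $Z_j$ (and whatever other parents it had besides $Z_i$) via the conditional $p_{M_2}(D_j\mid Z_j,\cdots) = \mean_{Z_i}\!\left[\,p_{M_1}(D_j \mid Z_i, Z_j, \cdots)\mid Z_j,\cdots\right]$, the posterior-averaged dropout probability. Then for every fixed configuration of the non-$Z_i$ parents, the law of $D_j$ conditional on those parents is identical in $M_1$ and $M_2$ after integrating $Z_i$ out; since $X_j,R_j$ are deterministic in $(D_j,Z_j)$, and $Z_i$'s only other observable footprint is through the GRN (identical in both models), the two models induce the same $p(\X)$. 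The main obstacle I anticipate is the bookkeeping in this averaging step when $D_j$ has \emph{other} non-self parents besides $Z_i$ and $Z_j$ (which is allowed once \hyperref[A3]{\textit{(A3)}} is dropped): one must check that conditioning on those does not re-open a distinguishing path, and that the averaged conditional is still a valid dropout mechanism consistent with \hyperref[A1]{\textit{(A1)}}--\hyperref[A2]{\textit{(A2)}}. This is handled by noting that all such extra parents are themselves in $\Z$ and hence their joint law with $Z_i$ is held fixed across $M_1$ and $M_2$, so the averaging is well-defined and preserves every observable marginal; faithfulness of $M_2$, if desired, holds generically after an arbitrarily small perturbation. An illustrative two- or three-gene example (mirroring \cref{example:X123_chain}) can be given to make the equivalence transparent, but the argument above already establishes the claim in full generality.
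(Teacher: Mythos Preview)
Your opening paragraph already contains the essential argument the paper uses. The paper derives the proposition directly from \cref{prop:testability_of_ci_queries}: to decide whether $Z_i \rightarrow R_j$ exists, one would need a testable CI query separating $Z_i$ from $R_j$; any separating set must contain $Z_j$ (to block the always-present path $Z_i \,\text{---}\, Z_j \rightarrow R_j$); but by the testability rules, conditioning on $Z_j$ forces $R_j$ into the conditioning set, which is impossible since $R_j$ is one of the two estimands. One small refinement to your phrasing: it is not that ``$Z_j$ is never observed'' in general---the framework does allow conditioning on $Z_j$ via $R_j=0$---but rather that in \emph{this particular} query $R_j$ is the target and so cannot simultaneously sit in the conditioning set.

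The remainder of your proposal, constructing an explicit $M_2$ with $p_{M_2}(\X)=p_{M_1}(\X)$, aims at a strictly stronger statement than the paper proves (full distributional unidentifiability rather than unidentifiability from testable CI relations), and the construction as written does not work. Your third step claims that $M_1$ and $M_2$ agree on $p(\D\mid\Z)$, but the averaging $p_{M_2}(D_j\mid Z_j,\dots)=\mean_{Z_i}[\,p_{M_1}(D_j\mid Z_i,Z_j,\dots)\mid Z_j,\dots\,]$ only matches the \emph{marginal} $p(D_j\mid Z_j,\dots)$, not $p(D_j\mid \Z)$. Since $Z_i$ is observable through $X_i$ whenever $X_i\neq 0$, this difference leaks into $p(\X)$: for nonzero $x_i,x_j$, the density $p(X_i=x_i,X_j=x_j)$ carries the factor $p_{M_1}(D_j=0\mid Z_i=x_i,Z_j=x_j)$ in $M_1$ but $p_{M_2}(D_j=0\mid Z_j=x_j)$ in $M_2$, and these generically differ as functions of $x_i$. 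Any attempt to repair this would at minimum require also perturbing $p(Z_j\mid\text{parents})$, contradicting your ``$p(\Z)$ untouched'' step. For the proposition as stated and proved in the paper, your first paragraph is already the right argument; the explicit-model construction should be dropped.
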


This is because to condition on $Z_j$, $R_j$ must also be conditioned but it is already in the estimands. Due to this, we only focus on identifying the non-adjacent pairs of variables in GRN, i.e., to recover the conditional independencies among $\Z$. We have the following results:

\begin{restatable}[Observed independencies are correct independencies]{theorem}{THMOBSERVEDINDEPENDENCIESARECORRECT}\label{thm:observed_are_correct}
Assume \hyperref[A1]{\textit{(A1)}}, \hyperref[A2]{\textit{(A2)}}, \hyperref[A4]{\textit{(A4)}}. $\forall i,j,\Sb$,
\begin{equation} \label{Eq:observed_are_correct}
X_i \indep X_j | \X_\Sb                                 \stackrel{\mathclap{\normalfont\mbox{\circled{1}}}}{\Rightarrow}
X_i \indep X_j | \Z_\Sb,\R_\Sb=\mathbf{0}               \stackrel{\mathclap{\normalfont\mbox{\circled{2}}}}{\Rightarrow}
Z_i \indep Z_j | \Z_\Sb,\R_{\Sb\cup \{i,j\}}=\mathbf{0} \stackrel{\mathclap{\normalfont\mbox{\circled{3}}}}{\Rightarrow}
Z_i \indep Z_j | \Z_\Sb.
\end{equation}
\end{restatable}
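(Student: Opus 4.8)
The plan is to establish the three implications \circled{1}, \circled{2}, \circled{3} of the displayed chain separately (assuming, without loss of generality, that $i,j\notin\Sb$, since otherwise every statement in the chain is trivially true). Implication \circled{2} is a pure fact about conditioning, whereas \circled{1} and \circled{3} are d-separation arguments driven by the Markov and faithfulness assumptions in \hyperref[A1]{\textit{(A1)}}. As a preliminary I would record the structural consequence of \hyperref[A2]{\textit{(A2)}} on $G$: the variables $\Z$ form a sub-DAG (the GRN), each $D_i$ has parents only among $\Z$, and each $X_i$ and each $R_i$ is a childless vertex whose parent set is exactly $\{Z_i,D_i\}$. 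Thus $X_k$ and $R_k$ are leaves and the only descendants of $D_k$ are $X_k$ and $R_k$; I would state this as a one-line lemma so that the later path surgery is clean.

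For \circled{2}, no graph is needed. On the event $\{\R_{\Sb\cup\{i,j\}}=\mathbf 0\}$ we have $R_i=R_j=0$, hence $D_i=D_j=0$, hence $X_i=Z_i$ and $X_j=Z_j$ by~\cref{Eq:Xi_from_Zi_and_Di}; moreover $\{R_i=0\}=\{X_i\neq 0\}$ and $\{R_j=0\}=\{X_j\neq 0\}$. Fix a value $z_\Sb$ of $\Z_\Sb$ consistent with $\R_\Sb=\mathbf 0$. The hypothesis says the conditional law of $(X_i,X_j)$ given $\{\Z_\Sb=z_\Sb,\R_\Sb=\mathbf 0\}$ is a product, and further conditioning on the rectangle event $\{X_i\neq 0\}\cap\{X_j\neq 0\}$ keeps it a product; hence $X_i\indep X_j$ given $\{\Z_\Sb=z_\Sb,\R_\Sb=\mathbf 0,R_i=0,R_j=0\}$, on which event $X_i=Z_i$ and $X_j=Z_j$, giving $Z_i\indep Z_j\mid\Z_\Sb,\R_{\Sb\cup\{i,j\}}=\mathbf 0$. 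For \circled{3} I would use the contrapositive: if $Z_i\not\indep Z_j\mid\Z_\Sb$, faithfulness yields a path $\pi$ between $Z_i$ and $Z_j$ active relative to $\Z_\Sb$; since each $R_k$ is a leaf it cannot be an active collider, so $\pi$ contains no $R_k$ at all, and enlarging the conditioning set by $\R_{\Sb\cup\{i,j\}}$ cannot block it. Thus $Z_i\not\dsep Z_j\mid\Z_\Sb,\R_{\Sb\cup\{i,j\}}$, the Markov property gives $Z_i\not\indep Z_j\mid\Z_\Sb,\R_{\Sb\cup\{i,j\}}$, and \hyperref[A4]{\textit{(A4)}} (with $A=Z_i$, $B=Z_j$, $\C=\Z_\Sb$, index set $\Sb\cup\{i,j\}$) upgrades this to $Z_i\not\indep Z_j\mid\Z_\Sb,\R_{\Sb\cup\{i,j\}}=\mathbf 0$. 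This mirrors the $\Leftarrow$ direction of~\cref{thm:correct} but uses only \hyperref[A1]{\textit{(A1)}} and \hyperref[A4]{\textit{(A4)}}, since no children $X_i,X_j$ enter the conditioning statement.

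The substantive step is \circled{1}. By faithfulness it is enough to prove the graphical implication $X_i\dsep X_j\mid\X_\Sb \Rightarrow X_i\dsep X_j\mid\Z_\Sb\cup\R_\Sb$ in $G$, after which the Markov property and restriction to the event $\R_\Sb=\mathbf 0$ finish it. I would argue this by contraposition: let $\pi$ be a path between $X_i$ and $X_j$ active relative to $\Z_\Sb\cup\R_\Sb$ and show it is also active relative to $\X_\Sb$. First, $\pi$ cannot visit any $X_k$ or $R_k$ with $k\in\Sb$: such a vertex is a leaf with parent set $\{Z_k,D_k\}$, so $\pi$ passing through it uses the edge $Z_k\to X_k$ (or $Z_k\to R_k$) and thereby makes $Z_k$ a non-collider on $\pi$; but $Z_k\in\Z_\Sb$ would block $\pi$, and $Z_k$ is not an endpoint since $i,j\notin\Sb$ --- contradiction. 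Hence $\pi$ meets $Z_k$, $k\in\Sb$, only as a collider. Now compare the two conditioning sets: (i) no non-collider of $\pi$ lies in $\X_\Sb$, because $\pi$ avoids all $X_k$, $k\in\Sb$; and (ii) every collider $V$ of $\pi$ retains a descendant in $\X_\Sb$ --- activity relative to $\Z_\Sb\cup\R_\Sb$ gives $V$ a descendant $W\in\Z_\Sb\cup\R_\Sb$, and if $W=Z_k$ then $Z_k\to X_k$ makes $X_k\in\X_\Sb$ a descendant of $V$, while if $W=R_k$ then the directed path from $V$ into $R_k$ passes through $Z_k$ or $D_k$, each of which points to $X_k$ (here \hyperref[A2]{\textit{(A2)}} is used), so again $X_k\in\X_\Sb$ is a descendant of $V$; a collider at $Z_k$ itself has descendant $X_k$. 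Therefore $\pi$ is active relative to $\X_\Sb$, which is the contrapositive we wanted.

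I expect \circled{1} to be the only real obstacle, specifically the two bookkeeping claims in its proof: that an active path relative to $\Z_\Sb\cup\R_\Sb$ avoids the leaves $X_k,R_k$ with $k\in\Sb$, and that every one of its colliders keeps a descendant in $\X_\Sb$ after swapping $\Z_\Sb\cup\R_\Sb$ for $\X_\Sb$. Both rely on the layered structure $\Z\to\D\to\{\X,\R\}$ with $\X,\R$ leaves that \hyperref[A2]{\textit{(A2)}} guarantees, which is why I would isolate that structure first. The remaining pieces are short; the only point to check in \circled{2} and \circled{3} is the harmless measure-theoretic step of restricting a conditional independence down to the positive-probability event $\R_\Sb=\mathbf 0$ (and then to $\{R_i=0,R_j=0\}$), which is legitimate whenever that event has positive probability and otherwise the corresponding statement is vacuous.
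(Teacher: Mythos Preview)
Your proof is correct. The main departure from the paper is in \circled{2}: the paper argues graphically via contrapositive (starting from $Z_i\not\dsep Z_j\mid\Z_\Sb,\R_{\Sb\cup\{i,j\}}$, it extends an active path out to $R_i,R_j$ and then invokes faithfulness and \hyperref[A4]{\textit{(A4)}} to land at $X_i\not\indep X_j\mid\Z_\Sb,\R_\Sb=\mathbf 0$), whereas you observe directly that on $\{R_i=R_j=0\}$ one has $X_i=Z_i$, $X_j=Z_j$, and that restricting a product law to a rectangle event in the two coordinates preserves the product structure. Your route is more elementary and, notably, shows that \circled{2} requires neither faithfulness nor \hyperref[A4]{\textit{(A4)}} --- it is purely a consequence of the deterministic relation $X_i=(1-D_i)Z_i$. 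For \circled{1} and \circled{3} the two proofs run the same d-separation strategy; the paper works in the compact $\Z\cup\R$ model while you stay in the full model with $\D,\X$, but the path manipulations are parallel.

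Two small slips to clean up. In \circled{3} you have the Markov/faithfulness labels interchanged: not-independent $\Rightarrow$ not-d-separated is the contrapositive of the Markov condition, and not-d-separated $\Rightarrow$ not-independent is faithfulness. And your structural lemma's clause ``each $D_i$ has parents only among $\Z$'' is stronger than what \hyperref[A2]{\textit{(A2)}} alone guarantees in the general model of~\cref{app:on_a_general} (cf.\ the edge $R_3\to R_2$ in~\cref{fig:identify_dropout_example}); fortunately you never use that clause --- your \circled{1} argument relies only on $X_k,R_k$ being leaves with parent set $\{Z_k,D_k\}$, which is definitional.
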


\begin{wrapfigure}[13]{r}{0.2\textwidth}
\begin{tikzpicture}[scale=0.65, line width=0.5pt, inner sep=0.2mm, shorten >=.1pt, shorten <=.1pt]
		\draw (2, 0) node(R2)      {$R_2$};
		\draw (0, 1.732) node(R1)  {$R_1$};
		\draw (3, 1.732) node(R3)  {$R_3$};
        \draw (0, 3.464) node(Z1)  {$Z_1$};
		\draw (2, 3.464) node(Z2)  {$Z_2$};
        \draw (4, 3.464) node(Z3)  {$Z_3$};
		\draw[-arcsq] (Z1) -- (R1);
        \draw[-arcsq] (Z2) -- (R2); 
        \draw[-arcsq] (Z3) -- (R3); 
        \draw[-arcsq] (Z1) -- (R2); 
        \draw[-arcsq] (Z2) -- (R3); 
        \draw[-arcsq] (R3) -- (R2); 
        \draw[-arcsq] (Z1) -- (Z2); 
        \draw[-arcsq] (Z2) -- (Z3); 
\end{tikzpicture}

\caption{An illustrative example for dropout mechanism identification.} \label{fig:identify_dropout_example}
\end{wrapfigure}
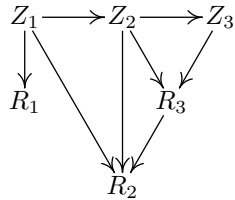\cref{thm:observed_are_correct} provides insights that align with the previous~\cref{prop:bias_dense}: although the conditional independencies inferred from observational data with dropouts are ``rare'', they are accurate. The first three terms are corresponding to~\cref{prop:bias_dense}, \cref{def:general_procedure}, and \cref{def:general_2} respectively, and the fourth is oracle. When the structural assumption \hyperref[A3]{\textit{(A3)}} is in the play, the converse directions of $\circled{2}$ and $\circled{3}$ also hold, as stated in~\cref{thm:correct}. However, in the general setting without \hyperref[A3]{\textit{(A3)}}, these implications hold only in one direction. Progressing from left to right, the testing approaches provide increasingly tighter bounds on the correct underlying conditional independence relations,\footnote{As for how tight it theoretically can be, see~\cref{subsec:on_general_causal_dropout_model}.} while also in empirical terms, they are less data-efficient due to the increased deletion of samples.

Consider when we have identified a non-adjacent pair $Z_i$ and $Z_j$, and the question arises: How can we determine the presence of causal relationships $Z_i\rightarrow R_j$ and $Z_j\rightarrow R_i$? Should we conduct a conditional independence (CI) test between $Z_i$ and $R_j$ as suggested in~\cref{prop:unidentifiability_of_dropout}? However, engaging in such a procedure would lead us astray. Let us examine~\cref{fig:identify_dropout_example} as an example. We have identified $Z_1$ and $Z_3$ as non-adjacent based on the conditional independence $Z_1\indep Z_3 | Z_2, \R_{{1,2,3}}=\mathbf{0}$. Then we test for causes of $R_3$. However, even if $Z_1 \rightarrow R_3$ does not exist, we cannot establish their independence: $Z_1\not\indep R_3 | Z_2, \R_{{1,2}}=\mathbf{0}$ due to the presence of the collider $R_2$. So, what should be our course of action? Surprisingly, the answer already lies before us: $Z_1\rightarrow R_3$ must not exist; otherwise, the independence $Z_1\indep Z_3 | Z_2, \R_{{1,2,3}}=\mathbf{0}$ would not have been identified initially, owing to the presence of the collider $R_3$. In other words, the non-adjacency of $Z_1\rightarrow R_3$ is not established through a CI test, but rather through logical deduction.

\looseness=-1
Finally, we have come to the result in~\cref{thm:identification_of_Ri_causes}. The sparse causal graphs identified from real-world data also suggests the sparse dropout mechanisms, which, from the empirical view, supports our \hyperref[A3]{\textit{(A3)}}.

\subsection{Motivation on the General Causal Dropout Model}\label{subsec:on_general_causal_dropout_model}

Generally speaking, we discover the causes of zero indicators $\R=\{R_i\}_{i=1}^p$ by also viewing them as random variables and conducting causal structure search among them. As is mentioned in~\cref{subsec:causal_dropout_model}, the causal model in~\cref{fig:causal_graphical_dropout} contains redundant deterministic edges and auxiliary variables just for convenience of notations. Now given the already built basic understanding, we proceed with an equivalent but more compact model, with only two parts of variables, $\Z$ and $\R$.

Given that $R_i = \mathbbm{1}(X_i = 0) = D_i \ \textsc{OR} \ \mathbbm{1}(Z_i = 0)$, it follows that the edge $Z_i \rightarrow R_i$ always exists. Notably, even in the presence of dropout that happens completely at random, i.e., $Z_i \indep D_i$ as shown in~\cref{example:dropout_fixed_CAR}, the edge $Z_i \rightarrow R_i$ remains, due to the $\mathbbm{1}(Z_i = 0)$ term. In this section, we drop \hyperref[A3]{\textit{(A3)}} and seek to empirically verify it, while retaining the assumptions \hyperref[A1]{\textit{(A1)}}, \hyperref[A2]{\textit{(A2)}}, \hyperref[A4]{\textit{(A4)}}, and \hyperref[A5]{\textit{(A5)}} (specifically, we continue to assume \hyperref[A2]{\textit{(A2)}}, i.e., dropouts do not affect genes underlying expressions).

To conduct causal structure search on variables $\Z \cup \R$, we need to identify d-separation patterns by performing CI tests. However, due to the definition of this causal model, some d-separation patterns naturally cannot be read off from the data. Specifically, the following rules should be followed:
\begin{proposition}[Testability of CI queries]\label{prop:testability_of_ci_queries} A CI query in form $A \indep \mkern-2.5mu ? B | \mathbf{C}$ is testable if and only if:
\begin{itemize}[noitemsep,topsep=-3pt]
\item[1.] $\{R_i : i\in [p] \text{ s.t. } Z_i \in \{ A, B\} \cup \C\} \subset \C$, and
\item[2.] $A \neq B$ and $A,B\not\in \C$,
\end{itemize}
where 1. means that any underlying $\Z$ variables involved in the test is testable (accessible) only when the corresponding $\R$ variable is also conditioned, i.e., keep only the non-zero samples. 2., together with 1., rules out the queries e.g., $R_i \indep \mkern-2.5mu ? R_j | Z_j$ or $Z_j \indep \mkern-2.5mu ? R_j$ (i.e., $R_j$ cannot be further conditioned).
\end{proposition}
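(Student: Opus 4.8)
The plan is to first pin down the meaning of ``testable'', then dispatch the two directions, treating the necessity of item~1 as the only real work. Throughout I will use the convention already adopted in~\cref{thm:correct} and~\cref{def:general_2} that an $\R$-variable placed in a conditioning set is fixed to $0$, and I will call a query $A\indep B\mid\C$ \emph{testable} when its truth value is a function of the sequenced-data law $p(\X)$ alone --- equivalently, when it is constant across all causal dropout models inducing that $p(\X)$. The one structural fact I will lean on is: because $R_i=\mathbbm{1}(X_i=0)$ and $X_i=Z_i$ on $\{R_i=0\}$, the law $p(\X)$ pins down $p(\Z_S,\X_{[p]\setminus S}\mid \R_S=\mathbf{0})$ for every $S\subseteq[p]$, but says nothing about $Z_i$ on the complementary event $\{R_i=1\}$ beyond its total mass --- this is precisely the unidentifiability illustrated by~\cref{example:logistic_unidentifiable} and~\cref{example:fixedCAR_unidentifiable}.

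For sufficiency I would argue as follows. Assume items~1 and~2 and set $S=\{i\in[p]:Z_i\in\{A,B\}\cup\C\}$. Item~1 says the query conditions on $\R_S=\mathbf{0}$, and item~2 says it is a genuine (non-degenerate) CI query. On the positive-probability event $\R_S=\mathbf{0}$ (positivity by \hyperref[A5]{\textit{(A5)}}), every $\Z$-variable occurring in the query equals the corresponding observed $X$-variable, and every $\R$-variable occurring in it is a deterministic function of $\X$; hence, after this substitution, $A\indep B\mid\C$ becomes a conditional independence statement among observed quantities on a fixed subpopulation, which is exactly what test-wise deletion reads off from $p(\X)$. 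So the query is testable.

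For necessity I would take the contrapositive. If item~2 fails the query is ill-posed ($A=B$, or an endpoint lies in $\C$, so the relation is vacuous), so suppose item~2 holds but item~1 fails, i.e.\ some $i_0$ has $Z_{i_0}\in\{A,B\}\cup\C$ while $R_{i_0}\notin\C$. Then the truth of $A\indep B\mid\C$ genuinely depends on the conditional law of $Z_{i_0}$ on the stratum $\{R_{i_0}=1\}$, which $p(\X)$ leaves undetermined. To turn this into a proof I would exhibit two causal dropout models with identical $p(\X)$ but opposite truth values of $A\indep B\mid\C$: start from any model realizing the required joint law on $\{R_{i_0}=0\}$, then alter only the conditional law of $Z_{i_0}$ given $R_{i_0}=1$ --- together with the edges incident to $Z_{i_0}$ that are invisible through $\X$ --- in the manner of~\cref{example:logistic_unidentifiable} and~\cref{example:fixedCAR_unidentifiable}, so as to switch the target (in)dependence on or off. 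The ``together with item~1'' clause then follows for free: for a query like $R_i\indep B\mid\C$ with $\Z$-type variables in $\C$ (for instance $R_i\indep R_j\mid Z_j$, or $Z_j\indep R_j$), item~1 would force the offending $R$-variable into $\C$, which item~2 forbids, so the condition-1 violation cannot be repaired.

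The step I expect to be the main obstacle is the model pair in the necessity direction. The perturbation of $Z_{i_0}$ on $\{R_{i_0}=1\}$ must simultaneously (i) leave $p(\X)$ literally unchanged, (ii) really flip the target (in)dependence given whatever else sits in $\C$, and (iii) keep both models inside the admitted class (acyclicity, plus whichever of~\hyperref[A1]{\textit{(A1)}}--\hyperref[A5]{\textit{(A5)}} are assumed here). My plan for (ii)--(iii) is to localize all the ambiguity to $Z_{i_0}$'s hidden stratum: choose the rest of the graph so that the truth value of $A\indep B\mid\C$ is a monotone function of a single free parameter of $p(Z_{i_0}\mid R_{i_0}=1,\cdot)$, use the two-node examples as the base case, and lift to a general conditioning set by taking conditional products so that nothing outside $Z_{i_0}$'s stratum moves.
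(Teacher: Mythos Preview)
The paper does not actually supply a separate proof of this proposition. It is stated in~\cref{subsec:on_general_causal_dropout_model} as a rule (``the following rules should be followed'') with the justification folded into the statement itself: the sentence after the two items explains that a $Z_i$ is accessible only once $R_i$ is conditioned (hence item~1), and that item~2 together with item~1 excludes degenerate queries like $R_i\indep\mkern-2.5mu?R_j\mid Z_j$ or $Z_j\indep\mkern-2.5mu?R_j$. That is the whole argument; the proposition functions more as a definition of which queries the framework admits than as a theorem to be proved.

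Your proposal is therefore far more ambitious than what the paper does. You formalize ``testable'' as \emph{identifiable from $p(\X)$} and then try to prove the biconditional, including a genuine necessity direction via matched model pairs. This is a legitimate and more rigorous reading, but it is a different project from the paper's, and the hard part you flag --- constructing two causal dropout models with identical $p(\X)$ but opposite truth values of $A\indep B\mid\C$ while preserving faithfulness and the other assumptions --- is exactly the step the paper never attempts. If you pursue it, be aware that faithfulness (part of~\hyperref[A1]{\textit{(A1)}}) is the fragile constraint: flipping a conditional independence on or off by tweaking $p(Z_{i_0}\mid R_{i_0}=1,\cdot)$ can easily create coincidental independencies elsewhere, so the ``monotone in a single free parameter'' localization you sketch will need a careful genericity argument. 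The paper simply does not engage with this; it takes the characterization as evident from the observational semantics of the model.
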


\cref{prop:testability_of_ci_queries} gives a sufficient and necessary condition for a CI query to be testable from observational data $\X$. Now we further give the criteria to asymptotically estimate these CI queries from data $\X$. For any CI query that is valid according to~\cref{prop:testability_of_ci_queries}, it can be estimated by:\looseness=-1
\begin{proposition}[Testing CI queries from data]\label{prop:test_ci_queries_from_data} For each variable involved in a CI query $A \indep \mkern-2.5mu ? B | \mathbf{C}$:
\begin{itemize}[noitemsep,topsep=-3pt]
\item[1.] If it is a $Z_i$ variable, either in the bivariate estimands $\{A,B\}$ or the conditioning set $\C$, the non-zero samples of the corresponding observational values are used in the test.
\item[2.] If it is an $R_i$ variable,
    \begin{itemize}[noitemsep,topsep=-3pt]
        \item[a)] If it is conditioned (i.e., $R_i \in \C$) \underline{and} the corresponding true variable is involved (i.e., $Z_i \in \{A,B\} \cup \C$), then it is conditioned by selecting non-zero samples, i.e., $R_i = 0$.
        \item[b)] Otherwise, all samples of the corresponding binary values of the variable $R_i$ are used.
    \end{itemize}
\end{itemize}
\end{proposition}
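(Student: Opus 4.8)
The plan is to pin down, in the infinite-sample limit, the exact (in)dependence statement that the prescribed procedure evaluates, and then to certify---via \cref{prop:testability_of_ci_queries} together with \hyperref[A1]{\textit{(A1)}}, \hyperref[A2]{\textit{(A2)}}, \hyperref[A4]{\textit{(A4)}}, \hyperref[A5]{\textit{(A5)}}---that this statement has the same truth value as the CI query $A \indep B \mid \C$ in the true model. Write $\mathcal{I}_Z = \{\, i\in[p] : Z_i \in \{A,B\}\cup\C \,\}$ for the indices whose latent expression variable occurs in the query. By the validity condition~1 of \cref{prop:testability_of_ci_queries}, $R_i \in \C$ for every $i \in \mathcal{I}_Z$; hence the sample selection dictated by item~1 (retain only samples with $X_i \neq 0$) and by item~2a (condition on $R_i = 0$) are mutually consistent and jointly amount to restricting to the event $\R_{\mathcal{I}_Z} = \mathbf{0}$, while item~2b induces no further deletion.

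The next step is to observe that on this event observable and latent quantities coincide exactly. From $X_i = (1-D_i)\,Z_i$ and $R_i = D_i \ \textsc{OR} \ \mathbbm{1}(Z_i = 0)$, the event $\{R_i = 0\}$ equals $\{D_i = 0,\ Z_i \neq 0\}$, on which $X_i = Z_i$ deterministically. Therefore, on $\R_{\mathcal{I}_Z} = \mathbf{0}$, replacing every occurrence of $Z_i$ (for $i\in\mathcal{I}_Z$) by the observed $X_i$ is an exact substitution; and every $R_j$ occurring in the query is by definition $\mathbbm{1}(X_j = 0)$, hence fully observable with no restriction of its own, exactly as item~2b prescribes. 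Consequently the procedure returns, in the population limit, the truth value of
\[
A \indep B \ \bigm|\ \C \setminus \{\, R_i : i\in\mathcal{I}_Z \,\},\ \R_{\mathcal{I}_Z} = \mathbf{0},
\]
and since conditioning on a variable that has been fixed to a constant is vacuous, this is just $A \indep B \mid \C$ evaluated on the sub-strata where $R_i = 0$ for all $i\in\mathcal{I}_Z$. Assumption \hyperref[A5]{\textit{(A5)}} ensures $p(\R_{\mathcal{I}_Z} = \mathbf{0}) > 0$, so the quantity is well-defined and consistently estimable from $\X$.

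It then remains to lift this restricted statement to the full query. One direction is immediate: $A \indep B \mid \C$ means independence holds at every value of the conditioning variables, in particular on the $\R_{\mathcal{I}_Z} = \mathbf{0}$ sub-strata, so the procedure reports independence. For the converse I would apply \hyperref[A4]{\textit{(A4)}} with $\Sb = \mathcal{I}_Z$: its contrapositive gives that $A \indep B \mid \C\setminus\{R_i : i\in\mathcal{I}_Z\},\ \R_{\mathcal{I}_Z} = \mathbf{0}$ implies $A \indep B \mid \C\setminus\{R_i : i\in\mathcal{I}_Z\},\ \R_{\mathcal{I}_Z}$, i.e.\ independence at all values of the suppressed $R$-variables, which is precisely $A \indep B \mid \C$. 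When an estimand is itself an $R$-variable one invokes \hyperref[A4]{\textit{(A4)}} for the corresponding $X$-variable, using that $R_i$ is a deterministic function of $X_i$. Chaining the two directions yields the claimed equivalence, hence consistency of the procedure.

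The step I expect to be the crux is item~2a: an $R_i$ that is conditioned on while its own $Z_i$ also appears in the query. Naively the query demands conditioning on $R_i$ at both of its values, but $Z_i$ is latent whenever $R_i = 1$, so only the $R_i = 0$ stratum is ever accessible. Showing that this is harmless is where the two ingredients pull their weight: \cref{prop:testability_of_ci_queries} is what makes $R_i \in \C$ obligatory, so that restricting to $R_i = 0$ is a legitimate specialization of the query rather than an ad-hoc deletion, and faithful observability \hyperref[A4]{\textit{(A4)}} is what guarantees that any dependence present somewhere in $\C$ already manifests on the $\R_{\mathcal{I}_Z} = \mathbf{0}$ part. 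A secondary point to verify carefully is that no type-2b variable $R_j$ conceals a latent obstruction; here one uses that $R_j = \mathbbm{1}(X_j = 0)$ holds on the entire sample space, not merely on the selected subpopulation.
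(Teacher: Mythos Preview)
The paper does not supply a proof for this proposition; it is presented as an operational recipe---how to realize a valid CI query from the observed $\X$---and the correctness of the resulting inferences is instead carried by the subsequent d-separation arguments (notably \cref{thm:observed_are_correct}), which work directly in the augmented graph on $\Z\cup\R$ rather than by lifting a restricted independence via \hyperref[A4]{\textit{(A4)}}.

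Your reading of the proposition as a consistency claim is reasonable, and the first half of your plan---identifying the population quantity computed by the procedure as the original query restricted to the event $\R_{\mathcal{I}_Z}=\mathbf{0}$, using that $X_i=Z_i$ there---is correct and cleanly argued. The gap is in the lifting step. Assumption \hyperref[A4]{\textit{(A4)}} is stated only for $A,B\in\X\cup\Z$ and $\C\subset\X\cup\Z$, but in the general model your residual conditioning set $\C\setminus\{R_i:i\in\mathcal{I}_Z\}$ may still contain type-2b indicators $R_j$ with $j\notin\mathcal{I}_Z$, and the estimands $A,B$ may themselves be $R$-variables; in either case \hyperref[A4]{\textit{(A4)}} does not literally apply. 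Your proposed patch---replace $R_i$ by $X_i$ ``because $R_i$ is a deterministic function of $X_i$''---does not license swapping one for the other inside a CI statement: coarsening an estimand or a conditioning variable can create or destroy dependence. The paper's own route around this, in the paragraph immediately following the proposition, is structural rather than functional: since each $R_i$ and $X_i$ share identical parent sets in the causal dropout graph, the d-separation patterns among $\R$ and among $\X$ coincide, and under faithfulness so do the CI relations. To make your argument go through you would need either that structural equivalence or a strengthened version of \hyperref[A4]{\textit{(A4)}} that admits $R$-variables.
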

Specifically, regarding \textit{2.b)} of~\cref{prop:test_ci_queries_from_data}, it coincides with a recent proposal to use only binarized counts, though with a different purpose, cell clustering~\citep{qiu2020embracing}. Note that in our CI estimation, $R_i$ can also be equivalently represented by full samples of the corresponding observation with zeros, $X_i$, without binarization. This is because of the faithfulness in \hyperref[A1]{\textit{(A1)}}, and the self-masking edges $Z_i\rightarrow X_i$, which renders the identical structures, and thus d-separation patterns, among $\R$ and $\X$. However, empirically, it is better to use $\X$, as magnitudes of non-zero values may capture more fine-grained dependencies. E.g., $Z_1\sim \operatorname{Unif}[-\pi,\pi]$, $Z_2=\cos Z_1$, and $D_i=\mathbbm{1}(R_i < 0)$. Then $X_1 \not\indep X_2$ (faithfulness holds), but this dependence in only captured by magnitudes of $X_1,X_2$ when $R_1=R_2=0$, not by the zero patterns represented as binary indicators, i.e., $R_1 \indep R_2$ (faithfulness violated).

With the essential infrastructure in place for reading off d-separations from data, we are now prepared to develop algorithms to recover the causal structure among $\Z \cup \R$. In addition to discovering the gene regulatory network within $\Z$, our objective extends to uncovering the causes for each $R_i$, i.e., to discover the dropout mechanisms. Since $R_i = D_i \ \textsc{OR} \ \mathbbm{1}(Z_i = 0)$, an edge $Z_i \rightarrow R_j$ exists if and only if $Z_i \rightarrow D_j$, i.e., gene $i$'s expression influences the dropout of gene $j$. However, we first notice that not all such dropout mechanisms are identifiable. See the natural upper bound in~\cref{prop:unidentifiability_of_dropout}.

\cref{prop:unidentifiability_of_dropout} can be readily derived from~\cref{prop:testability_of_ci_queries}: to determine the existence of $Z_j \rightarrow R_i$, it is necessary to condition on at least $Z_i$, as otherwise the path $Z_j - Z_i \rightarrow R_i$ is left open. However, since $R_i$ is part of the bivariate estimands, it is impossible to condition on $Z_i$, leading to a contradiction. Consequently, we can only identify edges of the form $Z_i \rightarrow R_j$ where $Z_i$ and $Z_j$ are non-adjacent, meaning there exists a subset $\Sb \subset [p]$ such that $Z_i \indep Z_j | \Z_\Sb$. The question then arises: since $\Z$ cannot be directed test on, how can we initially identify these non-adjacent pairs based only conditional independencies from the data with dropouts? Interestingly, we have the result in~\cref{thm:observed_are_correct}: observed independencies are correct independencies, though correct independencies may not always be observed.

\THMOBSERVEDINDEPENDENCIESARECORRECT*

\cref{thm:observed_are_correct} provides insights that align with the previous~\cref{prop:bias_dense}: although the conditional independencies inferred from observational data with dropouts are ``rare'', they are accurate. The leftmost hand side of~\cref{Eq:observed_are_correct} represents the direct testing of conditional independence using complete observational data, as mentioned in~\cref{prop:bias_dense}. The second term is precisely the approach proposed in~\cref{thm:correct}, involving the removal of samples containing zeros in the conditioning set. The third term is more stringent, as it requires the elimination of samples with zeros in all variables involved in the test, not just the conditioning set. When the structural assumption \hyperref[A3]{\textit{(A3)}} is in the play, the converse directions of $\circled{2}$ and $\circled{3}$ also hold, as stated in~\cref{thm:correct}. However, in the general setting without \hyperref[A3]{\textit{(A3)}}, these implications hold only in one direction: the observed independencies must be correct, but the correct independence may not be observed.

For instance, consider the graph $Z_1\rightarrow Z_2 \rightarrow Z_3$ with $\{Z_i\rightarrow R_i\}_{i=1}^{3}$, and our focus is on $Z_1 \indep Z_3 | Z_2$. If we introduce the additional edges ${Z_1,Z_3}\rightarrow R_2$, it serves as a counterexample to the inverse of $\circled{3}$. Similarly, if we add edges ${R_1,R_3}\rightarrow R_2$, it contradicts the inverse of $\circled{2}$. Progressing from left to right, the testing approaches provide increasingly tighter bounds on the correct underlying conditional independence relations, while also in empirical terms, they are less data-efficient due to the increased deletion of samples.

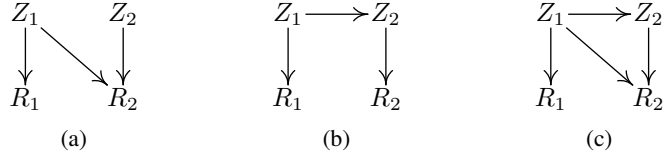
\begin{figure}[h]
\centering

\begin{minipage}{.25\textwidth}
\centering
\begin{tikzpicture}[scale=0.65, line width=0.5pt, inner sep=0.2mm, shorten >=.1pt, shorten <=.1pt]
		\draw (2, 1.732) node(R2)      {$R_2$};
		\draw (0, 1.732) node(R1)  {$R_1$};
        \draw (0, 3.464) node(Z1)  {$Z_1$};
		\draw (2, 3.464) node(Z2)  {$Z_2$};
		\draw[-arcsq] (Z1) -- (R1);
        \draw[-arcsq] (Z2) -- (R2); 
        \draw[-arcsq] (Z1) -- (R2); 
\end{tikzpicture}
\subcaption{}
\end{minipage}%
\begin{minipage}{.25\textwidth}
\centering
\begin{tikzpicture}[scale=0.65, line width=0.5pt, inner sep=0.2mm, shorten >=.1pt, shorten <=.1pt]
		\draw (2, 1.732) node(R2)      {$R_2$};
		\draw (0, 1.732) node(R1)  {$R_1$};
        \draw (0, 3.464) node(Z1)  {$Z_1$};
		\draw (2, 3.464) node(Z2)  {$Z_2$};
		\draw[-arcsq] (Z1) -- (R1);
        \draw[-arcsq] (Z2) -- (R2); 
        \draw[-arcsq] (Z1) -- (Z2); 
\end{tikzpicture}
\subcaption{}
\end{minipage}%
\begin{minipage}{.25\textwidth}
\centering
\begin{tikzpicture}[scale=0.65, line width=0.5pt, inner sep=0.2mm, shorten >=.1pt, shorten <=.1pt]
		\draw (2, 1.732) node(R2)      {$R_2$};
		\draw (0, 1.732) node(R1)  {$R_1$};
        \draw (0, 3.464) node(Z1)  {$Z_1$};
		\draw (2, 3.464) node(Z2)  {$Z_2$};
		\draw[-arcsq] (Z1) -- (R1);
        \draw[-arcsq] (Z2) -- (R2); 
        \draw[-arcsq] (Z1) -- (R2); 
        \draw[-arcsq] (Z1) -- (Z2); 
\end{tikzpicture}
\subcaption{}
\end{minipage}%

\vspace{-0.5em}
\caption{Three exemplar causal dropout graphs that are unidentifiable with each other.} \label{fig:identify_dropout_example_2}
\vspace{-0.5em}
\end{figure}

Having established that the influence of dropout mechanisms between adjacent gene pairs in a GRN is unidentifiable, i.e., the presence of an edge $Z_i\rightarrow R_j$ can only be identified when $Z_i$ and $Z_j$ are non-adjacent in the GRN, a subsequent question arises: for a non-adjacent pair of $Z_i$ and $Z_j$ but with $Z_i\rightarrow R_j$, can such a causal effect through dropout mechanisms be distinguished from the causal effect through gene interactions in GRN? The answer is still generally negative. To illustrate this, let us consider the three examples depicted in~\cref{fig:identify_dropout_example_2}. Cases (b) and (c) are unidentifiable due to the result in~\cref{prop:unidentifiability_of_dropout}, while (b) and (a) are also unidentifiable, as all testable CI relations remain the same, e.g., $R_1\not\indep R_2$ and $Z_1\not\indep Z_2 | R_1,R_2$. This finding aligns with the conclusions presented in~\cref{thm:observed_are_correct} and~\cref{subsec:app_existing_models_not_causal_dropout}, namely that the absence of a causal effect, whether mediated through dropout mechanisms or gene interactions, can be ascertained. However, determining the existence and precise nature of a causal effect generally remains elusive.

\subsection{Proofs of the Results in the General Causal Dropout Model in \cref{subsec:validate_structural_assumptions}}\label{subsec:on_general_causal_dropout_model_proofs}

We first show the proof of~\cref{thm:observed_are_correct}, i.e., observed independencies are correct independencies.

\begin{proof}[Proof of~\cref{thm:observed_are_correct}] We prove for each implication as follows:

$\stackrel{\mathclap{\normalfont\mbox{\circled{1}}}}{\Rightarrow}$: By faithfulness in \hyperref[A1]{\textit{(A1)}}, $X_i \indep X_j | \X_\Sb   \Rightarrow   R_i \dsep R_j | \R_\Sb$ in the compact causal dropout model with only two parts of variables, $\Z$ and $\R$, as shown in~\cref{subsec:on_general_causal_dropout_model}. By definition of d-separation, either there is no path connecting $R_i$ and $R_j$, or for any such path, there exists a non-collider in $\R_\Sb$, or a collider s.t. it and all its descendants are not in $\R_\Sb$. Now add $\Z_\Sb$ into the conditioning set. For each path, no collider could have itself or its descendants in $\Z_\Sb$, as otherwise it is still already opened by the descendants $\R_\Sb$. Therefore, $R_i \dsep R_j | \Z_\Sb,\R_\Sb$. By Markov condition in \hyperref[A1]{\textit{(A1)}}, $X_i \indep X_j | \Z_\Sb,\R_\Sb=\mathbf{0}$.

$\stackrel{\mathclap{\normalfont\mbox{\circled{2}}}}{\Rightarrow}$: Show by its contrapositive. By Markov condition in \hyperref[A1]{\textit{(A1)}}, $Z_i \not\indep Z_j | \Z_\Sb,\R_{\Sb \cup \{i,j\}} \Rightarrow Z_i \not\dsep Z_j | \Z_\Sb,\R_{\Sb \cup \{i,j\}}$, i.e., there must be a path connecting $Z_i$ and $Z_j$ that is open conditioning on $\Z_\Sb,\R_{\Sb \cup \{i,j\}}$. 1) If none of such paths has any of $R_i$ or $R_j$ on it, then consider one with two ends extended, i.e., $R_i\leftarrow Z_i \cdots Z_j \rightarrow R_j$: for any non-collider, including $Z_i$ and $Z_j$, it is not in $\Z_\Sb,\R_\Sb$; for any collider, either it or one of its descendants is in $\Z_\Sb,\R_\Sb$ and not in $\{R_i, R_j\}$, as otherwise there must exist another path with $R_i$ or $R_j$ on it, e.g., $Z_i\rightarrow R_i \leftarrow \cdots \leftarrow \text{ collider } \cdots Z_j$ which is already opened by $R_i$ and rest variables, contradiction. 2) Otherwise, consider a path with $R_i$ or $R_j$ on it. W.l.o.g., consider the segmentation of this path with one end extended, i.e., $R_i \cdots Z_j \rightarrow R_j$. Similarly, it is still open conditioning on $\Z_\Sb,\R_\Sb$. Therefore, we have $R_i \not\dsep R_j | \Z_\Sb, \R_\Sb$. By faithfulness in \hyperref[A1]{\textit{(A1)}} and faithful observability in \hyperref[A4]{\textit{(A4)}}, we have $X_i \not\indep X_j | \Z_\Sb, \R_\Sb=\mathbf{0}$.

$\stackrel{\mathclap{\normalfont\mbox{\circled{3}}}}{\Rightarrow}$: Also show by its contrapositive. By Markov condition in \hyperref[A1]{\textit{(A1)}}, $Z_i \not\indep Z_j | \Z_\Sb \Rightarrow Z_i \not\dsep Z_j | \Z_\Sb$. By \hyperref[A2]{\textit{(A2)}}, $\R_{\Sb\cup \{i,j\}}$ does not contain any non-collider on the open path connecting $Z_i$ and $Z_j$, and thus involving them into the conditioning set will still keep the paths open, i.e., $Z_i \not\dsep Z_j | \Z_\Sb,\R_{\Sb\cup \{i,j\}}$. By faithfulness in \hyperref[A1]{\textit{(A1)}} and faithful observability in \hyperref[A4]{\textit{(A4)}}, we have $Z_i \not\indep Z_j | \Z_\Sb, \R_{\Sb\cup \{i,j\}}=\mathbf{0}$.
\end{proof}

Then, to show the proof of~\cref{thm:identification_of_Ri_causes}, we will present the sufficient and necessary condition for observing $Z_i \indep Z_j | \Z_\Sb,\R_{\Sb\cup \{i,j\}}=\mathbf{0}$, and examine the identifiability of the causal effects (either through gene interactions or dropout mechanisms) under the condition.

\THMIDENTIFICATIONOFGENERAL*

\begin{proof}[Proof of~\cref{thm:identification_of_Ri_causes}]
We have shown a necessary condition for observing non-adjacent $Z_i, Z_j$ as in~\cref{thm:observed_are_correct}, i.e., they are truly non-adjacent in GRN. Now we give the sufficient condition (stronger): for at least one $\Z_\Sb$ that d-separates $Z_i$ and $Z_j$ in the GRN, for any variable involved, i.e., $\forall k \in \Sb \cup \{i, j\}$, $R_k$ is neither a common children of $Z_i, Z_j$, nor a descendant of a common children of $Z_i, Z_j$ (if any). In other words, at least one conditional independence between $Z_i, Z_j$ needs to be observed.\looseness=-1

We show this by contrapositive: if $Z_i, Z_j$ are truly non-adjacent in GRN but for all $\Z_\Sb$ that d-separates $Z_i$ and $Z_j$ in the GRN, the corresponding conditional independencies cannot be observed, i.e., $Z_i \dsep Z_j | \Z_\Sb$ but $Z_i \not\dsep Z_j | \Z_\Sb,\R_{\Sb\cup \{i,j\}}$, then for each of such $\Sb$, there must be an open path connecting $Z_i, Z_j$ in the form of $Z_i \rightarrow W \leftarrow Z_j$, where either $W\in \R_{\Sb\cup \{i,j\}}$, or there exists a $k \in \Sb \cup \{i, j\}$ s.t. $W$ is $R_k$'s ancestor. Otherwise, if the open path has more than one variable $W$, then at least one variable is not a collider and can be conditioned on to block the path, contradiction.

Therefore, if $Z_i$ and $Z_j$ are non-adjacent in the output by~\cref{def:general_2}, they must be truly non-adjacent and does not cause each other's dropout. Otherwise if $Z_i$ and $Z_j$ are adjacent in the output by~\cref{def:general_2}, they are either truly adjacent, or in a non-adjacent but particular case as shown above. In either case, the existences of $Z_i \rightarrow R_j$ and $Z_j \rightarrow R_i$ are unidentifiable, as illustrated in~\cref{prop:unidentifiability_of_dropout} and~\cref{fig:identify_dropout_example_2}(a).

\end{proof}

\section{Supplementary Experimental Details and Results}

\subsection{Linear SEM Simulated Data}\label{app:supp_simulation_experiments}

\begin{figure}[h!]
\centering
\subfloat{
    \includegraphics[width=0.76\textwidth]{img/results/cropped_legend.pdf}
}\\
\vspace{0.1em} %
\hspace{-0.8em} %
\subfloat{
    \includegraphics[width=0.45\textwidth]{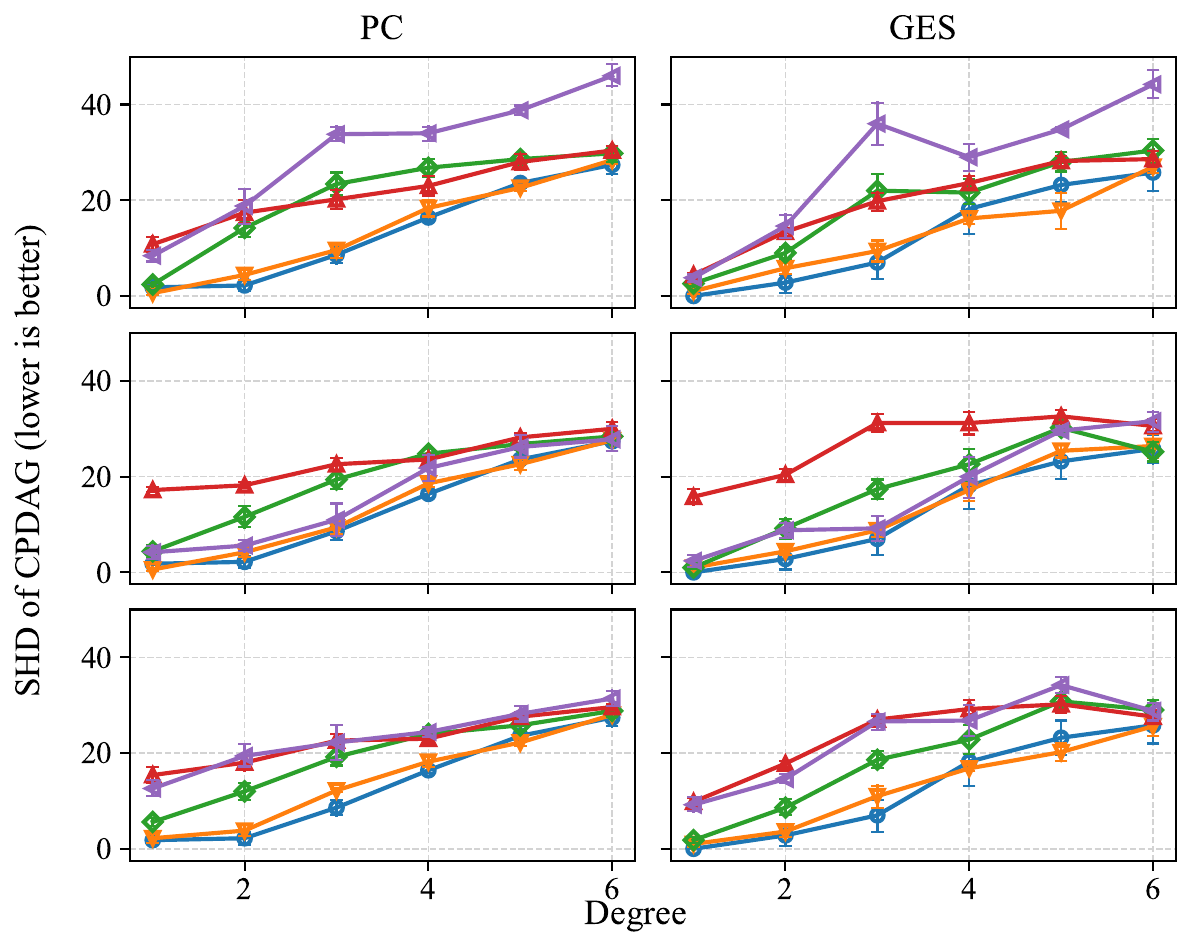}
}
\subfloat{
    \includegraphics[width=0.45\textwidth]{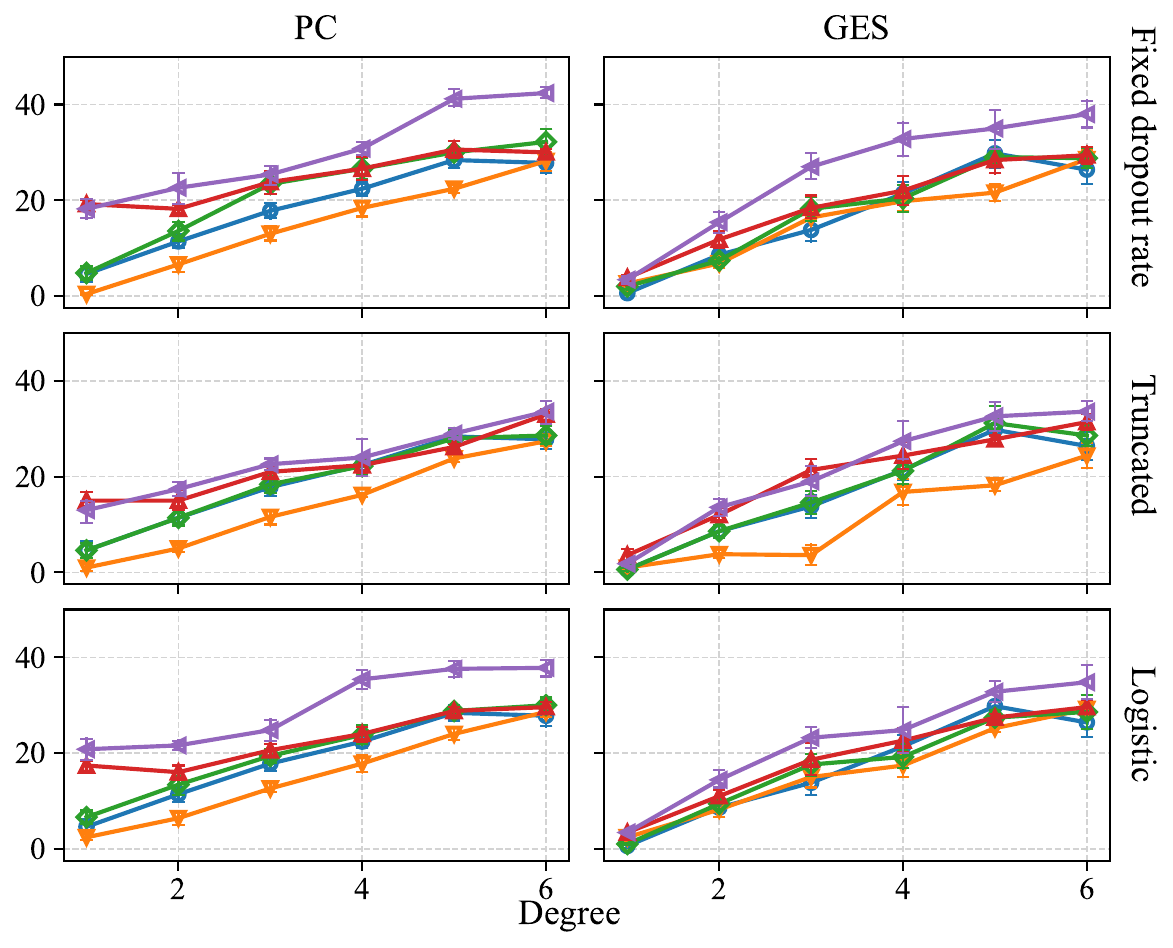}
}\\
\vspace{0.1em}
\small{\hspace{1.5em}(a) $|\Z|=10; \Z\sim$ Gaussian. \hspace{9.5em}(b) $|\Z|=10; \Z\sim$ Lognormal.}

\vspace{0.1em} %
\hspace{-0.8em} %
\subfloat{
    \includegraphics[width=0.45\textwidth]{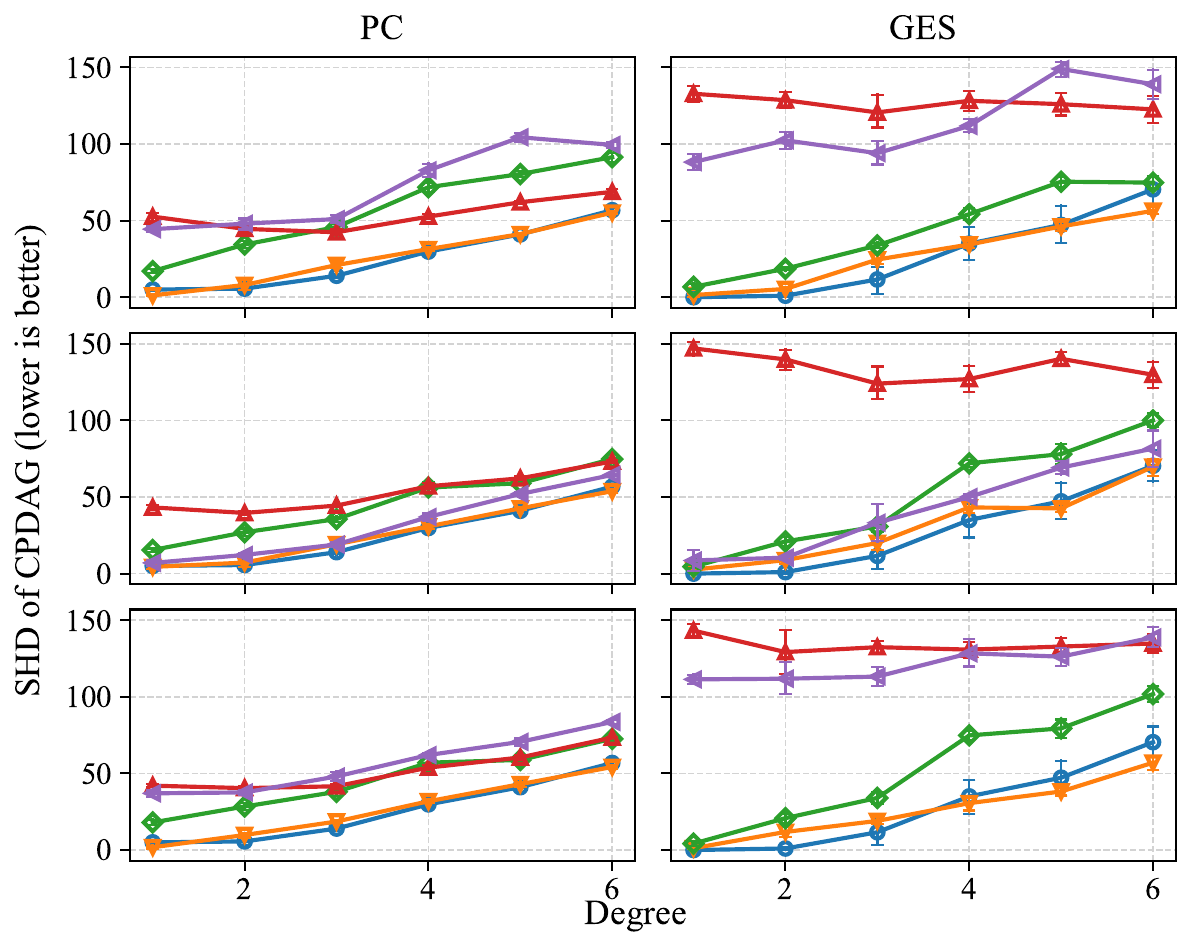}
}
\subfloat{
    \includegraphics[width=0.45\textwidth]{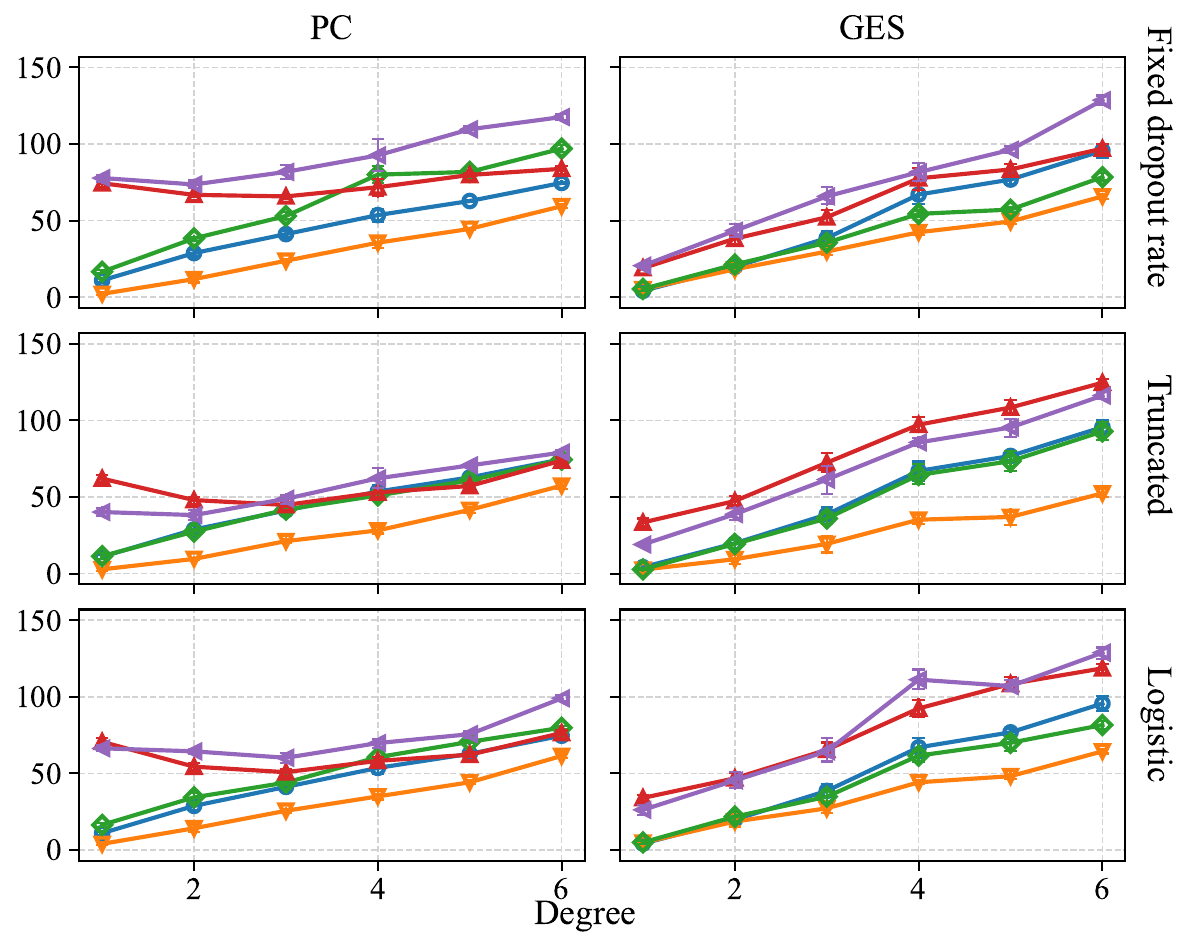}
}\\
\vspace{0.1em}
\small{\hspace{1.5em}(a) $|\Z|=20; \Z\sim$ Gaussian. \hspace{9.5em}(b) $|\Z|=20; \Z\sim$ Lognormal.}

\vspace{0.1em} %
\hspace{-0.8em} %
\subfloat{
    \includegraphics[width=0.45\textwidth]{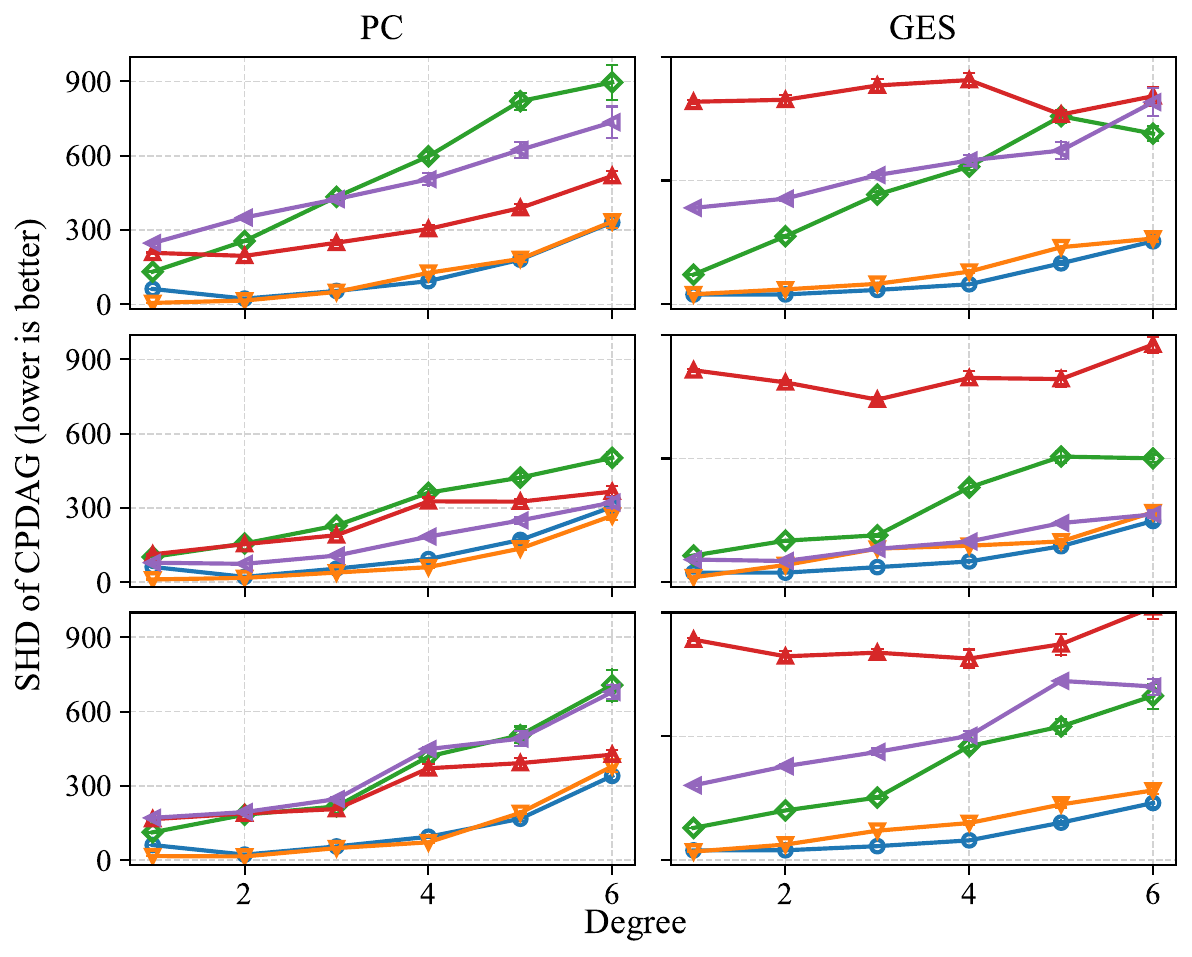}
}
\subfloat{
    \includegraphics[width=0.45\textwidth]{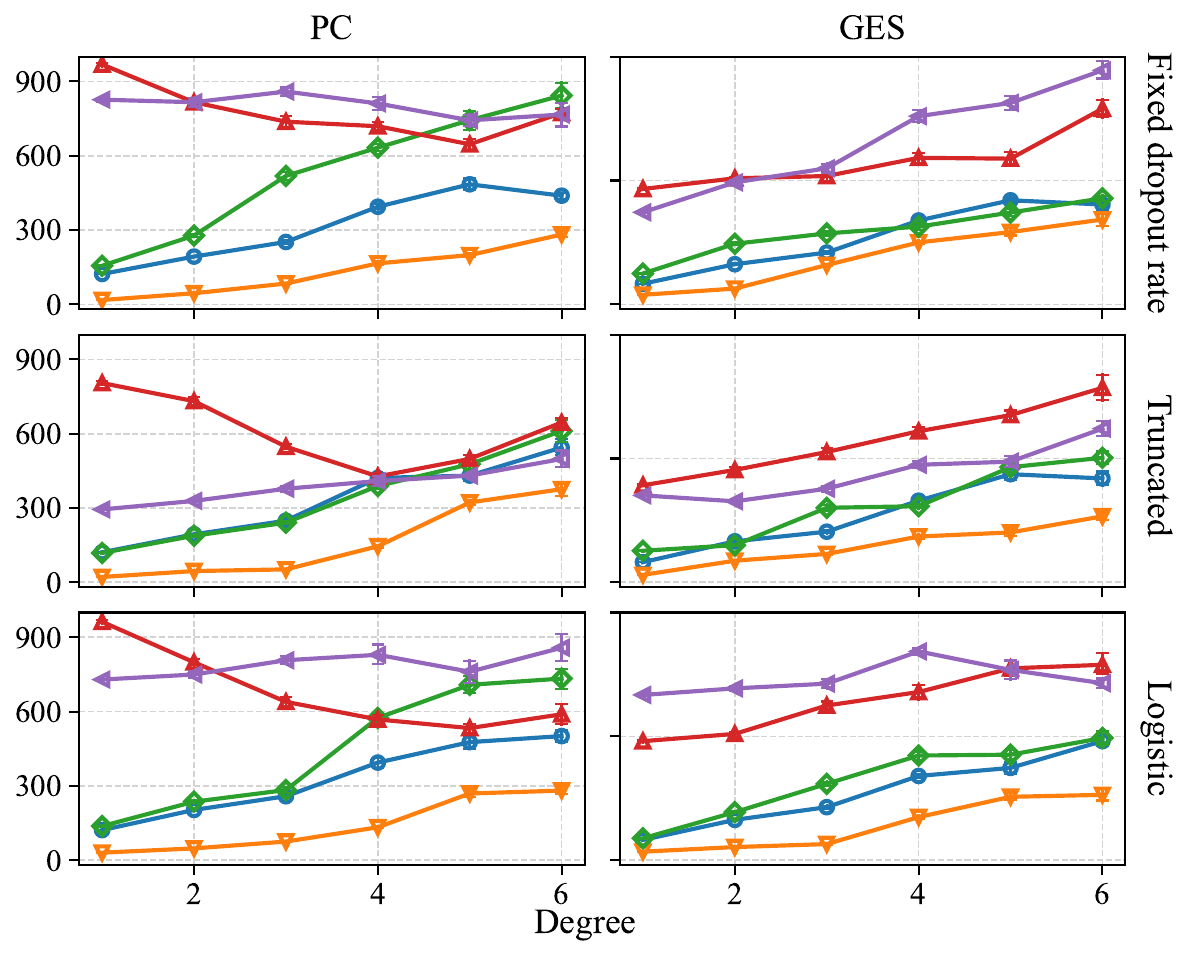}
}\\
\vspace{0.1em}
\small{\hspace{1.5em}(a) $|\Z|=100; \Z\sim$ Gaussian. \hspace{9.5em}(b) $|\Z|=100; \Z\sim$ Lognormal.}

\vspace{0.3em}
\caption{Experimental results (SHDs of CPDAGs) of $10$, $20$, and $100$ variables on simulated data.}
\label{fig:shd_cpdag_10_20nodes}
\end{figure}

Here we provide a more detailed elaboration on the simulation study and implementation as in~\cref{subsec:simulated_data}.

\textbf{Methods} \ \  We investigate the effectiveness of our method (denoted as testwise deletion method) by comparing it to other methods on simulated data. Specifically, we consider the following baselines: (1) MAGIC \citep{van2018recovering} that imputes the dropouts, and (2) mixedCCA \citep{yoon2020sparse} that applies mixed canonical correlation to estimate the correlation matrix of $\Z$. The imputed data and estimated correlation matrix by MAGIC and mixedCCA, respectively, can be used as input to a suitable causal discovery algorithm. Furthermore, we also consider another baseline that directly applies a causal discovery algorithm to full samples (without deleting or processing the dropouts), which correspond to the method in \cref{prop:bias_dense}. We also report the results assuming that the complete data of $\Z$ are available, denoted as Oracle*, in order to examine the performance gap between these methods and the best case (but rather impossible) scenarios without dropouts. We consider PC \cite{spirtes2000causation} and GES \citep{chickering2002optimal} as the causal discovery algorithms for these methods. We describe the hyperparameters of these methods in detail in~\cref{app:supp_simulation_experiments}.

\paragraph{Simulation setup} We randomly generate ground truth causal structures based on the Erd\"{o}s--R\'{e}nyi model \citep{erdos1959random} with $p\in\{10,20,30\}$ nodes and average degree of $1$ to $6$. On each setting five causal structures are sampled, corresponding to the confidence intervals in~\cref{fig:shd_cpdag_30nodes}. For each causal structure, we simulate $10000$ samples for the variables $\Z$ according to a linear SEM, where the nonzero entries of weighted adjacency matrix are sampled uniformly from $(-1, -0.25)\cup (0.25, 1)$, and the additive noises follow Gaussian distributions with means and standard deviations sampled uniformly from $(-1, 1)$ and $(1,2)$, respectively. In this case, $\Z$ are jointly Gaussian; we also consider another case by applying an exponential transformation to each variable $Z_i$, such that $\Z$ follow Lognormal distributions. Lastly, we apply three different types of dropout mechanisms on $\Z$ to simulate $\X$: (1) dropout with the fixed rates, (2) truncating low expressions to zero, and (3) dropout probabilistically determined by expression, which are described in Examples \ref{example:dropout_fixed_CAR}, \ref{example:dropout_truncated}, and \ref{example:dropout_prob_func}, respectively. For the first two categories of dropout mechanisms, we randomly set $30\%$ to $70\%$ of samples on each gene to get dropped out. For the third mechanism, we set $D_i \sim \operatorname{Bernoulli}(\operatorname{logit}(-1.5 Z_i -0.5))$.

\paragraph{PC} We use the implementation from the \texttt{causal-learn} package\footnote{https://github.com/py-why/causal-learn/blob/main/causallearn/search/ConstraintBased/PC.py}~\citep{zheng2023causal}. For speed consideration, we use FisherZ as the conditional independence test and set the significance level \texttt{alpha} to $0.05$. Note that theoretically FisherZ cannot be used, as after the log transformation, dropout, or zero-deletion distortion, the data is generally not joint Gaussian. However, since the result is already good, we did not further use the much slower but asymptotically consistent Kernel-based conditional independence test~\citep{zhang2012kernel}. It is also interesting to see why FisherZ still empirically works here, and examine under which condition the FisherZ test is still asymptotically consistent even after test-wise deletion (one trivial case is~\cref{def:general_2} where $\Z$ follows joint Gaussian and dropout happens with fixed rates). We leave this as an interesting future work.

\paragraph{GES} We use the Python implementation\footnote{https://github.com/juangamella/ges}. We modify the local score change by first calculating the test statistics of the corresponding testwise-deletion CI relation, and then rescale the partial correlation so that it produces the same test statistics in the full sample case. This partial correlation and the full sample size are then used to calculate the BIC score change. The $L0$ penalty is set to $1$.

\begin{figure}[h]
\centering
\subfloat{
    \includegraphics[width=0.70\textwidth]{img/results/cropped_legend.pdf}
}\\
\vspace{0.1em} %
\hspace{-0.8em} %
\subfloat{
    \includegraphics[width=0.39\textwidth]{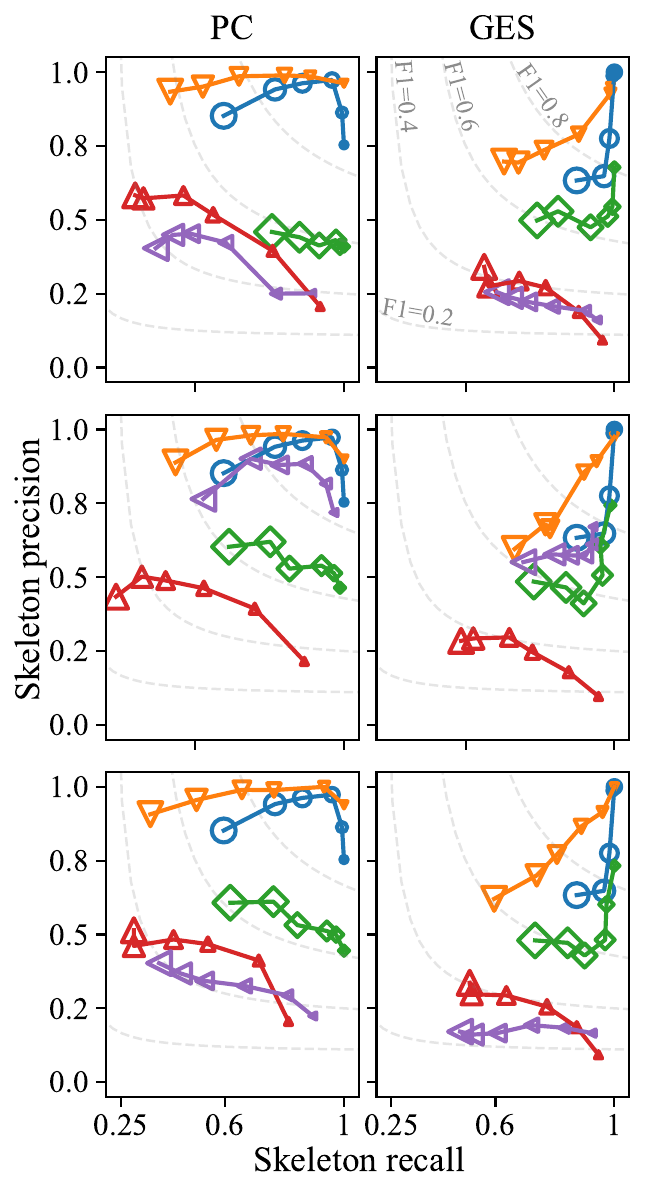}
}
\subfloat{
    \includegraphics[width=0.39\textwidth]{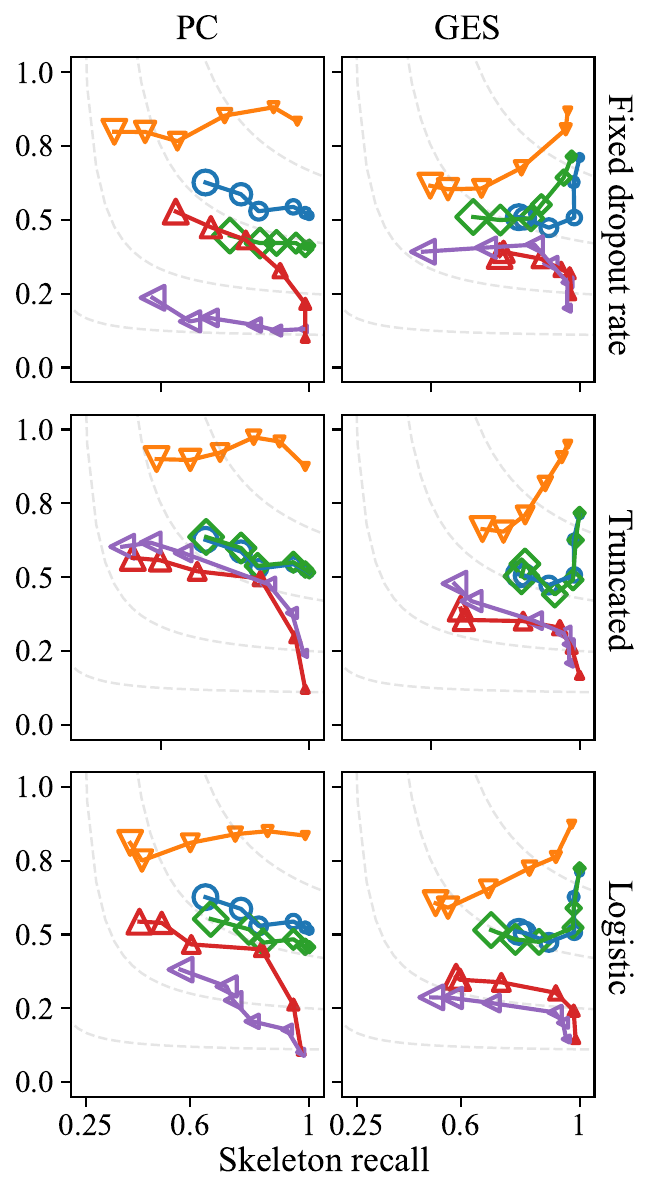}
}\\
\vspace{0.1em}
\small{\hspace{1.5em}(a) $|\Z|=30; \Z\sim$ Gaussian. \hspace{6.5em}(b) $|\Z|=30; \Z\sim$ Lognormal.}
\caption{Experimental results of 30 variables on simulated data (\cref{fig:shd_cpdag_30nodes}) using another metric: precisions and recalls of skeleton edges. Each subplot corresponds to a specific algorithm (PC or GES) on a specific base data distribution (Gaussian or Lognormal) with a specific dropout mechanism (Fixed dropout rate, Truncated, or Logistic). In each subplot, the 5 lines correspond to the 5 dropout-handling strategies (Oracle*, Testwise deletion, Full samples, MAGIC, or mixedCCA). On each line, the 6 markers with sizes from small to big correspond to the average graph degrees from 1 to 6.}
\label{fig:app_precision_recall_30_nodes}
\end{figure}

\looseness=-1
\textbf{Experimental results for \cref{fig:shd_cpdag_30nodes}} \ \
The results of $30$ nodes are provided in \cref{fig:shd_cpdag_30nodes}. It is observed that our proposed method (1) has a much lower SHD compared to the baselines across most settings, especially when PC is used as the causal discovery algorithm, and (2) leads to SHD close to the oracle performance. Furthermore, one also observes that directly applying causal discovery to full samples results in a poor performance in most cases, possibly because, as indicated by \cref{prop:bias_dense}, the estimated structures by this method may contain many false discoveries, leading to a low precision. It is worth noting that mixedCCA is specifically developed for jointly Gaussian variables $\Z$ and truncated dropout mechanism. Therefore, compared to the other settings, it performs relatively well in this setting, despite still having a higher SHD than our proposed method. This further validates the effectiveness of our method that outperforms existing parametric models with no model misspecification. One may wonder why test-wise deletion method outperforms Oracle* in the Lognormal setting. A possible reason is that, after zero deletion, many small values (especially for truncated and logistic dropout mechanisms) are removed; therefore, the resulting distribution is closer to Gaussian, which is more suitable for the specific CI test adopted for these methods.

\paragraph{Additional result 1: On different scales} As a complement to~\cref{fig:shd_cpdag_30nodes} (30 nodes), here we also present the results of SHDs of CPDAGs in different scales (small as 10 and 20 nodes, and large as 100 nodes), as shown in~\cref{fig:shd_cpdag_10_20nodes}. Impressively, our proposed test-wise zero deletion still performs best among baselines across various settings.

Moreover, to better echo the high dimensionality in real-world scRNA-seq data (e.g., around 20k genes for humans), we also conduct experiments in a super large scale: The ground-truth graph has 20,000 nodes and an average degree of 5. The sample size is 20,000. The dropout rate is 70\%. We used the FGES~\citep{ramsey2017million} implementation\footnote{https://github.com/cmu-phil/tetrad} on an Apple M1 Max with 10 cores. Here are the results:\looseness=-1

\begin{table}[h]
    \centering
    \caption{FGES results on a large-scale graph with 20,000 nodes.}
    \begin{tabular}{cccc}
    \toprule
& Oracle*	& 0del	& Full samples* \\
\midrule
Skeleton F1 score	& 0.93	& 0.90	& 0.14 \\
Time	& 1421s	& 1512s	& > 24 hrs \\
\toprule
    \end{tabular}
    \label{tab:fges_on_20k_genes}
\end{table}

\begin{figure}[h]
\centering
\subfloat{
    \small
    \begin{tabular}{C{60pt}C{9pt}C{9pt}C{9pt}C{9pt}C{9pt}C{9pt}C{9pt}C{9pt}C{9pt}C{9pt}C{9pt}}
         \toprule
           Dropout rate \% & \!\!\!0 & \multicolumn{2}{c}{\!\!\!10} & \multicolumn{2}{c}{\!\!\!30} &\multicolumn{2}{c}{\!\!\!50} & \multicolumn{2}{c}{\!\!\!70} & \multicolumn{2}{c}{\!\!\!90}  \\
           \cmidrule(r){2-2}\cmidrule(r){3-4}\cmidrule(r){5-6}\cmidrule(r){7-8}\cmidrule(r){9-10}\cmidrule(r){11-12}Method & \!\!\!Full & \!\!Full & \!\!\!0del& \!\!\!Full & \!\!\!\!0del& \!\!\!Full & \!\!\!\!0del& \!\!\!Full & \!\!\!\!0del& \!\!\!Full & \!\!\!\!0del \\
           \midrule
           Skel Precis&	\!\!\!.98&	\!\!.58	&\!\!\!\!\!\textbf{.98}&	\!\!\!.54&	\!\!\!\!\!\textbf{.98}&	\!\!\!.57&	\!\!\!\!\!\textbf{.98}&	\!\!\!.54	&\!\!\!\!\!\textbf{1.0}&	\!\!\!.44
           &	\!\!\!\!\!\textbf{1.0}\\
           Skel Recall&	\!\!\!1.0	&\!\!\textbf{.98}	&\!\!\!\!\!\textbf{.98}	&\!\!\!\textbf{.96}&\!\!\!\!\!.89&\!\!\!\textbf{.98}&\!\!\!\!\!.89&\!\!\!\textbf{.98}	&\!\!\!\!\!.60&	\!\!\!\textbf{.82}
           &\!\!\!\!\!.16\\
Skel F1	&\!\!\!.99&	\!\!.73	&\!\!\!\!\!\textbf{.98}	&\!\!\!.69	&\!\!\!\!\!\textbf{.93}&	\!\!\!.72	&\!\!\!\!\!\textbf{.93}&\!\!\!.70&	\!\!\!\!\!\textbf{.75}	&\!\!\!\textbf{.57}	&\!\!\!\!.\!27\\
PDAG SHD&	\!\!\!19&	\!\!56&\!\!\!\!\!\textbf{23}&	\!\!\!62&	\!\!\!\!\!\textbf{28}&	\!\!\!55&	\!\!\!\!\!\textbf{28}&	\!\!\!61&	\!\!\!\!\!\textbf{28}	&\!\!\!81&	\!\!\!\!\!\textbf{43}\\
\bottomrule
  \end{tabular}
}\\
\small (a)
\\
\vspace{0.1em} %
\hspace{-0.8em} %
\subfloat{
    \includegraphics[width=0.39\textwidth]{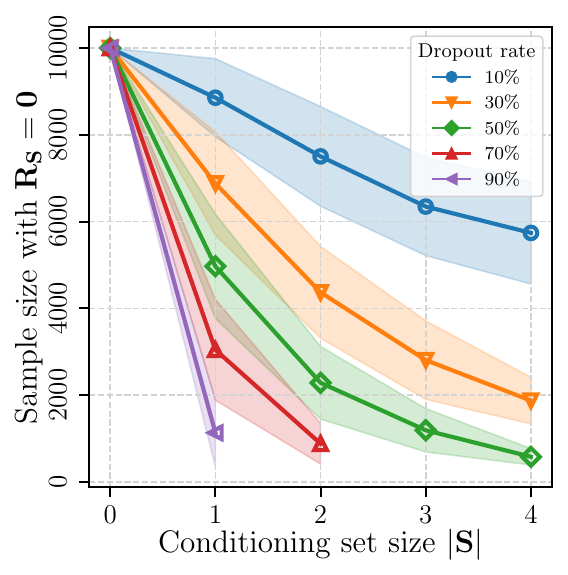}
}
\subfloat{
    \includegraphics[width=0.39\textwidth]{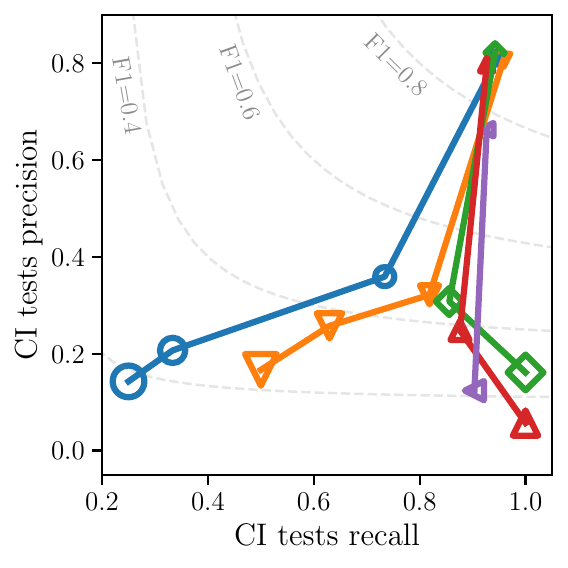}
}\\
\small{\hspace{2.5em}(b) \hspace{16.5em}(c)}
\caption{Experimental results to test the validity of assumption \hyperref[A5]{\textit{(A5)}} (effective sample size after test-wise deletion) with varying dropouts. On a dataset simulated with 30 variables and average graph degree of 3, truncated dropout mechanism is applied to produce 5 datasets with dropout rates $=$ 10\%, 30\%, 50\%, 70\%, and 90\%, respectively, shown by the 5 colors above. PC with test-wise deletion is run on each dataset. \textbf{(a)}: The structure identification accuracies w/wo test-wise deletion; \textbf{(b)}: The remaining sample size after test-wise zero deletion, for each CI test during the PC run, with conditioning set sizes $|\mathbf{S}|$ growing from 0 to a maximum of 4. The marker and the band show the mean and the standard deviation of sample sizes under a specific size $|\mathbf{S}|$; \textbf{(c)}: The precisions and recalls of all CI tests (d-separation / conditional independencies are counted as positives) during the PC run, where each colored line is on the data with a specific dropout rate, and each marker on the line is with a specific conditioning size $|\mathbf{S}|$. Marker sizes from small to big correspond to the conditioning size $|\mathbf{S}|$ from 0 to the bigger.}
\label{fig:app_varying_dropout_rates}
\end{figure}

As FGES directly on samples with dropouts has not ended within 24 hrs (since there are too many false dependencies), we reported the F1-score (0.14) given by the graph at the last iteration before we terminate it. We see that: 1) Zero-deletion is effective in recovering the true structures, and 2) Zero-deletion is fast, because it corrects the false dependencies and the graph in search is sparser than that on dropout-contaminated data.

Regarding the speed (particularly in the high dimensionality), it is worth noting that our proposed zero-deletion method itself is a straightforward technique that introduces no additional time complexity. As a simple and versatile tool, it can be seamlessly integrated into various other algorithms, and the whole time complexity is completely contingent on the specific algorithm it accompanies (e.g., PC, GES, or FGES, or GRNI-specific methods in~\cref{subsec:boolode_data}).

\paragraph{Additional result 2: On different metrics than SHD} As SHD metrics can sometimes be misleading, we also calculate another metric, precision and recall of skeleton edges, on the simulation setting of~\cref{fig:shd_cpdag_30nodes} (because PC and GES output edges without strengths, we can only calculate the precision and recall without thresholding curves). The results are shown in~\cref{fig:app_precision_recall_30_nodes}.

The results strongly validate our method's efficiency: we see from~\cref{fig:app_precision_recall_30_nodes} that the orange lines (Testwise deletion) are often above all other lines, i.e., achieving consistently highest precisions. This aligns with our~\cref{prop:bias_dense} and~\cref{thm:correct}: testwise deletion can correctly recover conditional independencies, i.e., reduce false dependencies, and thus reduce false positive estimated edges.

\paragraph{Additional result 3: On varying dropout rates} While in~\cref{sec:testability} we demonstrate the validity of assumption \hyperref[A5]{\textit{(A5)}} (effective sample size after testwise deletion) in a real-world examplar dataset (\cref{fig:samplesizes_during_PC}), here we also conduct a set of synthetic experiments with varying dropout rates to test the validity of assumption \hyperref[A5]{\textit{(A5)}}. On a dataset simulated with 30 variables and average graph degree of 3, we examined the truncated dropout mechanism with dropout rates varying from 10\% to 90\%.

\cref{fig:app_varying_dropout_rates} depicts the results of the remaining sample sizes in PC runs, the accuracies of CI tests, and the overall structure identification accuracies. These results support the validity of assumption \hyperref[A5]{\textit{(A5)}}: even when the dropout rate is as high as 70\%, the remaining samples sizes for CI tests are still considerably high to avoid Type-II errors, leading to consistently better SHDs and skeleton F1-scores.

\subsection{Realistic Synthetic and Curated Data under the BEELINE Framework}\label{app:supp_boolode_experiments}
Except for the linear SEM experiments, we also conduct synthetic experiments that are more `realistic' biological setting (closer to real scRNA-seq data) and more `competitive' (on algorithms specifically desgined for GRNI, not just PC and GES). Specifically, we use the BEELINE GRNI benchmark framework~\citep{pratapa2020benchmarking}. For data simulation, we use the BoolODE simulator that basically simulates gene expressions with pseudotime indices from Boolean regulatory models. We simulate all the 6 synthetic and 4 literature-curated datasets in~\citep{pratapa2020benchmarking}, each with 5,000 cells and a 50\% dropout rate, following all the default hyper-parameters. For algorithms, in addition to PC and GES, we have attempted to encompass all 12 algorithms benchmarked in~\citep{pratapa2020benchmarking}. However, due to technical issues, five of them were not executable, so we focused on the other seven ones: SINCERITIES, SCRIBE, PPCOR, PIDC, LEAP, GRNBOOST2, and SCODE. For each algorithm, five different dropout-handling strategies are assessed, namely, oracle*, testwise deletion, full samples, imputed, and binarization~\citep{qiu2020embracing}.

Our test-wise zero-deletion strategy, as a simple and versatile tool, is employed in these SOTA algorithms as follows. We divided the seven SOTA algorithms into three categories: 1) For algorithms that explicitly compute (time-lagged) partial correlations, including SCRIBE, PPCOR, and PIDC, the test-wise zero-deletion can be readily plugged in as in PC and GES; 2) For algorithms that entail gene subset analysis (e.g., regression among gene subsets) but without explicit partial information, including SINCERITIES and LEAP, in each subset analysis we delete samples containing zeros in corresponding genes; and 3) For algorithms that estimate the structure globally without any explicit subset components, including GRNBOOST2 and SCODE, we directly input data with list-wise deletion, i.e., delete samples containing zeros in any gene.\looseness=-1

Note that the evaluation setting is not in favor of our implementation (especially for PC and GES with testwise deletion), due to the following three reasons: 1) Model mis-specification for PC and GES, both of which assume i.i.d. samples while BoolODE simulated cells are heterogeneous with pseudotime ordering. These pseudotime indices are input to the SOTA algorithms as additional information, when applicable; 2) Model mis-specification for the CI tests. For speed consideration, in PC and GES we simply use FisherZ and BIC score which assume joint Gaussianity, instead of non-parametric CI tests and generalized scores; and 3) Other 7 SOTA algorithms output edges with strengths so we report the best F1-score through thresholding, while PC and GES do not output strengths.\looseness=-1

Even though, the results still strongly affirm the efficacy of our method's efficacy, as supported by the F1-scores of the estimated skeleton edges in~\cref{fig:beeline_boolode_results}. See analysis in~\cref{subsec:boolode_data}. Note that in~\cref{fig:beeline_boolode_results}, the minimum of the colorbar is anchored to 0.5 in order to visualize the differences among the majority of entries with a finer resolution -- among all the 450 entries there are 380 ones with values $\geq 0.5$. F1 values $<$0.5 yields a relatively poor performance and lacks significant reference value.

Notably, one may wonder why there is a seemingly ``performance degeneration'' from full data to zero deletion for PC algorithms in the four curated BEELINE datasets. In fact, this performance degeneration is more likely attributed to the PC algorithm itself, rather than the zero-deletion method. Among the 4 curated datasets, there is a performance degeneration on two of them: HSC and GSD. Notably however, it is exactly on these two datasets where full-samples performs even better than oracle*, i.e., dropouts even do ``good'', instead of harm. This contradiction with theory makes the further analysis of zero-deletion less meaningful on these 2 cases. While on the remaining 2 cases (F1-oracle*$\geq$F1-full samples), exactly we also have F1-0del$\geq$F1-full samples. We further extend this insight across all cases:

\begin{itemize}[noitemsep,topsep=-3pt]
    \item Among the 65 dataset-algorithm pairs where dropouts indeed do harm (F1-full sample$<$F1-oracle*), zero-deletion yields efficacy in dealing with dropouts (F1-0del$>$F1-full sample) on 47 (86.2\%) of them.

    \item Among the 17 dataset-algorithm pairs where dropouts unreasonably do good (F1-full sample$>$F1-oracle*), zero-deletion yields efficacy (F1-0del$>$F1-full sample) on only 4 (23.5\%) of them.
\end{itemize}

Interestingly, the above statistics reaffirms the efficacy of zero-deletion (\cref{thm:correct}): zero-deletion yields CI estimations (and thus GRNI results) more accurate/faithful to oracle* -- regardless of whether the oracle* is better or worse than full samples. As for the reasons why in some cases dropouts unreasonably performs even better, we will investigate more into details later.

\vspace{3em}
\subsection{Real-World Experimental Data}\label{app:supp_real_data_experiments}
We conduct experiments on three representative real-world single-cell gene expression datasets to verify the effectiveness of our proposed method, including data from Bone Marrow-derived Dendritic Cells~(BMDC)~\citep{dixit2016perturb}, Human Embryonic Stem Cells~(HESC)~\citep{chu2016single} and Chronic Myeloid Leukemia Cells~(CMLC)~\citep{adamson2016multiplexed}. The three datasets consist of 9843, 758, and 24263 observational samples respectively. Following the previous exploration of genetic relations, we determine the ``ground-truth'' GRN using the known human gene interactions from the TRRUST database~\citep{han2018trrust}.\looseness=-1

For the BMDC~\citep{dixit2016perturb} data, we follow the practice in~\citep{dixit2016perturb,saeed2020anchored} and only consider the subgraph on 21 transcription factors believed to be driving differentiation in this tissue type. As is shown in~\cref{fig:dixit}, using the deletion based CI test as in~\cref{def:general_procedure}, the PC estimated graph is much sparser, i.e., more true conditional independencies are identified, as shown by the red edges. Furthermore, more known interactions are identified, shown by the blue edges. Note that for PC's output, for several undirected edges in CPDAG, we manually orient them in the favored direction from known interactions.

\begin{table}[h]
\caption{SHD Results of PC on HESC and CMLC datasets, using different methods as in~\cref{fig:shd_cpdag_30nodes}.}
 \centering
\begin{tabular}{C{80pt}C{105pt}C{80pt}C{80pt}}
\toprule

Subgraph & \texttt{Testwise deletion}  & \texttt{Full samples} & \texttt{MAGIC}\\

\midrule
\multicolumn{4}{c}{\textsc{HESC}} \\

\midrule
12 nodes, 9 edges &\textbf{16} &  20 & 23 \\
16 nodes, 13 edges & \textbf{20} & 26 & 23 \\ 
16 nodes, 11 edges & \textbf{19} & 30 & 25\\
18 nodes, 17 edges & \textbf{26} & 38 & 29 \\
19 nodes, 16 edges & \textbf{25} & 34 & 28  \\

\midrule
\multicolumn{4}{c}{\textsc{CLMC}} \\
\midrule
 
16 nodes, 14 edges & \textbf{28}&	49&	41 \\
 17 nodes, 15 edges &\textbf{33}&	65&	46\\
18 nodes, 15 edges& \textbf{32}& 70 &47\\
 21 nodes, 16 edges & \textbf{41} &98&	42\\
23 nodes, 18 edges &\textbf{33} &106& 53\\

\toprule
\end{tabular}

\label{tab:SHD_HESC_CMLC}
\end{table}

For the rest two datasets HESC~\citep{chu2016single} and CMLC~\citep{adamson2016multiplexed}, to enable a more comprehensive experimental comparison while also considering the computation efficiency of causal discovery algorithms, we randomly we randomly sample five subgraphs on each dataset to evaluate the performances of different methods. The subgraphs are sampled with the following criteria: no loops, no hidden confounders, and each gene should have a zero sample size less than 500 for HESC, or a zero-rate less than 70\% for CMLC. The last condition is for the test power of deletion based CI tests: if a gene have too many zeros across all cells, then for any CI relation conditioned on it, the sample size after removing the zero samples will be too small, leading to an empirical bias. Thus we do not consider those genes. This is reasonable as if a gene is almost always zero across all cells (either not expressed or all dropped out), it may not be that important or actively involved in the genes interaction in these cells. On these sub-datasets, we conduct PC on full observational samples, on test-wise deleted samples, and on MAGIC imputed data, respectively. The SHD results are shown in \cref{tab:SHD_HESC_CMLC}. We notice that the graphs produced by \texttt{Testwise deletion} have the smallest SHDs, i.e., closest to the groundtruths. One may further be curious why the SHDs are generally larger than the number of edges in ground-truth, i.e., even an empty graph would have a smaller SHD. This is because the graph discovered from data is usually much denser than the groundtruths, i.e., there may be additional gene interactions that are yet not known or not recorded in the TRRUST database~\citep{han2018trrust}. To this end, we also examine the number of true positive edges (i.e., the detected true interactions) returned by each method, where \texttt{Testwise deletion} is slightly better than \texttt{Full samples}, and significantly better than \texttt{MAGIC}.

\end{document}